\newtheorem{thm}{Theorem}[section]
\newtheorem{lemma}[thm]{Lemma}
\newtheorem{pro}[thm]{Proposition}
\newtheorem{corollary}[thm]{Corollary}
\newtheorem{rem}[thm]{Remark}
\newcommand{\dd}{\mathrm{d}}
\numberwithin{equation}{section}
\def\arg {\mathop{\rm arg}\nolimits}
\def\Re {\mathop{\rm Re}\nolimits}
\def\sup {\mathop{\rm sup}\nolimits}
\def\Res {\mathop{\rm Res}\nolimits}
\newcommand{\msf}{\mathsf}
\begin{document}

\title{Asymptotics of the deformed higher order Airy-kernel determinants and applications}


\author[1,2]{Jun Xia}
\author[3]{Yi-Fan Hao}
\author[4]{Shuai-Xia Xu}
\author[3,5]{Lun Zhang}
\author[1]{Yu-Qiu Zhao}
\affil[1]{Department of Mathematics, Sun Yat-sen University, Guangzhou 510275, China.}
\affil[2]{Department of Mathematics, Jiaying University, Meizhou 514015, China.}
\affil[3]{ School of Mathematical Sciences, Fudan University, Shanghai 200433, China.}
\affil[4]{Institut Franco-Chinois de l'Energie Nucl\'{e}aire, Sun Yat-sen University,
Guangzhou 510275, China.}
\affil[5] {Shanghai Key Laboratory for Contemporary Applied Mathematics, Fudan University, Shanghai 200433, China.}

\renewcommand{\thefootnote}{\fnsymbol{footnote}}
\footnotetext{Email addresses: junxiacqu@foxmail.com (J. Xia); yifanhao98@gmail.com (Y.-F. Hao);  xushx3@mail.sysu.edu.cn (S.-X. Xu);
lunzhang@fudan.edu.cn (L. Zhang); stszyq@mail.sysu.edu.cn (Y.-Q. Zhao)}

\date{}
\maketitle
\begin{abstract}
We study the one-parameter family of Fredholm determinants $\det(I-\rho^2\mathcal{K}_{n,x})$, $\rho\in\mathbb{R}$, where $\mathcal{K}_{n,x}$ stands for the integral operator acting on $L^2(x,+\infty)$ with the higher order Airy kernel. This family of determinants represents a new universal class of distributions which is a higher order analogue of the classical Tracy-Widom distribution. Each of the determinants admits an integral representation in terms of a special real solution to the $n$-th member of the Painlev\'{e} II hierarchy. Using the Riemann-Hilbert approach, we establish asymptotics of the determinants and the associated higher order Painlev\'{e} II transcendents as $x\to -\infty$ for $0<|\rho|<1$ and $|\rho|>1$, respectively. In the case of $0<|\rho|<1$, we are able to calculate the constant term in the asymptotic expansion of the determinants, while for $|\rho|>1$, the relevant asymptotics exhibit singular behaviors. Applications of our results are also discussed, which particularly include asymptotic statistical properties of the counting function for  the random point process defined by the higher order Airy kernel.


\vskip .3cm
 \noindent
\textbf{2010 mathematics subject classification:} 33E17; 34E05; 34M55; 41A60
\vskip .3cm
 \noindent
\textbf{Keywords and phrases:} Painlev\'{e} II hierarchy, higher order Airy point processes, Fredholm determinants, asymptotic expansions, large gap asymptotics, Riemann-Hilbert approach
\end{abstract}


\noindent
\section{Introduction}
In this paper, we are concerned with a family of integral kernels defined through the following double contour integral:
\begin{equation}\label{kernal}
K_n(x,y)=\frac{1}{(2\pi i)^2}\int_{\gamma_R}\int_{\gamma_L}
\frac{\mathrm{e}^{(-1)^{n+1}(P_{2n+1}(u)-P_{2n+1}(v))-xu+yv}}{u-v}
\mathrm{d}u\mathrm{d}v,~~x,y\in\mathbb{R},
\end{equation}
where
\begin{equation}\label{eq:Pn}
P_{2n+1}(z)=\frac{z^{2n+1}}{2n+1}+\sum_{k=1}^{n-1}
\frac{\tau_{k}}{2k+1}z^{2k+1},\quad n\in \mathbb{N}, \quad \tau_1,\ldots,\tau_{n-1}\in\mathbb{R},
\end{equation}
is an odd polynomial of degree $2n+1$, the contours $\gamma_R$ and $\gamma_L$ are symmetric with respect to the imaginary axis and are asymptotic to the straight lines with arguments $\pm \frac{n}{2n+1}\pi$ at infinity; see Figure \ref{gamma-RL} for an illustration. We call $K_{n}$ the higher order Airy kernels as it reduces to the prominent Airy kernel if $n=1$. Moreover, if $P_{2n+1}$ in \eqref{kernal} is a monomial $z^{2n+1}/(2n+1)$, $K_n$ is equal to the generalized Airy kernel \cite{DMS2018} defined by
\begin{equation}\label{def:gAirykernal}
K_{\mathrm{Ai}_{2n+1}}(x,y)=\int_{0}^{\infty}\mathrm{Ai}_{2n+1}(x+s)
\mathrm{Ai}_{2n+1}(y+s)\,\mathrm{d}s,
\end{equation}
where
\begin{equation}\label{Airy}
\mathrm{Ai}_{2n+1}(x)=\frac{1}{\pi}\int_{0}^{+\infty}
\cos\left(\frac{s^{2n+1}}{2n+1}+xs\right)\mathrm{d}s
\end{equation}
is the higher order Airy function and satisfies the ordinary differential equation
$$
\frac{\dd ^{2n}}{\dd x^{2n}}y(x)=(-1)^{n+1}xy(x).
$$

\begin{figure}[H]
  \centering
  \includegraphics[width=8cm,height=8cm]{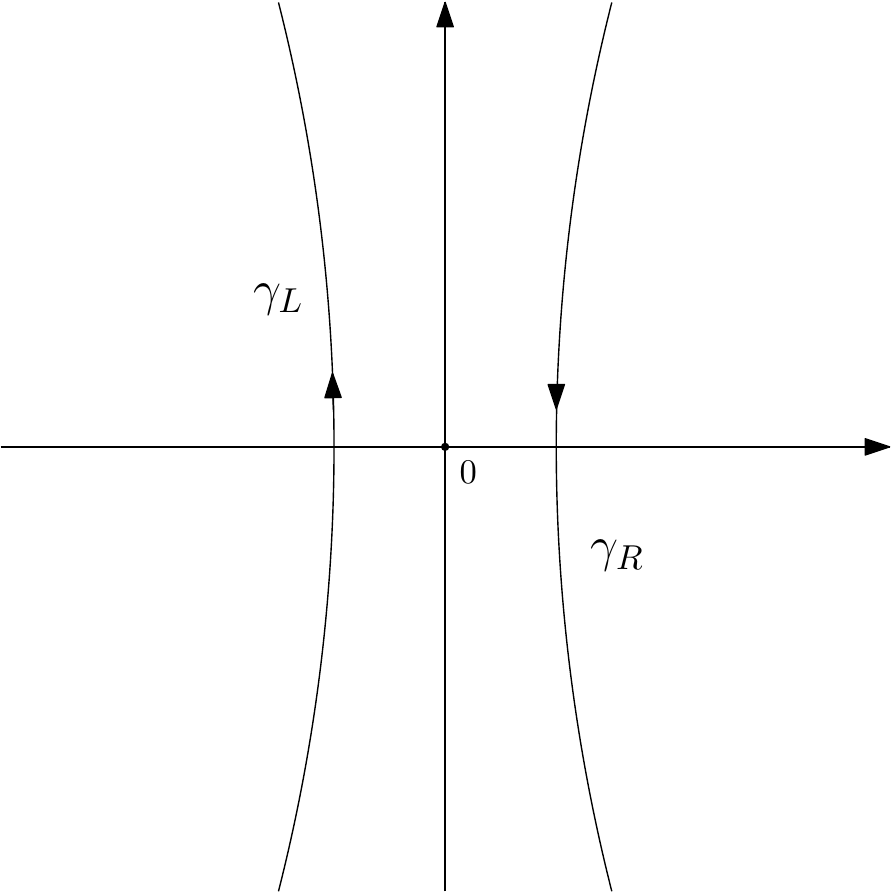}\\
  \caption{The contours $\gamma_R$ and $\gamma_L$ in the definition \eqref{kernal} of $K_n$. }\label{gamma-RL}
\end{figure}

Determinantal point processes associated with the kernels $K_n$ are believed to be universal objects in describing local statistics for a variety of interacting particle systems arising from random matrix theory and statistical physics. The Airy process characterizes the limiting distribution of the normalized largest eigenvalues for a large class of random matrices \cite{Mehta}, while the generalized Airy process corresponding to the kernel $K_{\mathrm{Ai}_{2n+1}}$  describes the multicritical edge statistics for the momenta of fermions trapped in a non-harmonic potential \cite{DMS2018}. We also refer to \cite{KZ1, KZ} for the connections between point processes of higher order Airy kernels and random partitions obeying the Schur measure.


Let $\mathcal{K}_{n,x}$ be the integral operator acting on $L^2(x,+\infty)$ with the higher order Airy kernel $K_n$. A central object of the present work is the Fredholm determinants
\begin{align}\label{Fred-Deter}
F_n(x;\rho):=\det(I-\rho^2 \mathcal{K}_{n,x}),\qquad \rho\in \mathbb{R}.
\end{align}
From the theory of determinantal point processes \cite{Joh06,Sos}, it follows that $F_n(x;\pm 1)$ is the probability distribution of the largest particle
in the process. If $|\rho|<1$, the deformed determinant $F_n(x;\rho)$ can be interpreted as the probability distribution of the largest particle in the associated thinned process. The thinned process is obtained from the original one by removing each particle independently with probability $1-\rho^2$, and is a classical operation in the studies of point processes; cf. \cite{BP06,Illbook}. It has been shown in \cite{CCG2021} that $F_n(x;\rho)$ indeed defines a distribution function for $|\rho|\leq 1$.

A celebrated result of Tracy and Widom established in \cite{TW94} shows that the Airy-kernel determinant $F_1(x; \pm 1)$ admits an integral representation in terms of the Hastings-McLeod solution \cite{HM80} of the Painlev\'{e} II equation
\begin{align}\label{eq:PII}
q''(x)=2q^3(x)+xq(x).
\end{align}
The same integral formula holds for the deformed Airy-kernel determinant $F_1(x; \rho)$, $|\rho|<1$, but involves the Ablowitz-Segur solution \cite{AS1976,AS81} of \eqref{eq:PII} therein; cf. \cite{BCP09,BB2018}. Analogous results have been obtained in \cite{CCG2021} for higher order Airy-kernel determinants $F_{n}$, where one instead encounters the Painlev\'{e} II hierarchy. The Painlev\'{e} II hierarchy is obtained from
the mKdV hierarchy via a self-similar reduction \cite{FN}; see also \cite{CJP,Kud02,Maz07}. It is defined by a sequence of ordinary differential equations and the $n$-th member is given by
\begin{equation}\label{PIIhierarchy}
\left(\frac{\mathrm{d}}{\mathrm{d} x}+2 q\right) \mathcal{L}_{n}\left[q'-q^{2}\right]+\sum_{i=1}^{n-1} \tau_{i}\left(\frac{\mathrm{d}}{\mathrm{d} x}+2 q\right) \mathcal{L}_{i}\left[q'-q^{2}\right]=x q, \qquad n\in\mathbb{N},
\end{equation}
where the prime denotes the derivative with respect to $x$, the parameters $\tau_1,\ldots,\tau_{n-1}$ are real, and $\mathcal{L}_k$, $k\in\mathbb{N}$, stands for the Lenard operator defined via the recursive relation
\begin{equation}\label{recursiverelation}
\frac{\mathrm{d}}{\mathrm{d} x} \mathcal{L}_{k+1} h=\left(\frac{\mathrm{d}^{3}}{\mathrm{~d} x^{3}}+4 h \frac{\mathrm{d}}{\mathrm{d} x}+2 h'\right) \mathcal{L}_{k} h,\quad \mathcal{L}_{0} h=\frac{1}{2}, \quad \mathcal{L}_{k} 0=0.
\end{equation}
The Painlev\'{e} II hierarchy \eqref{PIIhierarchy} includes the Painlev\'{e} II equation \eqref{eq:PII} as the first member, and the second member is
\begin{equation}
q''''(x)=10q(x)(q'(x))^2+10q^2(x)q''(x)-6q^5(x)-\tau_1(q''(x)-2q^3(x))+xq(x). \label{PII2}
\end{equation}
By \cite[Theorem 1.1]{CCG2021}, it comes out that the determinant $F_n$ is directly related to the Painlev\'{e} II hierarchy via the relation
\begin{equation}\label{dF1-expression}
\frac{\mathrm{d}^2}{\mathrm{d}x^2}\ln F_n(x;\rho)=-q_{n}^2\left((-1)^{n+1}x;\rho\right),
\end{equation}
where $q_{n}\left(x;\rho\right)=q_n(x;\rho;\tau_1,\ldots,\tau_{n-1})$ is a
real solution for the $n$-th member of \eqref{PIIhierarchy} 
with asymptotic behavior
\begin{equation}\label{qAsymp+}
q_{n}\left((-1)^{n+1}x;\rho\right)= O\left(
\exp\left(-C x^{\frac{2n+1}{2n}}\right) \right) \quad \mathrm{as}~~x\to+\infty,
\end{equation}
for some $C>0$.
Here the error term is uniform for $\tau_1,\dots,\tau_{n-1}$ in any compact subset of the real axis and $q_n(x;\rho)$ is  pole-free on real axis for $|\rho|\leq1$.
If $\tau_1=\cdots=\tau_{n-1}=0$, a refinement of \eqref{qAsymp+} is
\begin{equation}\label{qAymp+reduced}
q_{n}\left((-1)^{n+1}x;\rho\right)=\rho\mathrm{Ai}_{2n+1}(x)(1+o(1))\quad \mathrm{as}~~x\to+\infty,
\end{equation}
where $\mathrm{Ai}_{2n+1}$ is defined in \eqref{Airy}; see \cite[Equation (1.13)]{CCG2021}. It is worth noting that although the formulas \eqref{dF1-expression}--\eqref{qAymp+reduced} are derived in \cite{CCG2021} for $|\rho|\leq 1$, the results can be generalized to all real $\rho$ without any difficulty. If $|\rho|\leq 1$, $q_n(x;\rho)$ is a natural generalization of the Hastings-McLeod solution (for $|\rho|=1$) or the
Ablowitz-Segur solution (for $|\rho|<1$) to the Painlev\'{e} II equation \eqref{eq:PII}, which is pole free on the real axis. By integrating
\eqref{dF1-expression} twice, one further obtains the following
Tracy-Widom type formula
\begin{equation}\label{F1-expression}
F_n(x;\rho)=\exp\left\{-\int_x^{+\infty}(s-x)
q_{n}^2\left((-1)^{n+1}s;\rho\right)\mathrm{d}s\right\}.
\end{equation}
This result has been extended to several disjoint intervals in \cite{CT2021}, which is related to a vector-valued Painlev\'e II hierarchy. We also refer to \cite{BCT} for the studies of higher order  Airy process at finite temperature, where  the law is governed by a Painlev\'{e} II integro-differential hierarchy.

The main aim of this paper is to derive asymptotic behaviors of $q_{n}\left((-1)^{n+1}x;\rho\right)$ and $F_n(x;\rho)$ as $x\to -\infty$ for $|\rho| \neq 1$. If $|\rho|=1$, the relevant results have already been established in \cite{CCG2021}. More precisely, by \cite[Theorem 1.6]{CCG2021}, it follows that, as $x\to -\infty$,
\begin{equation*}\label{q1}
q_{n}\left((-1)^{n+1}x;1\right)=\sum_{i=0}^{2n}\theta_i|x|^{\frac{1-2i}{2n}}+\frac{\msf d_n}{2\theta_0}|x|^{-2-\frac{1}{2n}}
+O\left(|x|^{-2-\frac{1}{n}}\right),
\end{equation*}
and
\begin{equation*}\label{F-asymptotic-infty3}
F_n(x;\pm 1)=\msf C_n|x|^{\msf d_n}
\exp\left(-\sum_{j=0,
j\neq n+1}^{2n}\frac{n^2}{(n+1-j)(2n+1-j)}\sum_{i=0}^j\theta_i\theta_{j-i}|x|^{\frac{2n-j+1}{n}}
\right)\left(1+o(1)\right),
\end{equation*}
where
$$
\msf d_n=\left\{
      \begin{array}{ll}
        -\frac18, & \hbox{$n=1$,}\vspace{1.5mm}
 \\

        -\frac12, & \hbox{otherwise,}
      \end{array}
    \right.
$$
$\theta_0=\begin{pmatrix} 2n \\ n \end{pmatrix}^{-\frac{1}{n}}$, $\theta_i=\theta_i(\tau_1,\ldots,\tau_{n-1},n)$, $i\in \mathbb{N}$, are constants depending explicitly on $n$ and the parameters  $\tau_1,\ldots,\tau_{n-1}$ (see \cite[Equation (1.16)]{CCG2021}), and $\msf C_n>0$ is an undetermined constant. As we will see in what follows, the asymptotics of $q_{n}\left((-1)^{n+1}x;\rho\right)$ and $F_n(x;\rho)$ exhibit significantly different behaviors if $|\rho|\neq 1$.

\section{Main results}
To state our results, we set
\begin{equation}\label{ell}
Q(z)=(-1)^nP_{2n+1}'(iz)=z^{2n}+\sum_{j=1}^{n-1}(-1)^{n+j} \tau_j z^{2j}, \qquad n\in\mathbb{N},
\end{equation}
with $P_{2n+1}$ given in \eqref{eq:Pn}, and define
\begin{equation}\label{ak-expression}
a_0=1,\qquad a_{k}:=a_{k}(\tau_1,\ldots,\tau_{n-1};n)=\frac{1}{2k-1} \Res \limits_{z=\infty}Q^{\frac{2k-1}{2n}}(z), \quad k\in \mathbb{N},
\end{equation}
where the branch for $Q^{\frac{2k-1}{2n}}$ is taken such that $Q^{\frac{2k-1}{2n}}(z)\sim z^{2k-1} $ as $z\to\infty$. Note that if $\tau_1=\cdots=\tau_{n-1}=0$, we have $a_k=0$ for $k\in \mathbb{N}$. By \eqref{Fred-Deter} and \eqref{eq:PsiSymm} below, it is readily seen that
\begin{equation}\label{symmetry}
F_n(x;\rho)=F_{n}(x;-\rho),\qquad q_{n}(x;-\rho)=-q_{n}(x;\rho).
\end{equation}
This, together with the facts that $F_n(x;0)=1$ and $q_{n}(x;0)=0$ (see \eqref{dF1-expression}), implies that it suffices to focus on the case that $\rho>0$. We now state our results for $0<\rho<1$ and $\rho >1$, respectively.

\subsection{Asymptotics of $q_n((-1)^{n+1}x;\rho)$ and $F_n(x;\rho)$ for $0< \rho<1$}
As aforementioned, if $0<\rho<1$, $q_n(x;\rho)$ generalizes the Ablowitz-Segur solution of the Painlev\'{e} II equation \eqref{eq:PII}. Our first result gives large negative $x$ asymptotics of  $q_{n}\left((-1)^{n+1}x;\rho\right)$, which particularly establishes the connection formulas for this family of special solutions.

\begin{thm}\label{q-asymptotic}
Let $0<\rho<1$ and $q_{n}(x;\rho)$ be defined through \eqref{dF1-expression}, which is the solution of the $n$-th member of the Painlev\'e II hierarchy \eqref{PIIhierarchy}
subject to the boundary condition \eqref{qAsymp+}.
We have, as $x\to-\infty$,
\begin{multline}\label{qAymp-1}
q_{n}\left((-1)^{n+1}x;\rho\right)=\frac{\sqrt{\beta}}{\sqrt{n}\,|x|^{\frac{2 n-1}{4 n}}}
\cos\Bigg(\sum_{k=0}^{n}\frac{2na_k}{1+2(n-k)}|x|^{\frac{1+2(n-k)}{2n}}
-\frac{2n+1}{4n}\beta\ln|x|+\phi\Bigg)
\\
+O\left(|x|^{-\frac{2n+1}{4n}}\right),
\end{multline}
where $a_k$, $k=0,\ldots,n$, are defined in \eqref{ak-expression}, $\beta$ and $\phi$ are related to $\rho$ via the connection formulas
\begin{equation}\label{beta}
\left\{\begin{aligned}
&\beta=-\frac{1}{\pi}\ln\left(1-\rho^2\right ),\\
&\phi=-\frac{\beta}{2}\ln(8n)+\arg\Gamma\left(\frac{\beta}{2} i\right)+ \frac{\pi}{4},
\end{aligned}\right.
\end{equation}
with $\Gamma$ being the Gamma function.
\end{thm}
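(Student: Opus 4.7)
The plan is to apply the Deift--Zhou nonlinear steepest descent method to the Riemann--Hilbert problem (RHP) for the wave function $\Psi(z;x)$ that characterizes the solution $q_n(x;\rho)$ of the $n$-th Painlev\'{e}~II hierarchy. This RHP is posed on $2(2n+1)$ rays emanating from the origin; the Stokes multipliers are triangular and the parameter $\rho$ appears in one off-diagonal entry, while $q_n(x;\rho)$ is read off from the $z^{-1}$ coefficient of $\Psi$ at infinity, whose controlling exponent is of the form $(-1)^{n+1}P_{2n+1}(z)-xz$.

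First I would rescale $z=|x|^{1/(2n)}\zeta$, which turns the leading part of the exponent into $|x|^{(2n+1)/(2n)}$ times a fixed polynomial with simple saddles at $\zeta=\pm i$. I would then construct a $g$-function from the antiderivative of $Q(\zeta)^{1/(2n)}$, with $Q$ as in \eqref{ell}; the coefficients $a_k$ of \eqref{ak-expression} are precisely the Laurent coefficients of this antiderivative at infinity, so that after the normalizing transformation $\Psi\to T$ the exponent on the real axis picks up exactly $\sum_{k=0}^{n}\frac{2n\,a_k}{2(n-k)+1}|x|^{(2(n-k)+1)/(2n)}$, matching the phase in \eqref{qAymp-1}. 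Opening lenses on the arc joining $\pm i$ renders the jumps exponentially small away from the two critical points, after which the global parametrix is solved explicitly by an algebraic function.

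The oscillatory factor of order $|x|^{-(2n-1)/(4n)}$ comes entirely from the local parametrices at $\zeta=\pm i$, each built from parabolic cylinder functions $D_\nu$ with $\nu=i\beta/2$. The identification $\beta=-\frac{1}{\pi}\ln(1-\rho^2)$ emerges from the ratio of the triangular Stokes data meeting at the critical point, exactly as in the classical Ablowitz-Segur analysis of Painlev\'{e}~II. The matching of local and global parametrices generates the logarithmic phase $-\frac{2n+1}{4n}\beta\ln|x|$ via $|x|^{(2n+1)i\beta/(4n)}$-type factors, and the constant shift $\phi=-\frac{\beta}{2}\ln(8n)+\arg\Gamma(i\beta/2)+\frac{\pi}{4}$; the $\arg\Gamma$ term is the signature of the parabolic cylinder parametrix, and the $8n$ inside the logarithm reflects the local rescaling required to bring the phase into the standard quadratic form at $\pm i$. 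A standard small-norm analysis of the residual RHP yields the claimed $O(|x|^{-(2n+1)/(4n)})$ remainder, and reading off the $1/z$ expansion gives \eqref{qAymp-1}.

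The main obstacle I expect is the bookkeeping of the subleading contributions generated by the lower-order coefficients $\tau_1,\ldots,\tau_{n-1}$ of $P_{2n+1}$. These feed into every intermediate phase of order $|x|^{(2(n-k)+1)/(2n)}$ for $k=1,\ldots,n-1$, and one must verify that they assemble exactly into the Laurent residues defining $a_k$ in \eqref{ak-expression}. For $n=1$ this reduces to the classical Ablowitz-Segur connection formulas, but for $n\geq 2$ the algebraic identification and the uniformity of the error in the parameters $\tau_j$ on compact subsets of $\mathbb{R}$ require careful tracking; once this is pinned down, the remainder of the argument is routine.
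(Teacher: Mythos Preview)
Your strategy---Deift--Zhou steepest descent on the RHP for $\Psi$, rescaling by $|x|^{1/(2n)}$, contour deformation through two saddle points, a diagonal global parametrix, parabolic cylinder local parametrices with parameter $\nu=-\tfrac{1}{2\pi i}\ln(1-\rho^2)$, and a small-norm remainder from which $q_n$ is read off via the $z^{-1}$ coefficient---is exactly the paper's approach. A few details of your setup are off and would need adjustment before the argument runs. First, for this Ablowitz--Segur--type family the RHP for $\Psi$ carries only two nontrivial jump contours $\Gamma_\pm$ (all other Stokes multipliers vanish), not $2(2n+1)$ rays. Second, in the paper's normalization $\theta(z)=\tfrac{(-1)^n}{2}P_{2n+1}(2iz)+ixz$ the two relevant saddle points of the rescaled phase are \emph{real}, located near $\pm\tfrac12$, and the lens is opened over the real segment $[z_-,z_+]$; your ``saddles at $\pm i$'' and ``arc joining $\pm i$'' correspond to a different normalization that you have not written down, and you should check it actually carries the steepest-descent contours. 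Third, no separate $g$-function is constructed from an ``antiderivative of $Q^{1/(2n)}$'': the rescaled phase $g(z)=|x|^{-(2n+1)/(2n)}\theta(|x|^{1/(2n)}z;x)$ itself plays that role, and the coefficients $a_k$ enter through the expansion of $g(z_+)|x|^{(2n+1)/(2n)}$, obtained from the identity $\tfrac{d}{dx}\big(g(z_+)|x|^{(2n+1)/(2n)}\big)=i|x|^{1/(2n)}z_+$ together with the series $z_+=\tfrac12+\tfrac12\sum_{k\ge1}a_k|x|^{-k/n}$ coming from inverting $Q(2z|x|^{1/(2n)})=|x|$. With these corrections your outline matches the paper's proof.
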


\begin{rem}
If $n=1$, the above theorem is due to \cite{AS1976,AS81}; see also \cite{CM-1988,DZ1995,HM80} for the studies using different methods. For general $n$ with $\tau_1=\cdots=\tau_{n-1}=0$, the results can be found in
\cite[Theorem 1.1]{HZ2021}.
\end{rem}

We next show the large deformations of the higher order Airy-kernel determinants, up to and including the constant term.
%
\begin{thm}\label{thm:largegap}
Let  $0\leq \rho<1$ and  $F_n(x;\rho)$ be the Fredholm determinant defined in \eqref{Fred-Deter}. As $x\to-\infty$,  we have
\begin{equation}\label{F-asymptotic-infty2}
F_n(x;\rho)=C_0(8n)^{\frac{\beta^2}{4}}|x|^{\frac{(2n+1)\beta^2}{8n}}\exp\left(
-\beta\sum_{k=0}^{n}\frac{2na_k}{1+2(n-k)}|x|^{\frac{1+2(n-k)}{2n}}\right)
\left[1+O\left(|x|^{-\frac{1}{2n}}\right)\right],
\end{equation}
uniformly for $\rho$ in any compact subset of $[0,1)$, where $\beta$ is given in \eqref{beta}, $a_k$, $k=0,\ldots,n$, are defined in \eqref{ak-expression}, and
\begin{equation}\label{C0}
 C_0=G\left(1+\frac{i\beta}{2}\right)G\left(1-\frac{i\beta}{2}\right),
\end{equation}
with $G$  being the Barnes  $G$-function.
\end{thm}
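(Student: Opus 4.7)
The plan is to combine the Tracy-Widom-type integral representation \eqref{F1-expression} with the connection formula \eqref{qAymp-1} from Theorem \ref{q-asymptotic}, and to close the argument by a Riemann-Hilbert-based differential identity in $\rho$ that pins down the constant.

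First I would substitute the asymptotics of $q_n$ from Theorem \ref{q-asymptotic} into \eqref{F1-expression} and extract the dominant contribution to $-\ln F_n(x;\rho)=\int_x^{+\infty}(s-x)q_n^2\,ds$ term by term as $x\to-\infty$. After squaring via $\cos^2=\frac{1}{2}(1+\cos 2\theta)$, the non-oscillatory part of $q_n^2$ equals $\frac{\beta}{2n|s|^{(2n-1)/(2n)}}$ to leading order; integrating this against $(s-x)$ reproduces exactly the polynomial exponent $-\beta\sum_{k=0}^{n}\frac{2na_k}{1+2(n-k)}|x|^{(1+2(n-k))/(2n)}$. The rapidly oscillating part $\cos\bigl(2\Phi(|s|)-\frac{2n+1}{2n}\beta\ln|s|+2\phi\bigr)$ is treated by successive integrations by parts against the fast phase, whose derivative behaves like $|s|^{1/(2n)}$. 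Tracking the surface terms produced when the slowly varying logarithmic correction $-\frac{2n+1}{2n}\beta\ln|s|$ of the phase meets the amplitude $|s|^{-(2n-1)/(2n)}$ yields, at the appropriate order, the prefactor $|x|^{(2n+1)\beta^2/(8n)}$ and the numerical constant $(8n)^{\beta^2/4}$.

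What remains is an $x$-independent, $\rho$-dependent constant $C_0(\rho)$. To identify it, I would employ a differential identity in $\rho$ obtained from the Riemann-Hilbert problem for the matrix function $\Psi_n(z;x,\rho)$ that underlies Theorem \ref{q-asymptotic} and the operator $\mathcal{K}_{n,x}$. Differentiating $\ln F_n$ with respect to $\rho$ yields a trace identity of the form $\partial_\rho\ln F_n(x;\rho)=\frac{-2\rho}{1-\rho^2}\,\mathcal{I}(x;\rho)$, where $\mathcal{I}(x;\rho)$ admits an asymptotic expansion as $x\to-\infty$ that can be read off from the local parametrices built in the steepest-descent proof of Theorem \ref{q-asymptotic}. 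Integrating from $\rho=0$, where $F_n(x;0)=1$, to the target $\rho$, and using $d\beta/d\rho=2\rho/[\pi(1-\rho^2)]$, reduces the problem to an integral of the type $\int_{0}^{\beta}t\,\mathrm{Im}\,\psi(1+it/2)\,dt$, which evaluates via the classical Malmsten--Barnes identity to $\ln[G(1+i\beta/2)G(1-i\beta/2)]$, giving the value of $C_0$ in \eqref{C0}.

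The main obstacle is the bookkeeping in the second step: the $|x|^{(2n+1)\beta^2/(8n)}$ prefactor and the $(8n)^{\beta^2/4}$ constant live at the same asymptotic order as numerous oscillatory boundary terms, and they only emerge cleanly after carrying two successive integrations by parts and expanding the phase derivative $2\Phi'(|s|)-\frac{(2n+1)\beta}{2n|s|}$ to its first subleading order. A secondary difficulty is matching the constant of $\rho$-integration against the asymptotics of the Barnes $G$-function so that the symmetric combination $G(1+i\beta/2)G(1-i\beta/2)$ emerges rather than either factor individually; uniformity for $\rho$ in compact subsets of $[0,1)$ is automatic because the error estimates in Theorem \ref{q-asymptotic} and the underlying RHP analysis are already uniform there.
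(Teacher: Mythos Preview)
Your first step has a genuine gap. Theorem~\ref{q-asymptotic} gives $q_n$ only to leading order with error $O(|x|^{-(2n+1)/(4n)})$, so $q_n^2$ carries an error $O(|x|^{-1})$. After inserting this into \eqref{F1-expression} and integrating against $(s-x)$ over $[x,-M]$, that error balloons to $O(|x|\ln|x|)$, which swamps every subleading feature you want to extract. Concretely: the non-oscillatory amplitude $\frac{\beta}{2n}|s|^{-(2n-1)/(2n)}$ integrated against $(s-x)$ produces only the $k=0$ term $\frac{2n\beta}{2n+1}|x|^{(2n+1)/(2n)}$; the remaining terms $a_k|x|^{(1+2(n-k))/(2n)}$ for $k\geq 1$ live in subleading corrections to the amplitude of $q_n$ that are \emph{not} present in \eqref{qAymp-1}. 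Likewise, the logarithmic prefactor $|x|^{(2n+1)\beta^2/(8n)}$ corresponds to a contribution of size $|x|^{-2}$ in $q_n^2$, far below the $O(|x|^{-1})$ error, so no amount of integration by parts on the oscillatory piece will recover it from Theorem~\ref{q-asymptotic} alone.

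The paper avoids this by not using \eqref{F1-expression} at all for the main expansion. Instead it exploits the direct identity $\partial_x\ln F_n(x;\rho)=2i(\Psi_1(x))_{11}$ and pushes the small-norm RH expansion one order further (to $R^{(2)}$), obtaining $\partial_x\ln F_n$ with error $O(|x|^{-1-1/(2n)})$; a single integration then yields \eqref{F-asymptotic-infty2} up to the constant. Your second step, determining $C_0$ via a $\rho$-differential identity from the RH problem, is in spirit the same as the paper's, but the paper sidesteps the explicit digamma/Barnes integral you propose: it shows the resulting constant is independent of $n$ and of $\tau_1,\dots,\tau_{n-1}$, then reads it off from the already-known $n=1$ result of Bothner--Buckingham and Bothner--Its--Prokhorov.
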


\begin{rem}
If $n=1$ in \eqref{F-asymptotic-infty2}, we have the large gap asymptotics of the deformed Tracy-Widom distribution
\begin{equation*}
\ln{F}_1(x;\rho)=-\frac{2\beta}{3}|x|^{\frac{3}{2}}+\frac{\beta^2}{4}\ln(8|x|)+
\ln\left[G\left(1+\frac{i\beta}{2}\right)
G\left(1-\frac{i\beta}{2}\right)\right]+O\left(|x|^{-\frac{1}{2}}\right),~~\mathrm{as}~~x\to-\infty,
\end{equation*}
which was first conjectured in
\cite[Conjecture 3]{BCI2016} and lately rigorously proved in \cite{BB2018,BIP2019}.
In the first two non-trivial cases $n=2$ and $n=3$, the formula \eqref{F-asymptotic-infty2} reads
\begin{align*}
\ln{F}_2(x;\rho)&=-\frac{4\beta}{5}|x|^{\frac{5}{4}}-\frac{\beta\tau_1}{3}|x|^{\frac{3}{4}}
-\frac{3\beta\tau_1^2}{8}|x|^{\frac{1}{4}}+\frac{5\beta^2}{16}\ln|x|\nonumber\\
&~~~+\beta^2\ln{2}+\ln\left[G\left(1+\frac{i\beta}{2}\right)
G\left(1-\frac{i\beta}{2}\right)\right]+O\left(|x|^{-\frac{1}{4}}\right),
\end{align*}
and
\begin{align*}
\ln{F}_3(x;\rho)&=-\frac{6\beta}{7}|x|^{\frac{7}{6}}-\frac{\beta\tau_2}{5}|x|^{\frac{5}{6}}
-\frac{\beta}{3}\left(\frac{\tau_2^2}{4}-\tau_1\right)|x|^{\frac{3}{6}}
-\frac{\beta\tau_2}{6}\left(\frac{7\tau_2^2}{36}-\tau_1\right)|x|^{\frac{1}{6}}
+\frac{7\beta^2}{24}\ln|x|\nonumber\\
&~~~+\frac{\beta^2}{4}\ln(24)+\ln\left[G\left(1+\frac{i\beta}{2}\right)
G\left(1-\frac{i\beta}{2}\right)\right]+O\left(|x|^{-\frac{1}{6}}\right).
\end{align*}
As $n$ increases, the formula becomes longer but still can be completely determined through \eqref{ak-expression} and \eqref{F-asymptotic-infty2}. It is worthy to note that the nontrivial constant term $C_0$ in \eqref{C0} is independent of $n$. We also refer to \cite{BDIK2015,BIP2019,Charlier21,CC2021,DXZ22a,DXZ22b,DZ22} and references therein for recent works on large deformations of other distributions arising from random matrix theory and beyond.
\end{rem}

Finally, we give two applications of Theorem \ref{thm:largegap}. In view of the integral representation of $F_n$ given in  \eqref{F1-expression}, we can reformulate \eqref{F-asymptotic-infty2} in terms of the higher order Painlev\'e transcendent, which leads to the following total integral of $q_{n}$.
\begin{corollary}
Under the assumptions of Theorem \ref{q-asymptotic}, we have
\begin{align}\label{total-integral}
&\lim_{x\to-\infty}\left(-\int_x^{+\infty}(s-x)
q_{n}^2\left((-1)^{n+1}s;\rho\right)\mathrm{d}s
+\beta\sum_{k=0}^{n}\frac{2na_k(-x)^{\frac{1+2(n-k)}{2n}}}{1+2(n-k)}
-\frac{(2n+1)\beta^2}{8n}\ln(-x)\right)\nonumber\\
&=\frac{\beta^2}{4}\ln(8n)
+\ln\left[G\left(1+\frac{i\beta}{2}\right)G\left(1-\frac{i\beta}{2}\right)\right].
\end{align}
\end{corollary}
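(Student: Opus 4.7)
The plan is to read the corollary simply as a restatement of Theorem~\ref{thm:largegap} after invoking the Tracy--Widom type integral representation \eqref{F1-expression}. The proof should therefore consist of taking logarithms of the already-established large-$|x|$ asymptotics for $F_n(x;\rho)$ and rearranging terms, so there is essentially no new analytic difficulty; the only thing to verify is that each ingredient lines up correctly with the expression appearing inside the limit.

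First I would use \eqref{F1-expression} to identify
\[
\ln F_n(x;\rho)=-\int_x^{+\infty}(s-x)\,q_n^2\!\left((-1)^{n+1}s;\rho\right)\mathrm{d}s,
\]
so that the quantity inside the limit in \eqref{total-integral} equals
\[
\ln F_n(x;\rho)+\beta\sum_{k=0}^{n}\frac{2n a_k}{1+2(n-k)}(-x)^{\frac{1+2(n-k)}{2n}}-\frac{(2n+1)\beta^2}{8n}\ln(-x).
\]

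Next I would substitute the asymptotic expansion \eqref{F-asymptotic-infty2} from Theorem~\ref{thm:largegap} into this expression. The polynomial term $-\beta\sum_{k=0}^n\frac{2na_k}{1+2(n-k)}|x|^{\frac{1+2(n-k)}{2n}}$ and the logarithmic term $\frac{(2n+1)\beta^2}{8n}\ln|x|$ coming from $\ln F_n(x;\rho)$ are cancelled exactly by the two compensating terms we added, and the prefactor $(8n)^{\beta^2/4}$ together with $C_0$ becomes $\frac{\beta^2}{4}\ln(8n)+\ln C_0$. Using the explicit form \eqref{C0} of $C_0$ in terms of the Barnes $G$-function yields precisely the right-hand side of \eqref{total-integral}.

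Finally, passing to the limit $x\to-\infty$ requires only that the $O(|x|^{-1/(2n)})$ error in \eqref{F-asymptotic-infty2} vanishes (which it does, since $n\geq 1$). The main obstacle is thus nothing beyond Theorem~\ref{thm:largegap} itself; once that theorem is granted, the corollary is a two-line bookkeeping argument, and I would present it as such, making sure that $|x|$ and $(-x)$ in the various prefactors are identified (they agree for $x<0$) and that the remainder is uniform for $\rho$ in compact subsets of $[0,1)$, as stated in Theorem~\ref{thm:largegap}.
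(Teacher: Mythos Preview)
Your proposal is correct and matches the paper's own approach: the corollary is stated there as an immediate reformulation of Theorem~\ref{thm:largegap} via the integral representation~\eqref{F1-expression}, with no additional argument given. The only content is exactly the bookkeeping you describe---taking logarithms of~\eqref{F-asymptotic-infty2}, cancelling the divergent terms, and letting the $O(|x|^{-1/(2n)})$ error vanish.
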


\begin{rem}
The total integral \eqref{total-integral} generalizes the case $n=1$ for Ablowitz-Segur solution of the Painlev\'e II equation, which was first conjectured in
\cite[Conjecture 3]{BCI2016}, and rigorously proved in \cite[Corollary 1.7]{BB2018}.
\end{rem}

As the other application of Theorem \ref{thm:largegap}, we are able to establish asymptotic statistical properties and a rigidity result for the counting function of the higher order Airy point process characterized by
the kernel \eqref{kernal}. For this purpose, we denote by $\lambda_1>\lambda_2>\cdots$ the random points in the process and set $x_j=-\lambda_j$ to be the opposite points. Let $N(x)$ be the random variable that counts the number of opposite points $x_j$ falling  into the interval $(-\infty,x]$, we then have the following result.
\begin{corollary}\label{thm3}
As $x\to+\infty$, we have
\begin{align}\label{expection}
\mathbb{E}(N(x))&=\mu(x)+o(1),\\
\mathrm{Var}(N(x))&=\sigma^2(x)+\frac{\ln(8n)+1+\gamma_{_E}}{2\pi^2}
+o(1),\label{variation}
\end{align}
where $\gamma_{_E}$ is Euler's constant and
\begin{equation}\label{eq:sigma}
\mu(x)=\frac{1}{\pi}\sum_{k=0}^{n}\frac{2na_k}{1+2(n-k)}x^{\frac{1+2(n-k)}{2n}},\qquad \sigma^2(x)=\frac{2n+1}{4n\pi^2}\ln{x}.
\end{equation}
Furthermore, the normalized random variable $(N-\mu)/\sigma$ converges in distribution to the normal law $\mathcal{N}(0,1)$.

Let $-x_1>-x_2>\cdots$ be the random points in the process, then for any $\epsilon>0$, we have
\begin{equation}\label{eq:upperB}
\lim_{k_0\to\infty}\mathrm{P} \left(\sup\limits_{k\geq k_0}\frac{|\mu(x_k)-k|}{ \ln k} \leq  \frac{1}{\pi}+\epsilon\right)=1.\end{equation}
\end{corollary}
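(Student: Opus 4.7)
The plan is to extract all four assertions from Theorem \ref{thm:largegap} via the generating function identity for determinantal point processes. Since $N(x) = \#\{j : \lambda_j \ge -x\}$, the thinning interpretation of $\det(I - \rho^2 \mathcal{K}_{n,x})$ gives
\begin{equation*}
\mathbb{E}\bigl((1-\rho^2)^{N(x)}\bigr) = F_n(-x;\rho), \qquad 0 \le \rho < 1.
\end{equation*}
Setting $\rho^2 = 1-e^{-s}$, so that $\beta = -\pi^{-1}\ln(1-\rho^2) = s/\pi$, and substituting into \eqref{F-asymptotic-infty2} yields, uniformly for $s$ on compact subsets of $[0,\infty)$ as $x \to +\infty$,
\begin{equation*}
K(-s) := \ln \mathbb{E}(e^{-sN(x)}) = -s\mu(x) + \tfrac{1}{2}s^2\sigma^2(x) + \tfrac{s^2}{4\pi^2}\ln(8n) + \ln\bigl[G(1+\tfrac{is}{2\pi})G(1-\tfrac{is}{2\pi})\bigr] + O\bigl(x^{-1/(2n)}\bigr).
\end{equation*}

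First I would read off the cumulants $\kappa_r(x)$ of $N(x)$ by differentiating at $s=0$. The Taylor expansion $\ln G(1+z) + \ln G(1-z) = -(1+\gamma_{_E})z^2 + O(z^4)$, standard from the series for $\ln G$, contributes $(1+\gamma_{_E})/(2\pi^2)$ to $\partial_s^2 K(-s)|_{s=0}$, which combines with $\sigma^2(x)$ from the $\tfrac12 s^2\sigma^2(x)$ term and $\ln(8n)/(2\pi^2)$ from the $s^2\ln(8n)/(4\pi^2)$ term to give exactly \eqref{expection} and \eqref{variation}. All higher cumulants $\kappa_r(x)$ for $r \ge 3$ arise solely from the Barnes-$G$ factor, which is independent of $x$, and are therefore $O(1)$; since $\sigma(x)\to\infty$, the rescaled cumulants of $(N-\mu)/\sigma$ converge to those of $\mathcal{N}(0,1)$, and the method of cumulants furnishes the central limit theorem.

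For the rigidity statement \eqref{eq:upperB}, I would promote the cumulant expansion into a Chernoff-type concentration bound: applying the exponential Markov inequality to $e^{\pm s(N-\mu)}$ and optimizing over $s>0$ in a fixed compact interval yields
\begin{equation*}
\mathbb{P}\bigl(|N(x)-\mu(x)| > t\bigr) \le C\exp\Bigl(-\tfrac{t^2}{2\sigma^2(x)}(1+o(1))\Bigr),
\end{equation*}
uniformly for large $x$ and $t/\sigma^2(x)$ in a fixed compact set. Since $x_k$ is random, I pass to deterministic locations using the monotonicity of $\mu$ and of $y \mapsto N(y)$: setting $y_k^\pm := \mu^{-1}(k \pm c\ln k)$, the event $\{\mu(x_k)-k > c\ln k\}$ forces $x_k > y_k^+$, hence $N(y_k^+) \le k = \mu(y_k^+) - c\ln k$, and similarly $\{k-\mu(x_k) > c\ln k\}$ forces $N(y_k^-) \ge \mu(y_k^-) + c\ln k$. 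From $\mu(x) \sim \tfrac{2n}{\pi(2n+1)}x^{(2n+1)/(2n)}$ one obtains $\ln y_k^\pm \sim \tfrac{2n}{2n+1}\ln k$ and $\sigma^2(y_k^\pm) \sim \tfrac{\ln k}{2\pi^2}$, so the Chernoff bound gives $\mathbb{P}(|\mu(x_k)-k| > c\ln k) \le C\,k^{-c^2\pi^2(1+o(1))}$. For $c = 1/\pi + \epsilon$ the exponent exceeds $1$, so a union bound over $k \ge k_0$ together with summability in $k$ yields $\mathbb{P}(\sup_{k\ge k_0}|\mu(x_k)-k|/\ln k > 1/\pi + \epsilon) \to 0$ as $k_0 \to \infty$.

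The principal technical obstacle is that the Chernoff bound for the upper tail of $N(x)-\mu(x)$ involves $\mathbb{E}(e^{sN(x)})$ with $s>0$, corresponding to $\rho^2 = 1-e^s < 0$, which lies outside the parameter range of Theorem \ref{thm:largegap}. I would resolve this by analytic continuation: the Fredholm determinant $F_n(-x;\rho)$ is entire in $\rho^2$, and the Riemann--Hilbert analysis behind Theorem \ref{thm:largegap} in fact delivers \eqref{F-asymptotic-infty2} uniformly in an open complex neighborhood of any compact subset of $\rho\in(-1,1)$, so the expansion extends to a real interval of negative $\rho^2$ and legitimizes the Chernoff optimization. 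The same complex uniformity also justifies differentiating the expansion at $s=0$ term by term via Cauchy integrals to extract cumulants of all orders, which is the input required for the method of cumulants in the CLT.
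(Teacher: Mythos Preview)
Your derivation of the mean, variance, and CLT is essentially the paper's argument: both start from the generating-function identity $\mathbb{E}\bigl((1-\rho^2)^{N(x)}\bigr)=F_n(-x;\rho)$, feed in Theorem~\ref{thm:largegap}, and expand around the origin using the Taylor series of $\ln G(1+z)+\ln G(1-z)$. The paper reads off the first two moments by matching Taylor coefficients (justified via the differential identity \eqref{Fred-a-1}) and deduces the CLT from convergence of the moment generating function; your use of Cauchy integrals to differentiate and the method of cumulants is a minor variant.

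The genuine difference is in the rigidity statement \eqref{eq:upperB}. The paper does not argue this directly: it simply observes that the exponential-moment asymptotics \eqref{eq: Easy} place the process within the scope of the general rigidity theorem \cite[Theorem~1.2]{CC2021}, and cites that result. Your Chernoff-plus-union-bound argument, together with the monotonicity reduction from the random location $x_k$ to deterministic $y_k^\pm=\mu^{-1}(k\pm c\ln k)$, is precisely the mechanism behind that cited theorem, so you are in effect reproving the relevant special case. Your computation of the threshold $c=1/\pi$ via $\sigma^2(y_k^\pm)\sim \ln k/(2\pi^2)$ is correct. The payoff of your route is self-containment; the paper's route is shorter because the black box already exists. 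You also make explicit a point the paper leaves implicit: both the two-sided CLT (for $s>0$) and the upper-tail Chernoff bound require the asymptotics of $F_n(-x;\rho)$ for $\rho^2$ slightly negative, which is not literally covered by Theorem~\ref{thm:largegap} as stated but follows by analytic continuation of the Riemann--Hilbert analysis in a complex neighborhood of $\rho=0$.
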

\begin{proof}
For $\gamma\geq 0$, it follows from the definition of the random variable $N$ that
\begin{equation}
\mathbb{E}\left(\mathrm{e}^{-2\pi\gamma{N}(x)}\right)=\det\left(I-\left(1-\mathrm{e}^{-2\pi\gamma}\right)
\mathcal{K}_{n,x}\right)=F_n(-x;\rho),
\end{equation}
where $F_n(x;\rho)$ is defined by \eqref{Fred-Deter} with $\rho^2=1-\mathrm{e}^{-2\pi\gamma}$.
As a direct consequence of the  large gap asymptotics \eqref{F-asymptotic-infty2}, we have
\begin{equation}\label{eq: Easy}
\mathbb{E}(\mathrm{e}^{-2\pi\gamma{N}(x)})
=(8n)^{\gamma^2}G(1+i\gamma)G(1-i\gamma)\mathrm{e}^{-2\pi\gamma\mu(x)+2\pi^2\gamma^2\sigma^2(x)}
\left[1+o(1)\right],\qquad x\to+\infty,
\end{equation}
where $\mu$ and $\sigma$ are defined in \eqref{eq:sigma}. Since the Barnes  $G$-function has the approximation
\begin{equation}
G(1+z)=1+\frac{\ln(2\pi)-1}{2}z
+\left[\frac{(\ln(2\pi)-1)^2}{8}-\frac{\gamma_{_E}+1}{2}\right]z^2+O(z^3),
\qquad z\to 0;
\end{equation}
cf. \cite[Equation 5.17.4]{NIST}, the leading term in \eqref{eq: Easy} admits an expansion
\begin{align}\label{Fred-rep-asymp}
& (8n)^{\gamma^2}G(1+i\gamma)G(1-i\gamma)\mathrm{e}^{-2\pi\gamma\mu(x)+2\pi^2\gamma^2\sigma^2(x)}
\nonumber\\
&=1-2\pi\mu(x)\gamma+\left[\ln(8n)+1+\gamma_{_E}+2\pi^2(\mu^2(x)+\sigma^2(x))\right]\gamma^2
+O(\gamma^3),\qquad \gamma\to0.
\end{align}
By comparing the above formula with the expansion
\begin{equation}
\mathbb{E}(\mathrm{e}^{-2\pi\gamma{N}(x)})=1-2\pi\mathbb{E}(N(x))\gamma
+2\pi^2\mathbb{E}(N^2(x))\gamma^2
+O(\gamma^3),\qquad \gamma\to0,
\end{equation}
it is immediate to obtain \eqref{expection} and \eqref{variation} from the estimate \eqref{Fred-a-1} below.

Since  $\sigma(x)\to\infty$ as $x\to +\infty$, we observe from \eqref{eq: Easy} that
\begin{equation}\label{eq: CTL}
\mathbb{E}\left(\mathrm{e}^{s\frac{N(x)-\mu(x)}{\sigma(x)}}\right)\to \mathrm{e}^{\frac{s^2}{2}}, \quad x\to+\infty,
\end{equation}
for any $s\in\mathbb{R}$. This implies the central limit theorem for the random variable $(N-\mu)/\sigma$.

Finally, the upper bound for the maximum fluctuation of the counting function \eqref{eq:upperB} follows directly from asymptotics of the exponential moment \eqref{eq: Easy}  and a rigidity theorem given in \cite[Theorem 1.2]{CC2021}.

This completes the proof of Corollary \ref{thm3}.
\end{proof}

\begin{rem}
Corollary \ref{thm3} extends the relevant results established in \cite{CC2021,Sosa} for the Airy process.
\end{rem}

\subsection{Singular asymptotics of $q_{n}((-1)^{n+1}x;\rho)$ and $\frac{\mathrm{d}}{\mathrm{d}x}\ln F_n(x;\rho)$ for $\rho>1$}
If $\rho>1$, we have the following singular  asymptotics for  $q_{n}((-1)^{n+1}x;\rho)$ and $\frac{\mathrm{d}}{\mathrm{d}x}\ln F_n(x;\rho)$.
\begin{thm}\label{thm1}
Let $q_{n}(x;\rho)$ 
defined through \eqref{dF1-expression}
and $F_n(x;\rho)$ be the determinant defined in \eqref{Fred-Deter}. If $\rho>1$, we have, as $x\to-\infty$,
\begin{align}\label{qAymp-2}
q_{n}\left((-1)^{n+1}x;\rho\right)&=\frac{|x|^{\frac{1}{2n}}}
{\sin\left(\sum_{k=0}^{n+1}\frac{2na_k}{1+2(n-k)}|x|^{\frac{1+2(n-k)}{2n}}
-\frac{2n+1}{2n}\kappa\ln|x|+\varphi\right)}+O\left(|x|^{-\frac{1}{2n}}\right),
\end{align}
and
\begin{align}\label{F-asymptotic-infty}
\frac{\mathrm{d}}{\mathrm{d}x}\ln F_n(x;\rho)&=\Bigg[2\kappa
-\cot\left(\sum_{k=0}^{n+1}\frac{2na_k|x|^{\frac{1+2(n-k)}{2n}}}{1+2(n-k)}
-\frac{2n+1}{2n}\kappa\ln|x|+\varphi\right)
\Bigg]|x|^{\frac{1}{2n}}+O\left(|x|^{-\frac{1}{2n}}\right),
\end{align}
where $a_k$, $k=0,\ldots,n$, are defined in \eqref{ak-expression}, the parameters $\kappa$ and $\varphi$ are related to $\rho$ by
\begin{equation}\label{kappa}
\left\{\begin{aligned}
&\kappa=-\frac{1}{2\pi}\ln(\rho^2-1),\\
&\varphi=-\kappa\ln(8n)+\arg\Gamma\left(\frac{1}{2}+i\kappa\right)
+\frac{\pi}{2}.
\end{aligned}\right.
\end{equation}
The error terms in the asymptotic expansions are uniformly valid for $x$ bounded away from the singularities appearing on the right-hand sides of \eqref{qAymp-2} and \eqref{F-asymptotic-infty}.  \end{thm}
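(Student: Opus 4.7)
The plan is to perform a Deift--Zhou steepest-descent analysis of the Riemann--Hilbert (RH) problem characterising the Lax-pair wave function $\Psi(z;x,\rho)$ attached to the $n$-th member of the Painlev\'{e} II hierarchy \eqref{PIIhierarchy}. This RH problem has piecewise-constant jumps supported on $2n+2$ rays through the origin, with Stokes multipliers depending on $\rho$ only through the combination $1-\rho^2$; both $q_n((-1)^{n+1}x;\rho)$ and $\frac{\mathrm{d}}{\mathrm{d} x}\ln F_n(x;\rho)$ are read off from the subleading coefficients at infinity of $\Psi$. The whole scheme mirrors that which underlies Theorems~\ref{q-asymptotic} and \ref{thm:largegap}; the crucial novelty for $\rho>1$ is that $1-\rho^2<0$, so $\frac{1}{2\pi i}\ln(1-\rho^2)=\frac{1}{2}+i\kappa$ with $\kappa=-\frac{1}{2\pi}\ln(\rho^2-1)$, and the outer parametrix carries a half-integer shift in its exponent. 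This shift is the algebraic source of the singular oscillations appearing in \eqref{qAymp-2}--\eqref{F-asymptotic-infty}.

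First I would rescale $z=|x|^{1/(2n)}w$ and conjugate by a diagonal matrix with entries $|x|^{\pm 1/(4n)}$. The leading phase is governed by an antiderivative of $Q(z)^{1/2}$ whose large-$z$ expansion, dictated by \eqref{ak-expression}, generates exactly the polynomial
\begin{equation*}
\Theta(x)=\sum_{k=0}^{n+1}\frac{2na_k}{1+2(n-k)}|x|^{\frac{1+2(n-k)}{2n}}
\end{equation*}
appearing in \eqref{qAymp-2} and \eqref{F-asymptotic-infty}. A $g$-function transformation absorbing $\Theta$ into the exponential diagonal, followed by a lens opening along the $2n$ steepest-descent rays emanating from the degenerate saddle at $w=0$, brings all jumps away from $w=0$ into exponentially decaying form.

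Next I would construct the parametrices. The global parametrix reduces to a scalar power model of the type $P^{(\infty)}(w)=w^{(\frac{1}{2}+i\kappa)\sigma_3}$ with $\sigma_3=\mathrm{diag}(1,-1)$ and a suitable branch. The local parametrix at $w=0$ is the essential new ingredient for $\rho>1$: it is built from the confluent hypergeometric (equivalently, parabolic cylinder) function with parameter $i\kappa$, in the spirit of the model problems employed in \cite{BIP2019,Charlier21,DXZ22a,DXZ22b} for supercritical deformations of sine-, Airy- and Bessel-type kernels. Matching this local model against the outer parametrix on a small circle around $w=0$ produces a matching constant proportional to $(8n)^{-i\kappa}\,\Gamma(\tfrac{1}{2}+i\kappa)$, which is the source of the phase $\varphi$ defined in \eqref{kappa}. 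A standard small-norm argument for the residual matrix $R=S(P^{(\infty)})^{-1}$ then yields an error of order $|x|^{-1/(2n)}$.

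Finally, I would extract \eqref{qAymp-2} from the off-diagonal entries, and \eqref{F-asymptotic-infty} from the diagonal entries, of the subleading coefficient of $\Psi$ at infinity, invoking \eqref{dF1-expression} and the differential identity that expresses $\frac{\mathrm{d}}{\mathrm{d} x}\ln F_n$ through the Lax-pair wave function. The singular character of the formulas arises because the ratio of outer-parametrix entries at infinity is proportional to $\csc$ or $\cot$ of the total phase, which blows up on the discrete set of zeros of $F_n(x;\rho)$. The principal obstacle, and the reason for the caveat that $x$ must stay bounded away from these zeros, is that the small-norm bound for $R$ degenerates in shrinking neighbourhoods of those points: one must perform the analysis on the complement of such neighbourhoods and exploit the Lax-pair compatibility of $\Psi$ to propagate the uniform estimate. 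A subsidiary check is to verify term by term that the polynomial $\Theta(x)$ produced by the $g$-function coincides with the phase written in the theorem, which amounts to unwinding the residue formula \eqref{ak-expression}.
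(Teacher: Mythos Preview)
Your proposal has the right high-level idea---the half-integer shift in the exponent of the outer parametrix when $\rho>1$ is indeed what drives the singular oscillations---but there are two concrete gaps that would cause the argument, as written, to fail.

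First, the saddle structure you describe is not the one that actually arises. After the rescaling $z\mapsto |x|^{1/(2n)}z$, the phase $g(z)$ in \eqref{g-expression} has two \emph{simple} real saddle points $z_\pm\to\pm\tfrac12$ as $x\to-\infty$ (see \eqref{z+-expansion}), not a single degenerate saddle at the origin. The analysis is carried out exactly as in Section~\ref{RHanalysis1}: one opens lenses around the segment $[z_-,z_+]$ and builds local parametrices at $z_+$ and $z_-$ from the parabolic cylinder model $\Phi_{(\mathrm{PC})}$ with parameter $\nu=-\tfrac12+\nu_0$, $\nu_0=i\kappa$. No confluent hypergeometric model at $w=0$ is used.

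Second, and more seriously, your sentence ``a standard small-norm argument for the residual matrix $R=S(P^{(\infty)})^{-1}$ then yields an error of order $|x|^{-1/(2n)}$'' is precisely what does \emph{not} go through. With $\nu=-\tfrac12+\nu_0$, the $(2,1)$ entry of $P^{(+)}(z)\,P^{(\infty)}(z)^{-1}-I$ on $\partial\mathcal D(z_+)$ is of size $|t^{-1/2}\cdot t^{-\nu}|=|t^{-\nu_0}|=1$, so the matching condition \eqref{matching} fails outright. The paper's cure (Section~\ref{RHanalysis2}) is to replace the scalar outer parametrix by
\[
\widetilde P^{(\infty)}(z)=V(z)\left(\frac{z-z_-}{z-z_+}\right)^{(\nu_0-\frac12)\sigma_3},\qquad V(z)=I+\frac{A}{z-z_-}+\frac{B}{z-z_+},
\]
and simultaneously to insert an extra lower-triangular factor $\bigl(\begin{smallmatrix}1&0\\-(t^{1/2}\zeta)^{-1}&1\end{smallmatrix}\bigr)$ into the local parametrix (see \eqref{P+tilde}--\eqref{Etilde-expression}). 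The $x$-dependent matrices $A,B$ are then fixed by demanding that the modified analytic prefactor $\widetilde E$ be genuinely analytic at $z_+$; this yields the explicit formula \eqref{ABexpression} in terms of the unimodular quantity $c=c(x)$ defined in \eqref{c}. It is this rational dressing $V(z)$---not the diagonal scalar parametrix---that contributes the residue $\widetilde P_1^{(\infty)}$ in \eqref{eq:tildeP1infty} with off-diagonal entries $\pm\frac{2z_+c}{1-c^2}$ and diagonal entries involving $\frac{c^2}{1-c^2}$, and hence the $\csc$ and $\cot$ in \eqref{qAymp-2}--\eqref{F-asymptotic-infty}. The zeros of $1-c(x)^2$ (equation \eqref{poles-equation}) are the points where $A,B$ blow up, which is why the estimates must stay away from them. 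Without this modification step, there is no small-norm problem and no mechanism for the singular terms to appear.
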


\begin{rem}
In the case that $\tau_1=\cdots=\tau_{n-1}=0$, the error terms in \eqref{qAymp-2} and \eqref{F-asymptotic-infty} can be improved to be
\begin{align}\label{qAymp-2-r}
q_{n}\left((-1)^{n+1}x;\rho\right)&=\frac{|x|^{\frac{1}{2n}}}
{\sin\left(\frac{2n}{2n+1}|x|^{\frac{2n+1}{2n}}
-\frac{2n+1}{2n}\kappa\ln|x|+\varphi\right)}+O\left(|x|^{-1}\right),
\end{align}
and
\begin{align}\label{F-asymptotic-infty-r}
\frac{\mathrm{d}}{\mathrm{d}x}\ln F_n(x;\rho)&=\Bigg[2\kappa
-\cot\left(\frac{2n}{2n+1}|x|^{\frac{2n+1}{2n}}
-\frac{2n+1}{2n}\kappa\ln|x|+\varphi\right)
\Bigg]|x|^{\frac{1}{2n}}+O\left(|x|^{-1}\right).
\end{align}
If $n=1$, the results \eqref{qAymp-2-r} and \eqref{kappa} were first obtained in \cite{Kap} using the isomonodromy method; see also \cite{BI}.
\end{rem}

It is readily seen from the asymptotic formula \eqref{qAymp-2} that $q_{n}\left((-1)^{n+1}x;\rho\right)$ possesses a sequence of real simple poles clustering at negative infinity. As an application, we could approximate the locations of these poles as stated below.

\begin{corollary}\label{cor}
Under the assumptions of Theorem \ref{thm1} and
with
$\tau_1=\cdots=\tau_{n-1}=0$, the function  $q_{n}\left((-1)^{n+1}x;\rho\right)$ has infinitely many simple poles $\{p_m \}_{m\in \mathbb{N}}$ on the real axis, such that
\begin{align}\label{poles}
p_m=&-\left(\frac{2n+1}{2n}\pi\right)^{\frac{2n}{2n+1}}m^{\frac{2n}{2n+1}}
\bigg[1+\frac{2n\kappa}{(2n+1)\pi}\frac{\ln{m}}{m}
\nonumber\\
&\qquad\qquad+\frac{2n}{(2n+1)\pi}\left(\kappa\ln\frac{(2n+1)\pi}{2n}-\varphi\right)
\frac{1}{m}+O\left(\frac{\ln^2m}{m^2}\right)\bigg],\quad m\to\infty,
\end{align}
where $\kappa$ and $\varphi$ are given by \eqref{kappa}.
\end{corollary}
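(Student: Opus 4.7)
Under $\tau_1=\cdots=\tau_{n-1}=0$, Theorem~\ref{thm1} supplies the refined expansion \eqref{qAymp-2-r}, whose only oscillatory factor is $\sin\Theta(x)$ with
\[
\Theta(x):=\frac{2n}{2n+1}|x|^{\frac{2n+1}{2n}}-\frac{2n+1}{2n}\kappa\ln|x|+\varphi.
\]
The strategy has two parts: (i) upgrade \eqref{qAymp-2-r} to a rigorous statement that $q_n((-1)^{n+1}x;\rho)$ has, for each sufficiently large positive integer $m$, a unique simple real pole $p_m$ in a small neighborhood of the zero of $\sin\Theta$ where $\Theta=m\pi$; then (ii) invert $\Theta(p_m)=m\pi$ asymptotically in $m$ to obtain \eqref{poles}.

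For step (i), rewrite \eqref{qAymp-2-r} as
\[
\frac{|x|^{\frac{1}{2n}}}{q_n((-1)^{n+1}x;\rho)}=\sin\Theta(x)+O\bigl(|x|^{-1-\frac{1}{2n}}\bigr),
\]
which is uniformly valid while we stay bounded away from poles. Since $|\Theta'(x)|=|x|^{1/(2n)}(1+o(1))$, the sine factor attains $O(1)$ size on sub-intervals of length $c|x|^{-1/(2n)}$ centered at each zero of $\sin\Theta$, and these intervals are pairwise disjoint once $|x|$ is large. On each such interval the remainder is strictly dominated by the oscillation, so an intermediate-value argument produces exactly one zero of $|x|^{1/(2n)}/q_n$ per interval, and it is automatically simple because $(\sin\Theta)'=\Theta'\cos\Theta$ does not vanish there. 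This constructs the sequence $\{p_m\}$ together with the refined localization $\Theta(p_m)=m\pi+O(|p_m|^{-1-1/(2n)})=m\pi+O(m^{-1})$, whose error will not affect the expansion in (ii).

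For step (ii), substitute $Y:=|p_m|^{(2n+1)/(2n)}$, so that $\ln|p_m|=\frac{2n}{2n+1}\ln Y$, turning $\Theta(p_m)=m\pi$ into
\[
\frac{2n}{2n+1}Y=m\pi+\kappa\ln Y-\varphi.
\]
From the leading solution $Y\sim\tfrac{(2n+1)\pi}{2n}m$ one has $\ln Y=\ln m+\ln\tfrac{(2n+1)\pi}{2n}+O(\ln m/m)$; a further substitution yields
\[
Y=\frac{(2n+1)\pi}{2n}m+\frac{2n+1}{2n}\Bigl(\kappa\ln m+\kappa\ln\tfrac{(2n+1)\pi}{2n}-\varphi\Bigr)+O\Bigl(\tfrac{\ln m}{m}\Bigr).
\]
Finally, $|p_m|=Y^{2n/(2n+1)}$ together with $(1+u)^{2n/(2n+1)}=1+\tfrac{2n}{2n+1}u+O(u^2)$ and $p_m=-|p_m|$ reproduces \eqref{poles} exactly, the $O(\ln^2 m/m^2)$ remainder arising from the quadratic term in this binomial expansion.

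The principal obstacle is step~(i): the asymptotic \eqref{qAymp-2-r} is by construction invalid at the very poles one wishes to locate. The key technical point is that the refined remainder $O(|x|^{-1})$, available precisely because all $\tau_k$ vanish, is sufficiently smaller than both the leading amplitude $|x|^{1/(2n)}$ and the oscillation of $\sin\Theta$ on intervals of width $|x|^{-1/(2n)}$ for the intermediate-value (or real-Rouch\'{e}) upgrade to go through. The subsequent inversion in (ii) is routine provided the logarithmic and $1/m$ corrections are tracked carefully when taking the fractional power of $Y$.
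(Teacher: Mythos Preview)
Your proposal is correct and follows essentially the same route as the paper. The paper packages your intermediate-value argument as a citation to Hethcote's lemma (\cite{Hethcote}, stated as Lemma~\ref{lem:zeros}), obtaining $\Theta(p_m)-m\pi=O(1/m)$ exactly as you do, and then simply asserts that this yields \eqref{poles} without writing out the inversion; your step~(ii) makes that inversion explicit and is carried out correctly.
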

Following the proof of Corollary \ref{cor}, we also have
$$p_m= -\left(\frac{2n+1}{2n}\pi\right)^{\frac{2n}{2n+1}}m^{\frac{2n}{2n+1}}+O\left(m^{\frac{2(n-1)}{2n+1}}\right),$$
for general parameters $\tau_1,\dots,\tau_{n-1}$.  More terms in the expansion can also be obtained but the expression is more complicated
and we do not pursue this here.

The rest of this paper is devoted to the proofs of our main results, which is based on a Deift-Zhou nonlinear steepest descent analysis \cite{Deift,DZ1993,DZ1995} of the associated Riemann-Hilbert (RH) problem. In Section \ref{RHproblem}, we recall the RH problem that characterizes the higher order Painlev\'{e} II transcendents $q_n$ in \eqref{dF1-expression} and the Fredholm determinants \eqref{Fred-Deter}. Asymptotic analysis of the RH problem  for $0<\rho<1$ and  $\rho>1$ are carried out in Sections \ref{RHanalysis1} and \ref{RHanalysis2}, respectively. 
The proofs of our main results are presented in Section \ref{proof-largegap}, as outcomes of the RH analysis. For the convenience of the reader, we also include the parabolic cylinder parametrix used in the analysis in Appendix \ref{PCP}.

\section{RH characterizations of $q_n$ and $F_n$}\label{RHproblem}

The starting point of our analysis is the following RH problem for the Painlev\'{e} II hierarchy \eqref{PIIhierarchy} with specified Stokes multiplier. For more information, we refer to \cite{CCG2021} and \cite[Section 4.2]{CIK2010}.

\subsection*{RH problem for $\Psi$}
\begin{description}
\item{(1)} $\Psi(z):=\Psi(z,x;\rho)$ is analytic for $z\in\mathbb{C}\setminus(\Gamma_+\cup\Gamma_-)$, where $\Gamma_+$ and $\Gamma_-$ are symmetric with respect to the real line, and $\Gamma_+$ is any contour lying in the upper half plane which begins at $\infty\mathrm{e}^{\frac {4n+1}{4n+2}\pi i}$ and ends at $\infty\mathrm{e}^{\frac {1}{4n+2}\pi i}$; see Figure \ref{PIIjump} for an illustration.

\begin{figure}[H]
  \centering
  \includegraphics[width=10cm,height=6cm]{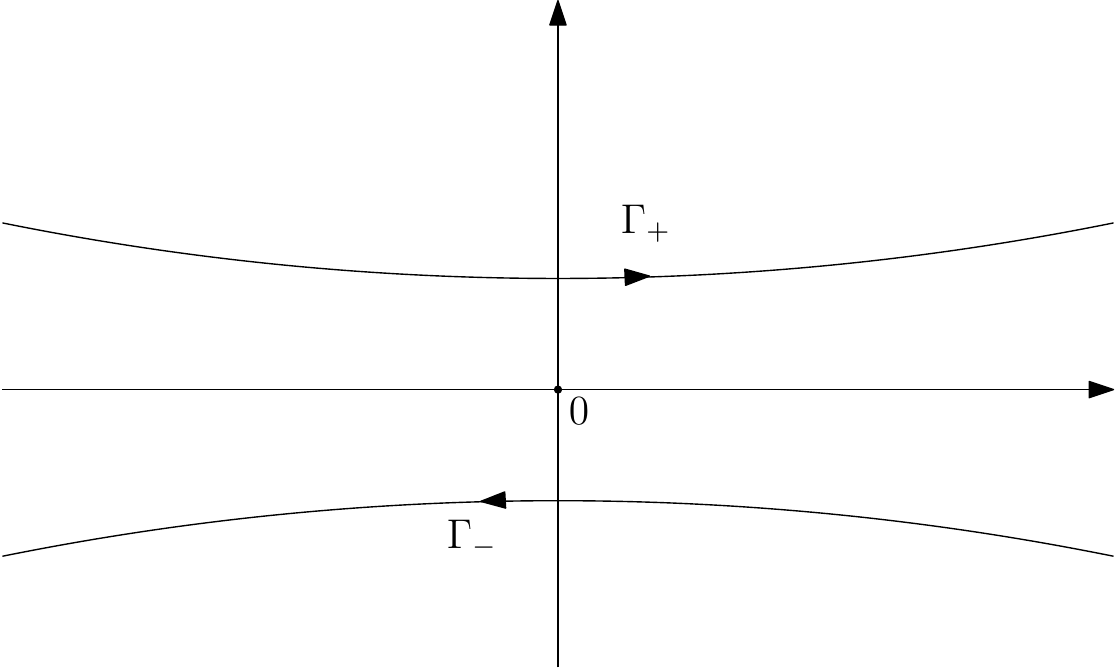}\\ 
  \caption{The jump contours $\Gamma_+$ and $\Gamma_-$ of the RH problem for $\Psi$.
}\label{PIIjump}
\end{figure}

\item{(2)} $\Psi(z)$ satisfies the jump condition
$$\Psi_{+}(z)=\Psi_{-}(z)J_{\Psi}(z),$$
where
\begin{equation}\label{Psi-jump}
J_{\Psi}(z)=\left\{\begin{aligned}
&\begin{pmatrix} 1 & 0 \\ \rho\,\mathrm{e}^{2\theta(z)} & 1\end{pmatrix}, && z\in\Gamma_+,\\
&\begin{pmatrix} 1 & \rho\,\mathrm{e}^{-2\theta(z)} \\ 0 & 1\end{pmatrix}, && z\in\Gamma_-,
\end{aligned}\right.
\end{equation}
with
\begin{align}\label{eq:theta}
\theta(z):&=\theta(z;x,\tau_1,\ldots,\tau_{n-1})=i\frac{(2z)^{2n+1}}{4n+2}
+i\sum^{n-1}_{j=1}\frac{(-1)^{n+j}\tau_j}{4j+2}(2z)^{2j+1}+ixz\nonumber\\
&=\frac{(-1)^n}{2}P_{2n+1}(2\mathrm{e}^{\frac{\pi i}{2}}z)+ixz,
\end{align}
and where $P_{2n+1}$ is defined in \eqref{eq:Pn}.
\item{(3)} As $z\to \infty$, we have
\begin{equation}\label{Asyatinfty}
\Psi(z)=I+\frac{\Psi_{1}(x)}{z}
+O\left(z^{-2}\right)
\end{equation}
for some function $\Psi_{1}$.
\end{description}

Recall that $q_{n}$ is the solution of the $n$-th member of the Painlev\'e II hierarchy \eqref{PIIhierarchy} determined through \eqref{dF1-expression}. It is related to the above RH problem via the formulas
\begin{equation}\label{q-expression}
q_{n}(x;\rho)= \begin{cases}2 i\left(\Psi_{1}(x)\right)_{12}=-2 i\left(\Psi_{1}(x)\right)_{21}, & \text{if}~n~\text{is~odd}, \\
2 i\left(\Psi_{1}(-x)\right)_{12}=-2 i\left(\Psi_{1}(-x)\right)_{21}, & \text{if}~n~ \text{is~even}.\end{cases}
\end{equation}
where $\Psi_1$ is the coefficient of $1/z$  in \eqref{Asyatinfty} and $A_{ij}$ stands for the $(i,j)$-th entry of a matrix $A$; see \cite[Equation (2.27)]{CCG2021}. Note that $\Psi$ satisfies the symmetry relation
\begin{equation}\label{eq:PsiSymm}
\sigma_3\Psi(z,x;\rho)\sigma_3=\Psi(z,x;-\rho),\qquad \sigma_3=\begin{pmatrix} 1 & 0\\ 0 & -1 \end{pmatrix},
\end{equation}
it is then easily seen from \eqref{q-expression} that $q_{n}(x;-\rho)=-q_{n}(x;\rho)$, as claimed in \eqref{symmetry}.


We also have the following connections between the Fredholm determinant $F_n$ defined in \eqref{Fred-Deter} and the RH problem for $\Psi$.
\begin{pro}
The Fredholm determinant $F_n(x;\rho)$ satisfies the following differential identities:
\begin{align}\label{Fred-expression-to-x}
\partial_x\ln{F_n(x;\rho)}&=2i(\Psi_1(x))_{11},
\\
\label{Fred-expression-to-rho}
\partial_{\rho}\ln{F_n(x;\rho)}&=\frac{1}{\rho}\int_{\Gamma_+\cup\Gamma_-}
\mathrm{Tr}\left[\Psi^{-1}(z)\Psi'(z)(J_{\Psi}(z)-I)\right]\frac{\mathrm{d}z}{2\pi i},
\end{align}
where $\Psi_1$ is defined in \eqref{Asyatinfty} and $J_{\Psi}$ is given in \eqref{Psi-jump}.
\end{pro}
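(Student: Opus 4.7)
Both identities belong to the Jimbo-Miwa-Ueno framework of isomonodromic tau-functions adapted to the present RH problem. Since $K_n$ is an integrable kernel of Its-Izergin-Korepin-Slavnov type and $F_n(x;\rho)$ is its associated tau-function (this identification is carried out in \cite{CCG2021}), both differential identities follow from standard variational formulas for RH problems combined with the specific structure of $\Psi$.

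For the $\rho$-derivative identity (\ref{Fred-expression-to-rho}), my plan is to apply the general variational formula
\[
\partial_\rho \ln F_n(x;\rho) = \int_{\Gamma_+ \cup \Gamma_-}\operatorname{Tr}\!\Bigl[\Psi^{-1}(z)\Psi'(z)\,\partial_\rho J_\Psi(z)\,J_\Psi^{-1}(z)\Bigr]\frac{\mathrm{d}z}{2\pi i},
\]
then exploit two special structural features of $J_\Psi$ in (\ref{Psi-jump}): each jump matrix is unipotent with $(J_\Psi - I)^2 = 0$, hence $J_\Psi^{-1} = 2I - J_\Psi$; and $J_\Psi - I$ is linear in $\rho$, so $\partial_\rho J_\Psi = (J_\Psi - I)/\rho$. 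Combining these gives
\[
\partial_\rho J_\Psi \cdot J_\Psi^{-1} = \frac{1}{\rho}(J_\Psi - I)(2I - J_\Psi) = \frac{1}{\rho}(J_\Psi - I),
\]
which immediately yields (\ref{Fred-expression-to-rho}).

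For the $x$-derivative identity (\ref{Fred-expression-to-x}), my plan is to combine the second-derivative formula $\partial_x^2 \ln F_n = -q_n^2((-1)^{n+1}x;\rho)$ from (\ref{dF1-expression}) with the Lax pair $\partial_x \Psi = U(z;x)\Psi$ for $\Psi$, derived from the $n$-th member of the Painlev\'e II hierarchy via the Lenard recursion (\ref{recursiverelation}). Substituting the asymptotic expansion $\Psi = I + \Psi_1/z + \Psi_2/z^2 + O(z^{-3})$ from (\ref{Asyatinfty}) into the Lax equation and matching coefficients of $z^{-1}$ produces a first-order ODE of the form
\[
\partial_x\bigl(2i(\Psi_1(x))_{11}\bigr) = -q_n^2((-1)^{n+1}x;\rho),
\]
which by (\ref{dF1-expression}) coincides with $\partial_x^2 \ln F_n$. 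Integrating in $x$ and fixing the constant of integration via the decay $(\Psi_1(x))_{11} \to 0$ as $x \to +\infty$ (which itself follows from the rapid decay (\ref{qAsymp+}) of $q_n$ at $+\infty$ together with the Tracy-Widom type formula (\ref{F1-expression})) then yields (\ref{Fred-expression-to-x}).

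The main obstacle is the Lax-pair bookkeeping for general $n$: while matching the $z^{-1}$-coefficient is elementary for the Painlev\'e II equation itself, for higher members of the hierarchy the Lax matrix $U(z;x)$ is a polynomial of degree $2n$ in $z$ with coefficients built from $q_n$ and its $x$-derivatives through the Lenard recursion, and tracking the $(1,1)$-entry cleanly through this recursion requires some care. Fortunately the relevant Lax pair structure is already worked out in \cite{CCG2021,CIK2010}, so the required ODE can be extracted from there and the proof reduces to an algebraic verification plus the integration argument outlined above.
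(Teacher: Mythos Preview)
Your approach is correct and essentially matches the paper's. For the $\rho$-identity, the paper also starts from the Bertola variational formula \cite{Ber} and reduces $\partial_\rho J_\Psi\cdot J_\Psi^{-1}$ to $\frac{1}{\rho}(J_\Psi-I)$; the only differences are that the paper computes this directly from \eqref{Psi-jump} rather than via your nilpotency argument, and that it begins (as the Bertola formula properly reads) with the boundary value $\Psi_-$ in the integrand and then observes separately that $\mathrm{Tr}\bigl[\Psi_-^{-1}\Psi_-'(J_\Psi-I)\bigr]=\mathrm{Tr}\bigl[\Psi_+^{-1}\Psi_+'(J_\Psi-I)\bigr]$ so the subscript may be dropped---a small step you should include. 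For the $x$-identity, the paper simply invokes \cite[Equations (2.19) and (2.25)]{CCG2021} directly rather than re-deriving the relation through the Lax pair and integrating; your route works too, but it is more laborious and, as you acknowledge, ultimately leans on the same reference for the Lax structure.
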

\begin{proof}
The identity \eqref{Fred-expression-to-x} follows directly from \cite[Equations (2.19) and (2.25)]{CCG2021}. To show \eqref{Fred-expression-to-rho}, we see from the results in \cite[Section 5.1]{Ber} that
\begin{equation}\label{eq:BCdiff}
\partial_{\rho}\ln{F_n(x;\rho)}=\int_{\Gamma_+\cup\Gamma_-}\mathrm{Tr}
\left[\Psi^{-1}_-(z)\Psi'_-(z)\partial_{\rho}(J_{\Psi}(z))J_{\Psi}^{-1}(z)\right]
\frac{\mathrm{d}z}{2\pi i}.
\end{equation}
In view of $J_{\Psi}$ in \eqref{Psi-jump}, it is readily seen that
$$
\partial_{\rho}(J_{\Psi}(z))J_{\Psi}^{-1}(z)=\frac{1}{\rho}(J_{\Psi}(z)-I),
$$
which implies
$$
\partial_{\rho}\ln{F_n(x;\rho)}=\frac{1}{\rho}\int_{\Gamma_+\cup\Gamma_-}
\mathrm{Tr}\left[\Psi_-^{-1}(z)\Psi_-'(z)(J_{\Psi}(z)-I)\right]\frac{\mathrm{d}z}{2\pi i}.
$$
We could remove the subscript $_{-}$ in the above formula by noting that
\begin{equation*}
\mathrm{Tr}\left[\Psi_-^{-1}(z)\Psi_-'(z)(J_{\Psi}(z)-I)\right]=\mathrm{Tr}\left[\Psi_+^{-1}(z)\Psi_+'(z)(J_{\Psi}(z)-I)\right], \qquad z\in \Gamma_+\cup\Gamma_-,
\end{equation*}
which leads to \eqref{Fred-expression-to-rho}.
\end{proof}

\section{Asymptotic analysis of the RH problem for $\Psi$ with $0<\rho<1$}\label{RHanalysis1}
In this section, we  will perform a Deift-Zhou nonlinear steepest descent analysis to the RH problem for $\Psi$ as $x\to-\infty$ with the parameter $0<\rho<1$. It consists of a series of explicit and invertible transformations which leads
to an RH problem tending to the identity matrix for large negative $x$.

\subsection{First transformation: $\Psi \to X$}
The first transformation is a rescaling, which is defined by
\begin{equation}\label{rescaling}
X(z)=\Psi\left(|x|^{\frac{1}{2n}}z\right), \qquad x<0.
\end{equation}
Thus, $X$ satisfies the following RH problem.

\subsection*{RH problem for $X$}
\begin{description}
\item{(1)} $X(z)$ is analytic for $z\in\mathbb{C}\setminus(\gamma_+\cup\gamma_-)$, where $\gamma_{\pm}:=|x|^{-\frac{1}{2n}}\Gamma_{\pm}$ with $\Gamma_{\pm}$ depicted in Figure \ref{PIIjump}.
\item{(2)} $X(z)$ satisfies the jump condition
$$X_+(z)=X_-(z)J_{X}(z),$$
where
\begin{equation}
J_{X}(z)=\left\{\begin{aligned}
&\begin{pmatrix} 1 & 0 \\ \rho\,\mathrm{e}^{2tg(z)} & 1\end{pmatrix}, && z\in\gamma_+,\\
&\begin{pmatrix} 1 & \rho\,\mathrm{e}^{-2tg(z)} \\ 0 & 1\end{pmatrix}, &&
z\in\gamma_-,
\end{aligned}\right.
\end{equation}
and where $t=|x|^{\frac{2n+1}{2n}}$,
\begin{equation}\label{g-expression}
g(z):=g(z;x,\tau_1,\ldots,\tau_{n-1})
=i\frac{(2z)^{2n+1}}{4n+2}-iz+i\sum_{j=1}^{n-1}\frac{(-1)^{n+j} \tau_j}{4j+2}(2z)^{2 j+1}|x|^{\frac{j-n}{n}}.
\end{equation}
\item{(3)} As $z\to\infty$, we have
\begin{equation}
X(z)=I+\frac{\Psi_{1}(x)}{|x|^{\frac{1}{2n}}z}
+O\left(z^{-2}\right),
\end{equation}
where $\Psi_1$ is defined in \eqref{Asyatinfty}.
\end{description}

\subsection{Second transformation: $X \to Y$}
The second transformation involves a deformation of the jump contour of the RH problem for $X$.
To proceed, we begin with studies of the saddle points of the phase function $g$ in \eqref{g-expression}.
\begin{lemma}
For sufficiently large negative $x$, $g(z)$ has exactly two real saddle points $z_+>0$ and $z_-=-z_+$. Moreover, $z_+$ admits the following expansion
\begin{equation}\label{z+-expansion}
z_+=\frac{1}{2}+\frac{1}{2}\sum^{\infty}_{k=1}a_k|x|^{-\frac{k}{n}},
\qquad    x \to -\infty,
\end{equation}
where the coefficients $a_k$, $k\in\mathbb{N}$, are given in \eqref{ak-expression}.
\end{lemma}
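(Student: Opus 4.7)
My plan is to reduce the saddle-point equation $g'(z)=0$ to an algebraic equation of the form $Q(u)=|x|$ via a suitable rescaling, and then to apply a Lagrange-inversion argument at $u=\infty$ to read off both the series in \eqref{z+-expansion} and the residue formula \eqref{ak-expression}.

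Differentiating \eqref{g-expression} gives, after dividing by $i$, the saddle-point equation
\[
(2z)^{2n}+\sum_{j=1}^{n-1}(-1)^{n+j}\tau_j(2z)^{2j}|x|^{(j-n)/n}=1.
\]
This is a polynomial equation of degree $2n$ in $w:=2z$ with real, even coefficients, which converges uniformly on compact subsets of $\mathbb{C}$ to $w^{2n}=1$ as $x\to-\infty$, since each exponent $(j-n)/n$ is strictly negative. The limiting equation has two simple real roots $w=\pm 1$; a Rouch\'e-type argument then shows that for sufficiently large $|x|$ the perturbed polynomial has exactly one real root near each of $\pm 1$ and none elsewhere on $\mathbb{R}$, giving precisely two real saddle points. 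Since $g'$ is odd in $z$, one has $z_-=-z_+$, so it suffices to analyse $z_+$.

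Next, the rescaling $u=2z\,|x|^{1/(2n)}$ converts the saddle-point equation into the clean form
\[
u^{2n}+\sum_{j=1}^{n-1}(-1)^{n+j}\tau_j u^{2j}=|x|,\qquad\text{that is,}\qquad Q(u)=|x|,
\]
with $Q$ as in \eqref{ell}. The function $F(u):=Q(u)^{1/(2n)}$, with the positive real branch chosen so that $F(u)\sim u$ as $u\to+\infty$, is holomorphic and odd in a neighbourhood of $u=\infty$, hence a biholomorphism there. Writing $s:=|x|^{1/(2n)}$, we obtain $u=F^{-1}(s)$, which admits a convergent expansion $u(s)=\sum_{k\ge 0}a_k\,s^{1-2k}$ with $a_0=1$; reverting to $z$ gives $2z_+=u\,|x|^{-1/(2n)}=\sum_{k\ge 0}a_k|x|^{-k/n}$, which is exactly \eqref{z+-expansion}.

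Finally, to identify the $a_k$ with the residues in \eqref{ak-expression}, I would extract $a_k$ as a contour integral around $s=\infty$, pull it back to the $u$-plane through $s=Q(u)^{1/(2n)}$, and integrate by parts using
\[
Q(u)^{(2k-1)/(2n)-1}Q'(u)\,du=\frac{2n}{2k-1}\,d\!\left(Q(u)^{(2k-1)/(2n)}\right).
\]
The boundary term vanishes around the closed contour because $Q(u)^{(2k-1)/(2n)}\sim u^{2k-1}$ is single-valued near $u=\infty$, and what remains is precisely
\[
a_k=\frac{1}{2k-1}\Res\limits_{u=\infty}Q(u)^{(2k-1)/(2n)}.
\]
The main bookkeeping obstacle will be this Lagrange-inversion step, where one must keep careful track of orientations and signs in the residue calculus; ruling out spurious real saddles is, by comparison, routine.
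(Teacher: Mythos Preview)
Your argument is correct and follows the same overall strategy as the paper: reduce $g'(z)=0$ to $Q(u)=|x|$ via the rescaling $u=2z|x|^{1/(2n)}$, and then invert near $u=\infty$. The differences are in the two supporting steps.

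For existence and uniqueness of the real saddles, the paper argues directly from the shape of $Q$: since $Q(u)\sim u^{2n}$ is eventually strictly increasing on $[M,\infty)$ and $Q<|x|$ on $[0,M]$ for $|x|$ large, there is a unique positive root, and evenness gives the negative one. Your Rouch\'e argument is equally valid once one notes that the single root trapped in a small real-centred disk must be real (real coefficients force complex roots to come in conjugate pairs), and that the remaining $2n-2$ roots cluster near the non-real $2n$-th roots of unity; the monotonicity route is slightly more direct for a statement specifically about \emph{real} roots.

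For the expansion, the paper does not carry out the inversion itself but quotes \cite[Lemma 4.3]{CCG2021}, obtaining $z_0=|x|^{1/(2n)}+\sum_{k\ge1}b_k|x|^{-k/(2n)}$ with $b_k=\tfrac{1}{k}\Res_{z=\infty}Q^{k/(2n)}$, and then uses evenness of $Q$ to kill the even-indexed $b_{2m}$ and match the $a_k$. Your route---exploiting from the outset that $F(u)=Q(u)^{1/(2n)}$ is odd near $\infty$, so that $F^{-1}(s)$ has only odd powers of $s$, and then reading off $a_k$ via Lagrange inversion and the integration-by-parts identity---is self-contained and recovers exactly the residue formula \eqref{ak-expression} without appeal to an external lemma. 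Either way one lands on the same series.
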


\begin{proof}
From \eqref{g-expression}, it follows that
\begin{equation}\label{eq:dg} 
|x|g'(z)=i\left(Q\left(2z|x|^{\frac{1}{2n}}\right)-|x|\right), \qquad x<0,
\end{equation}
where $Q$ is the even polynomial defined in \eqref{ell}.
Thus, we have $ g'(z)=0 $ if and only if  $Q\left(2z|x|^{\frac{1}{2n}}\right)=|x|$.
Since $Q(z)$ is a polynomial with real coefficients and behaves like $z^{2n}$ as $z\to\infty$, there exists a constant $M>0$  such that $Q(z)$ is strictly increasing on $[M,\infty)$.  Suppose $|x|$ is large enough such that $Q(z)<|x|$ for all $z\in[0,M]$, it is then easily seen that the equation $Q(z) = |x|$ has a unique positive solution $z_0\in[M,\infty)$.
Furthermore, according to \cite[Lemma 4.3]{CCG2021}, the solution $z_0$ has the following fractional power series expansion
 \begin{equation}\label{eq:z0Expansion}
z_0=|x|^{\frac{1}{2n}}+\sum^{\infty}_{k=0}b_k |x|^{-\frac{k}{2n}}, \qquad x\to -\infty,
\end{equation}
where $b_0=0$ and
 \begin{equation}\label{eq:bk}
b_k=\frac{1}{k}\Res\limits_{z=\infty}Q^{\frac{k}{2n}}(z), \qquad k\in\mathbb{N}.
\end{equation}
Since $Q$ is an even polynomial, we have
 \begin{equation}\label{eq:bm0}
b_{2m}=0, \quad m\in\mathbb{N}, \qquad \textrm{and} \qquad Q(-z_0)=|x|.
\end{equation}
Hence, for sufficiently large negative $x$, there exist exactly two real saddle points $z_{\pm}$ of $g(z)$, which are symmetric with respect to the origin and $$z_+=\frac{1}{2}|x|^{-\frac{1}{2n}}z_0.$$
Inserting  \eqref{eq:z0Expansion}--\eqref{eq:bm0} into the above formula gives us \eqref{z+-expansion}.
\end{proof}

\begin{figure}[t]
  \centering
  \includegraphics[width=12cm,height=4.5cm]{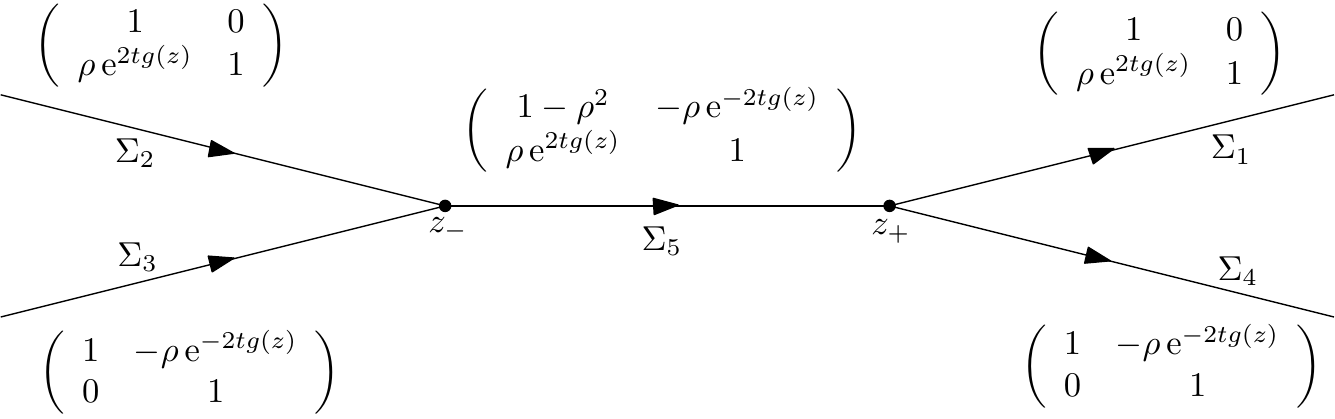}\\
\caption{The jump contours $\Sigma_i$, $i=1,\ldots,5$, and jump matrix $J_Y$ of the RH problem for $Y$.}\label{GammaS}
\end{figure}

We next deform the jump contours $\gamma_+$ and $\gamma_-$ for $X$ into five rays $\Sigma_i$, $i=1,\ldots,5$. Here, the rays $\Sigma_{1,4}$ and $\Sigma_{2,3}$ are emanating from the two saddle points $z_{\pm}$ of $g$, respectively, and $\Sigma_5=[z_-,z_+]$; see Figure \ref{GammaS} for an illustration. We choose the rays $\Sigma_i$, $i=1,\ldots,4$, such that
\begin{equation}\label{eq:ginequa}
\left\{
         \begin{array}{ll}
           \Re g(z)<0, & \hbox{$z\in \Sigma_1 \cup \Sigma_2$,} \\
           \Re g(z)>0, & \hbox{$z\in \Sigma_3 \cup \Sigma_4$.}
         \end{array}
       \right.
\end{equation}
The possibility of this choice follows from the fact that $g=ig_0+O(|x|^{-\frac{1}{n}})$ as $x\to-\infty$ with
\begin{equation}\label{g0g1-expression}
g_0(z):=\frac{(2z)^{2 n+1}}{4n+2}-z,
\end{equation}
and the signature of $\Re{ig}_0$ as depicted in Figure \ref{gsign}. The second transformation $X\to{Y}$ is then defined by
\begin{equation}\label{Y-transformation}
Y(z)=X(z)\left\{\begin{aligned}
&\begin{pmatrix} 1 & 0 \\ \rho\,\mathrm{e}^{2tg(z)} & 1\end{pmatrix}, && \textrm{$z$ in the region bounded by $\gamma_+$ and $\Sigma_2\cup\Sigma_5\cup\Sigma_1$,}\\
&\begin{pmatrix} 1 & \rho\,\mathrm{e}^{-2tg(z)} \\ 0 & 1\end{pmatrix}, &&
\textrm{$z$ in the region bounded by $\gamma_-$ and $\Sigma_3\cup\Sigma_5\cup\Sigma_4$,}
\\
&I , &&
\textrm{elsewhere.}
\end{aligned}\right.
\end{equation}


In view of the RH problem for $X$, it is direct to check that $Y$ satisfies the following RH problem.

\subsection*{RH problem for $Y$}
\begin{description}
\item{(1)} $Y(z)$ is analytic for $z\in\mathbb{C}\setminus(\cup_1^5\Sigma_k)$, where the rays $\Sigma_{k}$, $k=1,\cdots,5$, are depicted in Figure \ref{GammaS}.

\item{(2)} $Y(z)$ satisfies the jump condition
$$Y_{+}(z)=Y_{-}(z)J_{Y}(z),$$
where the jump matrix $J_Y$ is shown in Figure \ref{GammaS}.
\item{(3)} As $z\to\infty$, we have
\begin{equation}
Y(z)=I+\frac{\Psi_{1}(x)}{|x|^{\frac{1}{2n}}z}
+O\left(z^{-2}\right),
\end{equation}
where $\Psi_1$ is defined in \eqref{Asyatinfty}.
\end{description}
\begin{figure}[t]
  \centering
  \includegraphics[width=7.2cm,height=7cm]{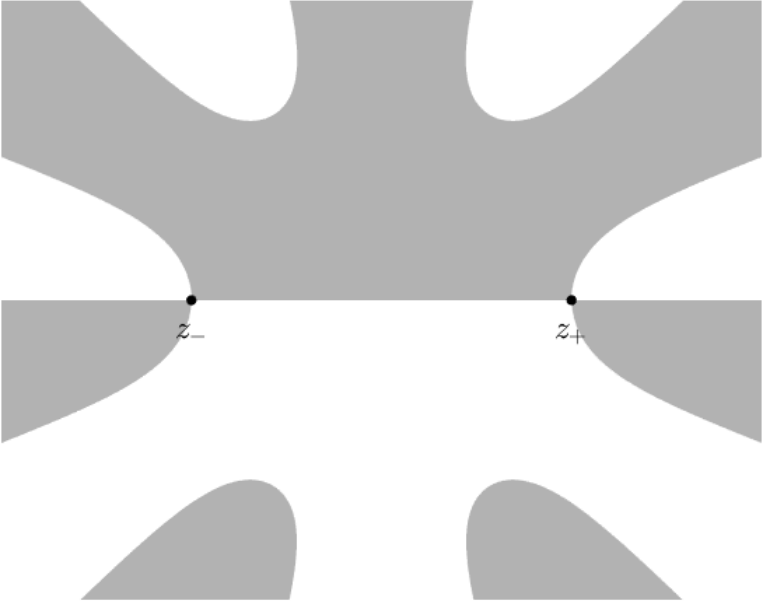}
  \hspace{0.6cm}
  \includegraphics[width=7.2cm,height=7cm]{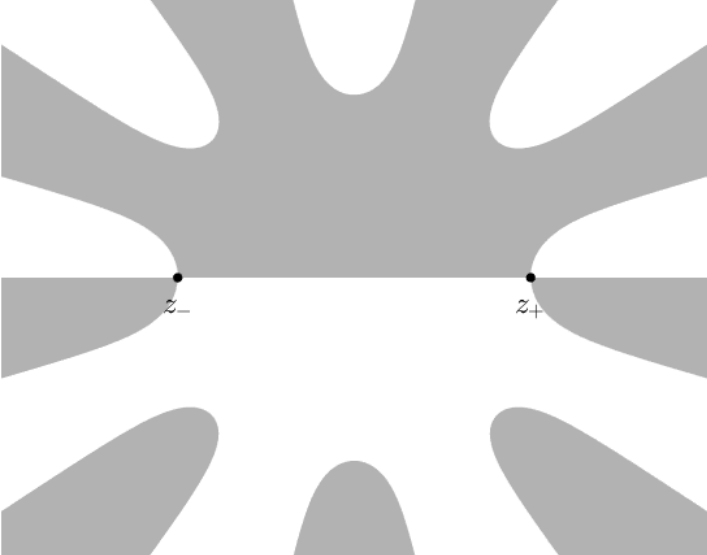}
  \caption{The signatures of $\Re ig_0$ in the cases $n=3$ (left) and $n=4$ (right). The shaded areas indicate the regions where $\Re ig_0>0$, while the white areas indicate the regions where $\Re ig_0<0$.This case corresponds to $\tau_1=\cdots=\tau_{n-1}=0$ in \eqref{g-expression}, thus $z_{\pm}=\pm\frac{1}{2}$. }\label{gsign}
\end{figure}

We conclude this subsection by showing the expansion of $g(z_+)$ as $x\to -\infty$ for later use.
\begin{lemma}\label{eq:gz}
Let $z_+$ be the positive saddle point of $g$. We have, as $x\to -\infty$,
\begin{equation}\label{eq:gexpansion}
2ig(z_+)|x|^{\frac{2n+1}{2n}}=\sum_{k=0}^{\infty}\frac{2na_k}{1+2(n-k)}
|x|^{\frac{1+2(n-k)}{2n}},
\end{equation}
where $a_k$, $k\in \{0\}\cup\mathbb{N}$, are given in \eqref{ak-expression}.
\end{lemma}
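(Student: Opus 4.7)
The plan is to derive an explicit integral representation for $g(z_+)$, rewrite it as a polynomial in the rescaled saddle $z_0=2z_+|x|^{1/(2n)}$, and identify its Puiseux expansion in $|x|^{-1/(2n)}$ with the claimed series via a matching-antiderivative argument.

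Since \eqref{g-expression} gives $g(0)=0$, integrating \eqref{eq:dg} from $0$ to $z_+$ and substituting $t=2s|x|^{1/(2n)}$ yields
\begin{equation*}
2ig(z_+)|x|^{\frac{2n+1}{2n}}=z_0|x|-\int_0^{z_0}Q(t)\,dt.
\end{equation*}
Using $Q(z_0)=|x|$, one integration by parts recasts this as the polynomial
\begin{equation*}
F(z_0):=\int_0^{z_0}tQ'(t)\,dt=\frac{2n}{2n+1}z_0^{2n+1}+\sum_{j=1}^{n-1}(-1)^{n+j}\tau_j\,\frac{2j}{2j+1}z_0^{2j+1}.
\end{equation*}
Next, I would insert the convergent expansion $z_0=\sum_{k=0}^{\infty}a_k|x|^{(1-2k)/(2n)}$, which follows from \eqref{eq:z0Expansion}--\eqref{eq:bm0} together with the identification $a_k=b_{2k-1}$ obtained by comparing \eqref{eq:bk} with \eqref{ak-expression}. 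Since $F(\zeta)$ is supported on odd powers of $\zeta$, every monomial in the resulting Puiseux expansion of $F(z_0)$ carries an exponent of the form $(2m+1-2K)/(2n)$ with $m,K\in\mathbb{Z}_{\ge 0}$, and none of these exponents vanishes because $(2m+1)/2$ is never an integer.

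To pin down the Puiseux coefficients I would use a matching-derivatives trick. Differentiating $Q(z_0)=|x|$ in $|x|$ gives $z_0'(|x|)=1/Q'(z_0)$, so the chain rule combined with the identity $F'(\zeta)=\zeta Q'(\zeta)$ produces
\begin{equation*}
\frac{d}{d|x|}F(z_0)=z_0=\sum_{k=0}^{\infty}a_k|x|^{(1-2k)/(2n)}.
\end{equation*}
Since no summand has exponent $-1$, term-by-term antidifferentiation yields
\begin{equation*}
F(z_0)=\sum_{k=0}^{\infty}\frac{2na_k}{2n+1-2k}|x|^{(2n+1-2k)/(2n)}+C
\end{equation*}
for some constant $C$. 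The ``no constant term'' observation above forces $C=0$, and rewriting $2n+1-2k=1+2(n-k)$ recovers \eqref{eq:gexpansion}. The only nontrivial step is the vanishing of $C$, which rests on the parity structure of $F$; the rest amounts to the one-line identity $F'(\zeta)=\zeta Q'(\zeta)$ and the chain rule.
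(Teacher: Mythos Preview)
Your proof is correct and follows essentially the same strategy as the paper: differentiate $2ig(z_+)|x|^{(2n+1)/(2n)}$ with respect to $|x|$ (the paper does this via the representation $g(z)|x|^{(2n+1)/(2n)}=\theta(|x|^{1/(2n)}z;x)$ and the vanishing of $\theta_z$ at the saddle, you via the explicit polynomial $F(z_0)$ and $F'(\zeta)=\zeta Q'(\zeta)$), recognize the derivative as $z_0$, antidifferentiate the Puiseux expansion term by term, and kill the integration constant by the odd-numerator parity argument. Your intermediate formula $F(z_0)=\int_0^{z_0}tQ'(t)\,dt$ makes the ``no constant term'' step slightly more transparent than the paper's version, but the two arguments are otherwise the same.
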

\begin{proof}
From \eqref{eq:theta} and \eqref{g-expression}, it follows that
\begin{equation}\label{eq:gtheta}
g(z)|x|^{\frac{2n+1}{2n}}=\theta\left(|x|^{\frac{1}{2n}}z;x\right), \qquad x<0.
\end{equation}
Since $g'(z_+)=0$, we have $\theta_z\left(|x|^{\frac{1}{2n}}z_+;x\right)=0$, where $\theta_z$ denotes the partial derivative of $\theta(z;x)$ with respect to $z$.
Differentiating both sides of \eqref{eq:gtheta} with respect to $x$, we have
\begin{align} \nonumber
\frac{\mathrm{d}}{\mathrm{d}x} \left(g(z_+)|x|^{\frac{2n+1}{2n}}\right)
&=\frac{\mathrm{d}}{\mathrm{d}x}\theta\left(|x|^{\frac{1}{2n}}z_+ ;x\right)
=\theta_z\left(|x|^{\frac{1}{2n}}z_+\right)\frac{\mathrm{d}}{\mathrm{d}x}
\left(|x|^{\frac{1}{2n}}z_+ \right)+\theta_x\left(|x|^{\frac{1}{2n}}z_+;x \right)\nonumber \\
&=\theta_x\left(|x|^{\frac{1}{2n}}z_+\right) \label{z0-equation}
=i|x|^{\frac{1}{2n}}z_+.\end{align}
Integrating both sides of \eqref{z0-equation} and combining with the expansion \eqref{z+-expansion}, we obtain
\begin{equation}\label{eq:gexpansion1}
2ig(z_+)|x|^{\frac{2n+1}{2n}}=\sum_{k=0}^{\infty}\frac{2na_k}{1+2(n-k)}|x|^{\frac{1+2(n-k)}{2n}}
+C_1,
\end{equation}
where $C_1$ is a constant of integration. To evaluate $C_1$, we observe that, by substituting the expansion \eqref{z+-expansion} of $z_+$ into the polynomial \eqref{g-expression}, $g(z_+)$ admits an expansion of the form
\begin{equation}\label{eq:gexpansion2}
g(z_+)=\sum_{k=0}^{\infty}d_k |x|^{-\frac{k}{n}}.
\end{equation}
Comparing \eqref{eq:gexpansion1} with \eqref{eq:gexpansion2} leads to $C_1=0$.
\end{proof}

\subsection{Third transformation: $Y \to T$}
Due to the inequalities \eqref{eq:ginequa}, it is immediate that the jump matrix $J_{Y}(z)$ tends to the identity matrix exponentially fast as $x\to-\infty$, except for $z \in (z_-,z_+)$. Moreover, $J_{Y}$ is highly oscillatory for large $|x|$ along $(z_-,z_+)$.
Using again  the signature of $\Re ig_0$ as illustrated in Figure \ref{gsign}, we could turn the high oscillations into exponential decays by opening lens around the segment $(z_-,z_+)$ as shown in Figure \ref{GammaT} in such a way that $\Re g(z)>0$ for $z\in \Sigma_6$ and $\Re g(z)<0$ for $z\in \Sigma_7$. Based on the factorization
$$
J_Y(z)=\begin{pmatrix}1 & 0 \\ \frac{\rho}{1-\rho^{2}} \mathrm{e}^{2tg(z)} & 1 \end{pmatrix}
(1-\rho^2)^{\sigma_3}
\begin{pmatrix} 1 & -\frac{\rho}{1-\rho^{2}} \mathrm{e}^{-2tg(z)} \\ 0 & 1 \end{pmatrix}, \qquad z\in(z_-,z_+),
$$
the third transformation $Y\to{T}$  is then defined by
\begin{equation}\label{T-transformation}
T(z)=Y(z)H(z),
\end{equation}
where
\begin{equation}\label{H-expression}
H(z)=\left\{\begin{aligned}
&\begin{pmatrix}1 & \frac{\rho}{1-\rho^{2}} \mathrm{e}^{-2tg(z)} \\ 0 & 1 \end{pmatrix}, &&\mathrm{for}~z~\mathrm{in}~\mathrm{the}~\mathrm{upper}~\mathrm{lens}~\mathrm{region},\\
&\begin{pmatrix}1 & 0 \\ \frac{\rho}{1-\rho^{2}} \mathrm{e}^{2tg(z)} & 1 \end{pmatrix},
&&\mathrm{for}~z~\mathrm{in}~\mathrm{the}~\mathrm{lower}~\mathrm{lens}~\mathrm{region},\\
&I, && \mathrm{elsewhere}.
\end{aligned}\right.
\end{equation}
It is then straightforward to check that ${T}$ solves the following RH problem.

\subsection*{RH problem for $T$}
\begin{description}
\item{(1)} ${T}(z)$ is analytic for $z\in\mathbb{C}\setminus(\cup_1^7\Sigma_k)$, where $\Sigma_{k}$, $k=1,\ldots,7$, are depicted in Figure \ref{GammaT}.

\item{(2)} $T(z)$ satisfies the jump condition $${T}_{+}(z)={T}_{-}(z)J_{T}(z),$$
   where $J_{T}$ is shown in Figure \ref{GammaT}.
\item{(3)} As $z\to\infty$, we have
\begin{equation}
T(z)=I+\frac{\Psi_{1}(x)}{|x|^{\frac{1}{2n}}z}
+O\left(z^{-2}\right),
\end{equation}
where $\Psi_1$ is defined in \eqref{Asyatinfty}.
\end{description}

\begin{figure}[t]
  \centering
  \includegraphics[width=12cm,height=4.5cm]{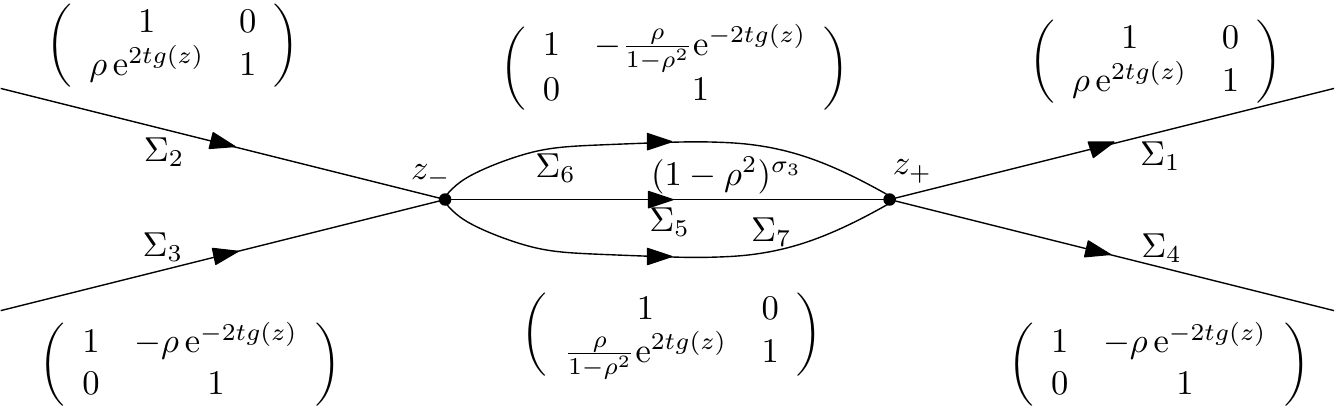}\\
  \caption{The jump contours $\Sigma_k$, $k=1,\ldots,7$, and jump matrix $J_T$ of the RH problem for $T$.}\label{GammaT}
\end{figure}

By \eqref{g-expression}, we observe that $\mathrm{e}^{-2tg(-z)} = \mathrm{e}^{2tg(z)}$, which leads to the symmetry relation
\begin{equation}\label{T-symmetry}
\sigma_1T(-z)\sigma_1=T(z),\qquad \sigma_1=\begin{pmatrix} 0 & 1 \\ 1 & 0 \end{pmatrix}.
\end{equation}

\subsection{Global parametrix}\label{subsec:global1}
As $x\to -\infty$, $J_{T}\to I$ exponentially fast except for $z\in (z_-,z_+)$. It is then natural to consider the following
RH problem for the global parametrix $P^{(\infty)}$.
\subsection*{RH problem for $P^{(\infty)}$}
\begin{description}
  \item{(1)} $P^{(\infty)}(z)$ is analytic for $z\in \mathbb{C}\setminus [z_{-},z_{+}]$.
  \item{(2)} $P^{(\infty)}(z)$ satisfies the jump condition
  \begin{equation}\label{Pinftyjump}
  {P}^{(\infty)}_{+}(z)={P}^{(\infty)}_{-}(z)
  \begin{pmatrix}1-\rho^2 & 0 \\ 0 & (1-\rho^2)^{-1}\end{pmatrix},\qquad z\in (z_-,z_+).
  \end{equation}
  \item{(3)} As $z\rightarrow \infty$, we have ${P}^{(\infty)}(z)=I+O(z^{-1})$.
\end{description}

The solution to this RH problem is explicitly given by
\begin{equation}\label{Pinfty}
{P}^{(\infty)}(z)=
\left(\frac{z-z_-}{z-z_+}\right)^{\nu\sigma_3},\qquad  \arg(z-z_\pm)\in(-\pi,\pi),
\end{equation}
where
\begin{equation}\label{nu}
\nu:=-\frac{1}{2\pi{i}}\ln(1-\rho^2).
\end{equation}
Since $0<\rho<1$, we have $\nu\in{i\mathbb{R}}$. A direct computation shows that
\begin{equation}\label{Pinfty-expansion}
P^{(\infty)}(z)=I+\frac{2z_+\nu\sigma_3}{z}+O(z^{-2}),\qquad z\to\infty.
\end{equation}

\subsection{Local parametrices at $z_{\pm}$}\label{subsec:localpara}
Since the convergence of $J_T$ to the identity matrix is not uniform near $z=z_\pm$, we have to construct two local parametrices ${P}^{(\pm)}$ in the neighborhoods $\mathcal{D}(z_{\pm})=\{z\in\mathbb{C}\mid|z-z_{\pm}|<\delta\}$ of the saddle points $z_{\pm}$, respectively. They
read as follows.

\subsection*{RH problems for ${P}^{(+)}$ and ${P}^{(-)}$}
\begin{description}
\item{(1)}
${P}^{(+)}(z)$ and ${P}^{(-)}(z)$ are analytic for $z \in \mathcal{D}(z_{+}) \setminus (\cup_{k=1,4,5,6,7} \Sigma_{k})$ and $z \in \mathcal{D}(z_-) \setminus (\cup_{k=2,3,5,6,7} \Sigma_{k})$, respectively. 
\item{(2)} ${P}^{(+)}(z)$ and ${P}^{(-)}(z)$ satisfy the jump conditions
\begin{align}
{P}^{(+)}_+(z) & = {P}^{(+)}_-(z)J_T(z), \qquad z \in \mathcal{D}(z_{+}) \cap (\cup_{k=1,4,5,6,7}\Sigma_{k}),
\label{eq:P+jump}\\
{P}^{(-)}_+(z) & = {P}^{(-)}_-(z)J_T(z), \qquad z \in \mathcal{D}(z_{-}) \cap (\cup_{k=2,3,5,6,7}\Sigma_{k}),
\end{align}
respectively, where $J_{T}$ is shown in Figure \ref{GammaT}.
\item{(3)} As $t=|x|^{\frac{2n+1}{2n}}\to+\infty$,  ${P}^{(\pm)}(z)$ satisfies the  matching condition
\begin{equation}\label{matching}
{P}^{(\pm)}(z)=\left(I+O\left(t^{-\frac{1}{2}}\right)\right){P}^{(\infty)}(z), \qquad z\in \partial \mathcal{D}(z_{\pm}),
\end{equation}
where ${P}^{(\infty)}$ is given in \eqref{Pinfty}.
\end{description}

From the symmetry relation \eqref{T-symmetry}, it is easily seen that
\begin{equation}\label{P-}
{P}^{(-)}(z)=\sigma_1{P}^{(+)}(-z)\sigma_1.
\end{equation}
Thus, we only need to solve the RH problem for ${P}^{(+)}$.

To proceed, we introduce a function
\begin{equation}\label{zeta}
\zeta(z):=2\sqrt{g(z_+)-g(z)}=2\sqrt{ig_0(z_+)-ig_0(z)}+O\left(|x|^{-\frac{1}{n}}\right),
\end{equation}
where $g_0$ is defined in \eqref{g0g1-expression} and the branches of the square roots are taken such that both $\arg(g(z_+)-g(z))$ and $\arg(ig_0(z_+)-ig_0(z))$ belong to $(0,2\pi)$. The function $\zeta$ defines a conformal mapping from $\mathcal{D}(z_+)$ to a neighborhood of $\zeta=0$ for sufficiently large $|x|$. From \eqref{g-expression}, \eqref{z+-expansion} and \eqref{zeta}, it is readily seen that, as $x\to-\infty$,
\begin{equation}\label{zeta'behavior}
\zeta'(z_+)=\mathrm{e}^{\frac{3\pi i}{4}}2\sqrt{2n}\left[1+O\left(|x|^{-\frac{1}{n}}\right)\right],
\end{equation}
and
\begin{equation}\label{zeta''behavior}
\zeta''(z_+)=\mathrm{e}^{\frac{3\pi i}{4}}\frac{4\sqrt{2n}}{3}(2n-1)\left[1+O\left(|x|^{-\frac{1}{n}}\right)\right].
\end{equation}

Let $\Phi_{(\mathrm{PC})}$ be the parabolic cylinder parametrix introduced in Appendix \ref{PCP} with parameter $\nu$ given in \eqref{nu}. We then define
\begin{equation}\label{P+}
{P}^{(+)}(z)=E(z)\Phi_{(\mathrm{PC})}\left(\sqrt{t}\zeta(z)\right)
\left(\frac{h_1}{\rho}\right)^{\frac{\sigma_3}{2}}\sigma_3\mathrm{e}^{tg(z)\sigma_3},
\end{equation}
where $h_1=\sqrt{2\pi}\text{e}^{i\pi\nu}/\Gamma(-\nu)$ and
\begin{equation}\label{E-expression}
E(z)={P}^{(\infty)}(z)\sigma_3
\left(\frac{\rho}{h_1}\right)^{\frac{\sigma_3}{2}}
\left(\sqrt{t}\zeta(z)\right)^{\nu\sigma_3}\mathrm{e}^{-tg(z_+)\sigma_3}.
\end{equation}
From \eqref{Pinfty} and \eqref{zeta}, it follows that $E$ is analytic in $\mathcal{D}(z_+)$. This, together with the jump condition \eqref{PCjumpmatrices} for $\Phi_{(\mathrm{PC})}$, implies that ${P}^{(+)}$ defined in \eqref{P+} indeed satisfies \eqref{eq:P+jump}.
Moreover, we obtain from the large-$\eta$ behavior of $\Phi_{(\mathrm{PC})}$ given in \eqref{PCAsyatinfty} that, as $t\to+\infty$,
\begin{align}\label{P+Pinfty}
P^{(+)}(z)P^{(\infty)}(z)^{-1}&=I+\begin{pmatrix} 0 & -\frac{\nu\rho{t}^{\nu}\zeta(z)^{2\nu-1}(z-z_-)^{2\nu}}
{h_1\mathrm{e}^{2tg(z_+)}(z-z_+)^{2\nu}} \\ -\frac{h_1\mathrm{e}^{2tg(z_+)}(z-z_+)^{2\nu}}
{\rho{t}^{\nu}\zeta(z)^{2\nu+1}(z-z_-)^{2\nu}} & 0 \end{pmatrix}t^{-\frac{1}{2}}+O\left(t^{-1}\right)\nonumber\\
&=I+O\left(t^{-\frac{1}{2}}\right), \qquad z\in \partial\mathcal{D}(z_+),
\end{align}
where we have also made use of the facts that $g(z_+)$ and $\nu$ are purely imaginary; see \eqref{g-expression} and \eqref{nu}.

In conclusion, we have shown that ${P}^{(+)}$ and ${P}^{(-)}$ defined in \eqref{P+} and \eqref{P-} solve the RH problems for ${P}^{(+)}$ and ${P}^{(-)}$, respectively.


\subsection{Small norm RH problem}\label{subsec:final}
The final transformation is defined by
\begin{equation}\label{Final-transformation}
{R}(z)=\left\{\begin{aligned}
&{T}(z){P}^{(\pm)}(z)^{-1},\quad && z\in \mathcal{D}(z_{\pm})\setminus(\cup_{k=1}^7\Sigma_k),\\
&{T}(z){P}^{(\infty)}(z)^{-1},\quad  &&\mathrm{elsewhere}. \end{aligned}
\right.
\end{equation}
We see from the RH problems for $T,{P}^{(\infty)}$ and ${P}^{(\pm)}$ that $R$ satisfies the following RH problem.

\begin{figure}[t]
  \centering
  \includegraphics[width=13cm,height=2.3cm]{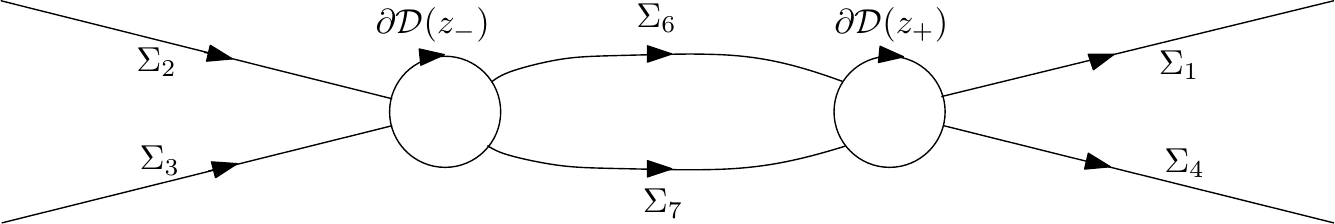}\\ 
  \caption{The jump contour $\Sigma_R$ of the RH problems for ${R}$ and $\widetilde R$.}\label{GammaR}
\end{figure}

\subsection*{RH problem for ${R}$}
\begin{description}
\item{(1)} ${R}(z)$ is analytic for $z\in \mathbb{C}\setminus\Sigma_{R}$,
where
\begin{equation}\label{def:SigmaR}
\Sigma_R:=\cup_{k=1}^7\Sigma_k \cup \partial \mathcal{D}(z_+) \cup \partial \mathcal{D}(z_-) \setminus (\Sigma_5\cup\mathcal{D}(z_+) \cup \mathcal{D}(z_-));
\end{equation}
see Figure \ref{GammaR} for an illustration.
\item{(2)} $R(z)$ satisfies the jump condition
\begin{equation}\label{eq:Rjump}
{R}_+(z)={R}_-(z)J_{R}(z),
\end{equation}
where
 \begin{equation}\label{Rjump}
 J_{R}(z)=\left\{\begin{aligned}
&{P}^{(\pm)}(z){P}^{(\infty)}(z)^{-1},\quad && z\in \partial\mathcal{D}(z_{\pm}),\\
&{P}^{(\infty)}(z)J_{T}(z){P}^{(\infty)}(z)^{-1},\quad &&\textrm{$z \in \Sigma_R\setminus (\partial \mathcal{D}(z_+) \cup \partial \mathcal{D}(z_-))$}.\\ \end{aligned}
\right.
 \end{equation}
\item{(3)} As $z\rightarrow\infty$, we have
\begin{equation}\label{R-expansion}
R(z)=I+\frac{R_1}{z}+O(z^{-2}),
\end{equation}
where $R_1$ is independent of $z$.
\end{description}

Since $P^{(\infty)}$ is independent of $x$ and is uniformly bounded outside $\mathcal{D}(z_+)\cup \mathcal{D}(z_-)$, we have that, as $x\to-\infty$,
$$
J_R(z)=I+O\left(\text{e}^{-c_1|x|^{\frac{2n+1}{2n}}}\right), \qquad z\in \Sigma_R\backslash\left(\partial\mathcal{D}(z_+)\cup\partial \mathcal{D}(z_-)\right),
$$
where $c_1$ is a positive constant. This, together with \eqref{P+Pinfty} and \eqref{Rjump}, implies that $J_R(z)$ has an expansion in the form
\begin{equation}\label{Rjump-estimate}
J_R(z)=I+\frac{J_R^{(1)}(z)}{|x|^{\frac{2n+1}{4n}}}
+\frac{J_R^{(2)}(z)}{|x|^{\frac{2n+1}{2n}}}+O\left(|x|^{-\frac{3(2n+1)}{4n}}\right), \qquad x\to-\infty,
\end{equation}
where
\begin{equation}\label{Rjump12}
J_R^{(j)}(z)=\left\{\begin{aligned}
&E(z)\frac{\Phi_{(\mathrm{PC}), j}}{\zeta(z)^j} E(z)^{-1}, && z\in\partial\mathcal{D}(z_+), \\
&\sigma_1E(-z)\frac{\Phi_{(\mathrm{PC}), j}}{\zeta(-z)^j} E(-z)^{-1}\sigma_1, && z\in\partial\mathcal{D}(z_-),
\end{aligned}\right.
\end{equation}
with $E$ given in \eqref{E-expression} and
$$
\Phi_{(\mathrm{PC}), 1}=\begin{pmatrix} 0 & \nu \\ 1 & 0\end{pmatrix},\qquad
\Phi_{(\mathrm{PC}), 2}=\begin{pmatrix} \frac{\nu(\nu+1)}{2} & 0 \\ 0 & -\frac{\nu(\nu-1)}{2}\end{pmatrix}.
$$
Since the RH problem for $R$ is equivalent to the following singular integral equation
\begin{equation}
R(z)=I+\frac{1}{2\pi i}\int_{\Sigma_R}\frac{R_{-}(s)\left(J_R(s)-I\right)}{s-z}\mathrm{d}s,
\end{equation}
we obtain from a standard theory  (cf. \cite{Deift}) that, as $x\to -\infty$,
\begin{align}
R(z)&=I+\frac{R^{(1)}(z)}{|x|^{\frac{2n+1}{4n}}}+\frac{R^{(2)}(z)}{|x|^{\frac{2n+1}{2n}}}
+O\left(|x|^{-\frac{3(2n+1)}{4n}}\right), \label{R-estimation}\\
\frac{\mathrm{d}}{\mathrm{d}z}R(z)
&=\frac{\frac{\mathrm{d}}{\mathrm{d}z}R^{(1)}(z)}{|x|^{\frac{2n+1}{4n}}}
+\frac{\frac{\mathrm{d}}{\mathrm{d}z}R^{(2)}(z)}{|x|^{\frac{2n+1}{2n}}}
+O\left(|x|^{-\frac{3(2n+1)}{4n}}\right), \label{dR-estimation}
\end{align}
uniformly for $z\in \mathbb{C} \setminus
(\partial\mathcal{D}(z_+)\cup\partial\mathcal{D}(z_-))
$.

We conclude this subsection with the calculations of the functions $R^{(1)}$ and  $R^{(2)}$ in \eqref{R-estimation} for later use. Inserting \eqref{Rjump-estimate} and \eqref{R-estimation} into the jump condition \eqref{eq:Rjump} for $R$ yields
\begin{equation}\label{eq:Rijumps}
R_{+}^{(1)}(z)=R_{-}^{(1)}(z)+J_R^{(1)}(z), \qquad R_{+}^{(2)}(z)=R_{-}^{(2)}(z)+R_{-}^{(1)}(z) J_R^{(1)}(z)+J_R^{(2)}(z),
\end{equation}
for $z\in\partial\mathcal{D}(z_+)\cup\partial\mathcal{D}(z_-)$. This, together with the facts that $R^{(1)}(z)=O\left(z^{-1}\right)$ and $R^{(2)}(z)=O\left(z^{-1}\right)$ as $z\to\infty$, implies that
\begin{equation}\label{R(1)-expression}
R^{(1)}(z)=\frac{1}{2\pi i}\int_{\partial\mathcal{D}(z_+)\cup\partial\mathcal{D}(z_-)}
\frac{J_R^{(1)}(\xi)}{\xi-z}\mathrm{d}\xi,
\end{equation}
and
\begin{equation}\label{R(2)-expression}
R^{(2)}(z)=\frac{1}{2 \pi i} \int_{\partial \mathcal{D}(z_+)\cup\partial\mathcal{D}(z_-)}\frac{R_{-}^{(1)}(\xi) J_R^{(1)}(\xi)+J_R^{(2)}(\xi)}{\xi-z}\mathrm{d}\xi.
\end{equation}
From the definition of $J_R^{(1)}$ given in \eqref{Rjump12}, we obtain from \eqref{zeta}, \eqref{R(1)-expression} and the residue theorem that
\begin{equation}\label{R(1)-expression-new}
R^{(1)}(z)=\frac{A^{(1)}}{z-z_+}-\frac{\sigma_1A^{(1)}\sigma_1}{z-z_-},\qquad z\in\mathbb{C}\setminus \left(\mathcal{D}(z_+)\cup\mathcal{D}(z_-)\right),
\end{equation}
where
\begin{align}\label{A(1)-expression}
A^{(1)}=\Res\limits_{z=z_+}J_R^{(1)}(z)=\begin{pmatrix} 0 & -\frac{\nu\rho{t}^{\nu}\zeta'(z_+)^{2\nu-1}}
{h_1\mathrm{e}^{2tg(z_+)}(2z_+)^{-2\nu}} \\ -\frac{h_1\mathrm{e}^{2tg(z_+)}(2z_+)^{-2\nu}}
{\rho{t}^{\nu}\zeta'(z_+)^{2\nu+1}} & 0 \end{pmatrix}.
\end{align}

To calculate $R^{(2)}$, it is readily seen from \eqref{Rjump12} and \eqref{R(2)-expression} that
\begin{equation}\label{R(2)-expression-new}
R^{(2)}(z)=\frac{A^{(2)}}{z-z_+}+\frac{B^{(2)}}{\left(z-z_+\right)^2}
-\frac{\sigma_1A^{(2)}\sigma_1}{z-z_-}+\frac{\sigma_1B^{(2)}\sigma_1}{\left(z-z_-\right)^2}, \qquad z\in\mathbb{C}\setminus \left(\mathcal{D}(z_+)\cup\mathcal{D}(z_-)\right),
\end{equation}
where
\begin{equation}\label{AB(2)-expression}
A^{(2)}=R^{(1)}_-\left(z_+\right) A^{(1)}+\Res\limits_{z=z_+}J_R^{(2)}(z),\qquad
B^{(2)}=\Res\limits_{z=z_+}\left((z-z_+) J_R^{(2)}(z)\right).
\end{equation}
From $J_R^{(2)}$ in \eqref{Rjump12}, it is easy to compute that
\begin{align}
&\Res\limits_{z=z_+}J_R^{(2)}(z)=-\frac{\zeta''(z_+)}{\zeta'(z_+)^3}\begin{pmatrix} \frac{\nu(\nu+1)}{2} & 0 \\ 0 & -\frac{\nu(\nu-1)}{2}\end{pmatrix},\label{residue-1}\\
&\Res\limits_{z=z_+}\left(\left(z-z_+\right) J_R^{(2)}(z)\right)=\frac{1}{\zeta'(z_+)^2}\begin{pmatrix} \frac{\nu(\nu+1)}{2} & 0 \\ 0 & -\frac{\nu(\nu-1)}{2}\end{pmatrix}.\label{residue-2}
\end{align}
It remains to compute $R^{(1)}_-\left(z_+\right)$. In view of \eqref{eq:Rijumps}, it follows from \eqref{R(1)-expression-new} that
\begin{align}\label{eq:R1z+}
&R^{(1)}_-\left(z_+\right)=-\frac{\sigma_1A^{(1)}\sigma_1}{z_+-z_-}
-\lim\limits_{z\to{z}_+}\left(J_R^{(1)}(z)-\frac{A^{(1)}}{z-z_+}\right)
=-\frac{\sigma_1A^{(1)}\sigma_1}{2z_+}\nonumber\\
&-\begin{pmatrix} 0 & -\frac{\nu\rho{t}^{\nu}\zeta'(z_+)^{2\nu-1}}
{h_1\mathrm{e}^{2tg(z_+)}(2z_+)^{-2\nu}}
\left(\frac{\nu}{z_+}+\frac{(2\nu-1)\zeta''(z_+)}{2\zeta'(z_+)}\right) \\ \frac{h_1\mathrm{e}^{2tg(z_+)}(2z_+)^{-2\nu}}
{\rho{t}^{\nu}\zeta'(z_+)^{2\nu+1}}
\left(\frac{\nu}{z_+}+\frac{(2\nu+1)\zeta''(z_+)}{2\zeta'(z_+)}\right) & 0 \end{pmatrix}.
\end{align}
Substituting \eqref{residue-1}--\eqref{eq:R1z+} into \eqref{AB(2)-expression}, we finally have that $A^{(2)}$ and $B^{(2)}$ in \eqref{R(2)-expression-new} are all diagonal matrices with
\begin{align}\label{B(2)-expression}
&B^{(2)}_{11}=\frac{\nu(\nu+1)}{2\zeta'(z_+)^2},\qquad
B^{(2)}_{22}=-\frac{\nu(\nu-1)}{2\zeta'(z_+)^2},\\
&A^{(2)}_{11}=-\frac{\nu^2}{z_+\zeta'(z_+)^2}-\frac{3\nu^2\zeta''(z_+)}{2\zeta'(z_+)^3}
-\frac{h_1^2\mathrm{e}^{4tg(z_+)}(2z_+)^{-4\nu-1}}{\rho^2t^{2\nu}\zeta'(z_+)^{4\nu+2}},
\label{A(2)-11-expression}\\
&A^{(2)}_{22}=\frac{\nu^2}{z_+\zeta'(z_+)^2}+\frac{3\nu^2\zeta''(z_+)}{2\zeta'(z_+)^3}
-\frac{\nu^2\rho^2\mathrm{e}^{-4tg(z_+)}(2z_+)^{4\nu-1}}{h_1^2t^{-2\nu}\zeta'(z_+)^{-4\nu+2}}.
\label{A(2)-22-expression}
\end{align}

\section{Asymptotic analysis of the RH problem for $\Psi$ with $\rho>1$}
\label{RHanalysis2}
In this section, we will perform  asymptotic analysis of the RH problem for $\Psi$ as $x\to -\infty$ with the parameter $\rho>1$.
As in the case of $0<\rho<1$, it consists of a series of explicit, invertible transformations, and the first few transformations $\Psi\to{X}\to{Y}\to{T}$ are exactly the same as  those defined in  \eqref{rescaling}, \eqref{Y-transformation} and \eqref{T-transformation}, respectively. The differences lie in the constructions of the global and the local parametrices. Indeed, by ignoring the exponentially small terms of $J_T$ shown in Figure \ref{GammaT}, we are led to the same RH problem for ${P}^{(\infty)}$ encountered in Section \ref{subsec:global1}. Since $\rho>1$, a solution is then given by \eqref{Pinfty} but with $\nu$ therein replaced by
\begin{equation}\label{nu0}
\nu
=-\frac{1}{2}-\frac{1}{2\pi{i}}\ln(\rho^2-1)
:=-\frac{1}{2}+\nu_0.
\end{equation}
It is clear that $\nu_0\in{i}\mathbb{R}$. If we continue to construct the local parametrix $P^{(+)}$ as in \eqref{P+} and \eqref{E-expression}, it comes out $P^{(+)}(z)P^{(\infty)}(z)^{-1}$ does not tend to $I$ as $t\to+\infty$ for $z\in \partial\mathcal{D}(z_+)$; see \eqref{P+Pinfty} with $\nu$ given by \eqref{nu0}. This means that the matching condition \eqref{matching} is not valid anymore. To overcome this difficulty, we need to make modifications of the global and the local parametrices.

\subsection{Global parametrix}
We modify the RH problem for $P^{(\infty)}$ by imposing specified singular behaviors at $z=z_{\pm}$, as stated below.
\subsection*{RH problem for $\widetilde{P}^{(\infty)}$}
\begin{description}
  \item{(1)} $\widetilde{P}^{(\infty)}(z)$ is analytic for $z\in \mathbb{C}\setminus [z_{-},z_{+}]$.
  \item{(2)} $\widetilde{P}^{(\infty)}(z)$ satisfies the jump condition
  \begin{equation}\label{Pinftyjump-m}
  \widetilde{P}^{(\infty)}_{+}(z)=\widetilde{P}^{(\infty)}_{-}(z)
  \begin{pmatrix}1-\rho^2 & 0 \\ 0 & (1-\rho^2)^{-1}\end{pmatrix},\qquad z\in(z_-,z_+).
  \end{equation}
  \item{(3)} As $z\rightarrow \infty$, we have $\widetilde{P}^{(\infty)}(z)=I+O(z^{-1})$.
  \item{(4)} $\widetilde{P}^{(\infty)}(z)$ behaves like $O((z-z_{\pm})^{-\frac{3}{2}})$ as $z \to z_{\pm}$.
\end{description}

We build a solution of this RH problem by defining
\begin{equation}\label{Pinftytilde}
\widetilde{P}^{(\infty)}(z)={V}(z)
\left(\frac{z-z_-}{z-z_+}\right)^{(\nu_0-\frac{1}{2})\sigma_3},
\arg(z-z_\pm)\in(-\pi,\pi),
\end{equation}
where
$\nu_0$ is a pure imaginary parameter given by \eqref{nu0} and
\begin{equation}\label{V-expression}
{V}(z)=I+\frac{A}{z-z_-}+\frac{B}{z-z_+}
\end{equation}
is a rational function in $z$ with the matrices $A=A(x)$ and $B=B(x)$ to be determined later.  The function $V$ also satisfies $\det{V}=1$ and the symmetry relation
 \begin{equation}\label{V-symmetry}
\sigma_1{V}(-z)\sigma_1=V(z),
\end{equation}
which follows from \eqref{T-symmetry}.   The asymptotic behaviors of $\widetilde{P}^{(\infty)}(z)$ as $z \to z_{\pm}$ given in condition (4) of the above RH problem then follows from \eqref{Pinftytilde} and \eqref{V-expression}. 

We introduce the function $V$ in \eqref{Pinftytilde} in order to fulfill the matching conditions \eqref{matching-m} below. Similar techniques have been applied in \cite{XXZ2021} to study singular asymptotics of  the Painlev\'{e} IV transcendents, and in \cite{ZXZ2011,ZZ2008} to derive uniform asymptotic approximations of some orthogonal polynomials.

\subsection{Local parametrices at $z_{\pm}$}
In the neighborhoods $\mathcal{D}(z_{\pm})$ of two saddle points $z_{\pm}$, we look for two local parametrices $\widetilde{P}^{(\pm)}$ that satisfy items (1) and (2) of the RH problems for $P^{(\pm)}$, and the  matching conditions
\begin{equation}\label{matching-m}
\widetilde{P}^{(\pm)}(z)=\left(I+O(t^{-1})\right)\widetilde{P}^{(\infty)}(z),   \qquad  t=|x|^{\frac{2n+1}{2n}}\to+\infty,
\end{equation}
for $z \in \partial\mathcal{D}(z_{\pm})$, where $\widetilde{P}^{(\infty)}$ is given in \eqref{Pinftytilde}.
%
%

Similar to the construction of ${P}^{(+)}$ in \eqref{P+}, the parametrix $\widetilde{P}^{(+)}$ is defined by
\begin{equation}\label{P+tilde}
\widetilde{P}^{(+)}(z)=\widetilde{E}(z)\Phi_{(\mathrm{PC})}\left(\sqrt{t}\zeta(z)\right)
\left(\frac{h_1}{\rho}\right)^{\frac{\sigma_3}{2}}\sigma_3\mathrm{e}^{tg(z)\sigma_3},
\end{equation}
where $\Phi_{(\mathrm{PC})}$ is the parabolic cylinder parametrix with parameter $\nu$ given by \eqref{nu0},
$\zeta(z)$ is the conformal mapping \eqref{zeta}, $h_1=-i\sqrt{2\pi}\mathrm{e}^{i\pi\nu_0}/\Gamma(\frac{1}{2}-\nu_0)$ and
\begin{equation}\label{Etilde-expression}
\widetilde{E}(z)=\widetilde{P}^{(\infty)}(z)\sigma_3
\left(\frac{\rho}{h_1}\right)^{\frac{\sigma_3}{2}}
\left(\sqrt{t}\zeta(z)\right)^{\nu\sigma_3}\mathrm{e}^{-tg(z_+)\sigma_3}
\begin{pmatrix}
1 & 0 \\ -\frac{1}{\sqrt{t}\zeta(z)} & 1
\end{pmatrix}.
\end{equation}

By \eqref{Pinftyjump-m}, one can check directly that $\widetilde{E}$ is analytic in the punctured disc $\mathcal{D}(z_{+})\setminus\{z_+\}$.
To make sure that $z_+$ is a removable singular point of $\widetilde{E}$, we need to choose appropriate matrices $A$ and $B$ in \eqref{V-expression}. Using \eqref{Pinfty}, \eqref{zeta}, \eqref{V-expression} and \eqref{Etilde-expression}, it is readily seen that
\begin{equation}\label{Etilde-expansion}
\widetilde{E}(z)=\left(\frac{B}{z-z_+}+I+\frac{A}{2z_+}+O(z-z_+)\right)\left[I+
\begin{pmatrix} 0 & 0 \\ \frac{2z_+c}{z-z_+} & 0\end{pmatrix}\right]K(z),\quad z \to z_+,
\end{equation}
where $K(z)$ is analytic at $z_+$ and
\begin{align}\label{c}
c&=c(x)=\frac{h_1\mathrm{e}^{2g(z_+)t}}
{\rho(2z_+)^{2\nu_0}\zeta'(z_+)^{2\nu_0}t^{\nu_0}}\nonumber\\
&=\exp\left\{2g(z_+)t-\nu_0\ln\left[4iz_+^2\zeta'(z_+)^2t\right]
-i\arg\Gamma\left(\frac{1}{2}-\nu_0\right)-\frac{i\pi}{2}\right\}
\end{align}
with $\nu_0\in i\mathbb{R}$ given in \eqref{nu0}. On one hand, the analyticity of $\widetilde{E}$ at $z_+$ implies that the matrices ${A}$ and ${B}$ should satisfy
\begin{equation}\label{ABrelation1}
\left(I+\frac{A}{2z_+}\right)\begin{pmatrix} 0 & 0 \\ 2z_+c & 0\end{pmatrix}+B=0.
\end{equation}
On the other hand, the symmetry relation \eqref{V-symmetry} gives us
\begin{equation}\label{ABrelation2}
\sigma_1A\sigma_1=-B.
\end{equation}
Note that the equation \eqref{ABrelation1} also shows that $B \begin{pmatrix} 0 & 0 \\ \frac{2z_+c}{z-z_+} & 0\end{pmatrix}=0$. One can solve the equations \eqref{ABrelation1} and \eqref{ABrelation2} explicitly, and the solutions are given by
\begin{equation}\label{ABexpression}
{A}=\begin{pmatrix} 0 & \frac{2z_+c}{1-c^2}\\[.1cm]     0 & \frac{2z_+c^2}{1-c^2} \end{pmatrix},\qquad
{B}=\begin{pmatrix} -\frac{2z_+c^2}{1-c^2} & 0\\[.1cm]  -\frac{2z_+c}{1-c^2} & 0 \end{pmatrix}.
\end{equation}
This in turn shows that the determinant of $V$ in \eqref{V-expression} is indeed equal to 1. Moreover, it is readily seen from \eqref{Pinftytilde}, \eqref{V-expression}, \eqref{ABexpression} and \eqref{Pinfty-expansion} that
\begin{equation}\label{Pinftyatinfty}
\widetilde{P}^{(\infty)}(z)=I+\frac{\widetilde{P}_1^{(\infty)}}{z}
+O(z^{-2}), \qquad z\to \infty,
\end{equation}
where
\begin{equation}\label{eq:tildeP1infty}
\widetilde{P}_1^{(\infty)}=\begin{pmatrix}
2z_+\nu-\frac{2z_+c^2}{1-c^2} & \frac{2z_+c}{1-c^2}\\
-\frac{2z_+c}{1-c^2} & -2z_+\nu+\frac{2z_+c^2}{1-c^2}
\end{pmatrix}
\end{equation}
with $\nu$ and $c$ being given in \eqref{nu0} and \eqref{c}, respectively.

Since now $\widetilde{E}$ is analytic in $\mathcal{D}(z_+)$, it then follows from straightforward calculations that, as what we have done in Section \ref{subsec:localpara}, $\widetilde{P}^{(+)}$ defined in \eqref{P+tilde} is analytic in $\mathcal{D}(z_{+}) \setminus (\cup_{k=1,4, 5, 6,7} \Sigma_{k})$, satisfies the jump condition \eqref{eq:P+jump} and the matching condition \eqref{matching-m}. In view of the symmetry relation \eqref{T-symmetry}, we define, as in \eqref{P-},
the local parametrix $\widetilde{P}^{(-)}$ by
\begin{equation}\label{P-tilde}
\widetilde{P}^{(-)}(z)=\sigma_1\widetilde{P}^{(+)}(-z)\sigma_1,\qquad z\in\mathcal{D}(z_{-}).
\end{equation}

%


We conclude this subsection by noting  that \eqref{ABexpression} only holds for all sufficiently large $|x|$ lying outside the zero set of the function $1-c^2(x)$,
which consists of a sequence of points $\{x_m\}_{m\in \mathbb{N}}$ on the negative real axis determined by the equation
\begin{multline}\label{poles-equation}
-2ig(z_+)|x_m|^{\frac{2n+1}{2n}}
+\frac{2n+1}{2n}i\nu_0\ln|x_m|+i\nu_0\ln\left[4iz_+^2\zeta'(z_+)^2\right]
\\
-\arg\Gamma\left(\frac{1}{2}-\nu_0\right)-\frac{\pi}{2}=-m\pi.
\end{multline}
In view of \eqref{g-expression}, \eqref{zeta'behavior} and \eqref{nu0},
the  quantities $ig(z_+)$,  $i\zeta'(z_+)^2$ and $i\nu_0$ in the above equation are all real.
As we will see later, the functions $q_{n}((-1)^{n+1}x;\rho)$ possess a sequence of simple poles $\{p_m\}$, such that $p_m\sim x_m$ as $m\to\infty$; see \eqref{poles} in Corollary \ref{cor}.


\subsection{Small norm RH problem}
As in \eqref{Final-transformation}, the final transformation is defined as
\begin{equation}\label{Final-transformation-m}
\widetilde{R}(z)=\left\{\begin{aligned}
&{T}(z)\widetilde{P}^{(\pm)}(z)^{-1},\quad &z&\in \mathcal{D}(z_{\pm})\setminus(\cup_1^7\Sigma_k),\\
&{T}(z)\widetilde{P}^{(\infty)}(z)^{-1},\quad  &\mathrm{e}&\mathrm{lsewhere}. \end{aligned}
\right.
\end{equation}
It is then readily seen that $\widetilde{R}$ solves the following RH problem.

\subsection*{RH problem for $\widetilde{R}$}
\begin{description}
\item{(1)} $\widetilde{R}(z)$ is analytic for $z\in \mathbb{C}\setminus\Sigma_{R}$, where the contours $\Sigma_{R}$ is defined in \eqref{def:SigmaR}; see also Figure \ref{GammaR} for an illustration.
\item{(2)} $\widetilde{R}(z)$ satisfies the jump condition $$\widetilde{R}_+(z)=\widetilde{R}_-(z)J_{\widetilde{R}}(z),$$
  where
 \begin{equation}\label{Jump-tilde-R}
 J_{\widetilde{R}}(z)=\left\{\begin{aligned}
&\widetilde{P}^{(\pm)}(z)\widetilde{P}^{(\infty)}(z)^{-1},\quad &z&\in \partial\mathcal{D}(z_{\pm}),\\
&\widetilde{P}^{(\infty)}(z)J_{T}(z)\widetilde{P}^{(\infty)}(z)^{-1},\quad &z& \in \Sigma_R\setminus (\partial \mathcal{D}(z_+) \cup \partial \mathcal{D}(z_-)).\\ \end{aligned}
\right.
 \end{equation}
\item{(3)} As $z\rightarrow\infty$, we have
\begin{equation}\label{Rexpan}
\widetilde{R}(z)=I+\frac{\widetilde{R}_1}{z}+O\left(z^{-2}\right),
\end{equation}
where $\widetilde R_1$ is independent of $z$.
\end{description}

In view of the matching conditions \eqref{matching-m}, we have that, as discussed in Section \ref{subsec:final},
\begin{equation}\label{JRestimation-m}
J_{\widetilde{R}}(z)=\left\{\begin{aligned}
&I+O\left(t^{-1}\right), \quad &z&\in\partial \mathcal{D}(z_{\pm}),\\
&I+O\left(\mathrm{e}^{-c_2t}\right), \quad &z& \in \Sigma_R\setminus (\partial \mathcal{D}(z_+) \cup \partial \mathcal{D}(z_-)),
\end{aligned}\right.
\end{equation}
as $t=|x|^{\frac{2n+1}{2n}}\to +\infty$, where $c_2>0$ is some constant. As a consequence, it follows that
\begin{equation}\label{Rtilde-estimation}
\widetilde{R}(z)=I+O\left(t^{-1}\right),\qquad  t \to +\infty,
\end{equation}
uniformly for $z\in \mathbb{C}\setminus\Sigma_{R}$.


\section{Proofs of main results}
\label{proof-largegap}
In this section, we prove our main results, namely, Theorems \ref{q-asymptotic}, \ref{thm:largegap},\ref{thm1} and Corollary \ref{cor}. The proofs rely on the identities \eqref{q-expression}, \eqref{Fred-expression-to-x} and \eqref{Fred-expression-to-rho}, and are outcomes of asymptotic analysis of the RH problem for $\Psi$ given in the previous two sections. Below, it is understood that $0<\rho<1$ in Sections \ref{subsec:proof1} and \ref{subsec:proof2}, and $\rho>1$ in Sections \ref{proof1} and \ref{proof:cor}.

\subsection{Proof of Theorem \ref{q-asymptotic} and partial proof of Theorem \ref{thm:largegap}}\label{subsec:proof1}
By inverting the transformations $\Psi\to{X}\to{Y}\to{T}\to{R}$ given in \eqref{rescaling},  \eqref{Y-transformation}, \eqref{T-transformation} and \eqref{Final-transformation}, it follows that
\begin{equation}
\Psi\left(|x|^{\frac{1}{2n}}z\right)
=R(z)P^{(\infty)}(z),\qquad z\in\mathbb{C}\setminus (\cup_{k=1}^7\Sigma_i \cup  \mathcal{D}(z_+)\cup  \mathcal{D}(z_-)).
\end{equation}
Let $z\to \infty$ in the above formula, we obtain from \eqref{q-expression}, \eqref{Fred-expression-to-x} and the large-$z$ asymptotics of $P^{(\infty)}$ and $R$ given in \eqref{Pinfty-expansion} and \eqref{R-expansion} that
\begin{equation}\label{q0}
q_n\left((-1)^{n+1}x;\rho\right)=2i|x|^{\frac{1}{2n}}(R_1)_{12},
\end{equation}
and
\begin{equation}\label{F_x}
\partial_x\ln{F}_n(x;\rho)=2i|x|^{\frac{1}{2n}}\left(2z_+\nu+(R_1)_{11}\right),
\end{equation}
where $R_1$ is shown in \eqref{R-expansion}. This, together with \eqref{R-estimation}, \eqref{R(1)-expression-new} and \eqref{R(2)-expression-new}, implies that, as $x\to-\infty$,
\begin{equation}\label{q0-1}
q_n\left((-1)^{n+1}x;\rho\right)=2i|x|^{-\frac{2n-1}{4n}}
\left(A^{(1)}_{12}-A^{(1)}_{21}\right)+O\left(|x|^{-1}\right),
\end{equation}
and
\begin{equation}\label{F_x-1}
\partial_x\ln{F}_n(x;\rho)=2i|x|^{\frac{1}{2n}}
\left[2z_+\nu+\left(A^{(2)}_{11}-A^{(2)}_{22}\right)|x|^{-\frac{2n+1}{2n}}
+O\left(|x|^{-\frac{3(2n+1)}{4n}}\right)\right].
\end{equation}
With the aid of the explicit formulas of $A^{(1)}$, $A^{(2)}_{11}$ and $A^{(2)}_{22}$ given in \eqref{A(1)-expression}, \eqref{A(2)-11-expression} and \eqref{A(2)-22-expression}, it follows that
\begin{align}\label{q0-2}
q_n\left((-1)^{n+1}x;\rho\right)&=\frac{2i}{|x|^{\frac{2n-1}{4n}}}\Bigg(
\frac{h_1(2z_+)^{-2\nu}\mathrm{e}^{2g(z_+)t}}
{\rho{t}^{\nu}\zeta'(z_+)^{2\nu+1}}-\frac{\nu\rho(2z_+)^{2\nu}\zeta'(z_+)^{2\nu-1}}
{h_1t^{-\nu}\mathrm{e}^{2g(z_+)t}}
\Bigg)+O\left(|x|^{-1}\right),
\end{align}
and
\begin{align}\label{F_x-2}
\partial_x\ln{F}_n(x;\rho)=2i|x|^{\frac{1}{2n}}
\Bigg[&2z_+\nu+\Bigg(
\frac{\nu^2\rho^2\mathrm{e}^{-4g(z_+)t}(2z_+)^{4\nu-1}}{h_1^2t^{-2\nu}\zeta'(z_+)^{-4\nu+2}}
-\frac{h_1^2\mathrm{e}^{4g(z_+)t}(2z_+)^{-4\nu-1}}{\rho^2t^{2\nu}\zeta'(z_+)^{4\nu+2}}
\nonumber\\
&\qquad-\frac{2\nu^2}{z_+\zeta'(z_+)^2}-\frac{3\nu^2\zeta''(z_+)}{\zeta'(z_+)^3}
\Bigg)|x|^{-\frac{2n+1}{2n}}+O\left(|x|^{-\frac{3(2n+1)}{4n}}\right)\Bigg],
\end{align}
as $x\to-\infty$.

We next insert $\zeta'(z_+)$ and $\zeta''(z_+)$ in \eqref{zeta'behavior} and \eqref{zeta''behavior} into \eqref{q0-2} and \eqref{F_x-2}, and obtain that,
\begin{align}\label{q0-3}
q_n\left((-1)^{n+1}x;\rho\right)=\frac{|x|^{-\frac{2n-1}{4n}}}{\mathrm{e}^{\frac{\pi i}{4}}\sqrt{2n}}\Bigg(\frac{h_1\mathrm{e}^{2g(z_+)t}}
{\rho\,\mathrm{e}^{\frac{3\pi i\nu}{2}}(8nt)^{\nu}}
-\frac{\nu\rho\,\mathrm{e}^{\frac{3\pi i\nu}{2}}(8nt)^{\nu}}
{h_1\mathrm{e}^{2g(z_+)t}}
\Bigg)&+O\left(|x|^{-\frac{2n+3}{4n}}\right),
\end{align}
and
\begin{multline}\label{F_x-3}
\partial_x\ln{F}_n(x;\rho)=4i\nu z_+|x|^{\frac{1}{2n}}+\frac{2n+1}{2n}\nu^2|x|^{-1}
-\frac{1}{4n}\Bigg(
\frac{\nu^2\rho^2(8nt)^{2\nu}\mathrm{e}^{3\pi i\nu}}{h_1^2\mathrm{e}^{4g(z_+)t}}
-\frac{h_1^2\mathrm{e}^{4g(z_+)t}}{\rho^2\mathrm{e}^{3\pi i\nu}(8nt)^{2\nu}}\Bigg)|x|^{-1}
\\
+O\left(|x|^{-1-\frac{1}{n}}\right).
\end{multline}
To proceed, note that $\nu=-\frac{1}{2\pi i}\ln(1-\rho^2)\in{i}\mathbb{R}$, it then follows from the reflection formula of Gamma function that
\begin{equation}\label{Gamma(nu)}
|\Gamma(\nu)|^2=\Gamma(\nu)\Gamma(-\nu)=-\frac{\pi}{\nu\sin(\nu\pi)}
=\frac{2\pi}{i\nu\rho^2}\mathrm{e}^{-i\pi\nu}.
\end{equation}
By further substituting $h_1=\sqrt{2\pi}\mathrm{e}^{i\pi\nu}/\Gamma(-\nu)$ and the expansion of $z_+$ shown in \eqref{z+-expansion} into \eqref{q0-3} and \eqref{F_x-3}, it is readily seen from \eqref{Gamma(nu)} and the fact $t=|x|^{\frac{2n+1}{2n}}$ that,
\begin{equation}\label{q0-4}
q_n\left((-1)^{n+1}x;\rho\right)=\frac{\sqrt{-\ln(1-\rho^2)}}{\sqrt{n\pi}|x|^{\frac{2n-1}{4n}}}
\cos\left(\psi(x,\rho)+\frac{\pi}{4}\right)
+O\left(|x|^{-\frac{2n+3}{4n}}\right),
\end{equation}
and
\begin{align}\label{eq: Fx-4}
\partial_x\ln{F}_n(x;\rho)
=2i\nu\sum_{k=0}^{n}a_k|x|^{\frac{1-2k}{2n}}-\frac{2n+1}{2nx}\nu^2
-\frac{i\nu}{2nx}\cos\left(2\psi(x,\rho)\right)+O\left(|x|^{-1-\frac{1}{2n}}\right),
\end{align}
where $a_k$, $k=0,1,\ldots,n$, are given by \eqref{ak-expression} and
\begin{equation}\label{psi}
\psi(x,\rho)=2ig(z_+)|x|^{\frac{2n+1}{2n}}-i\nu\ln\left(8n|x|^{\frac{2n+1}{2n}}\right)
+\arg\Gamma(-\nu).
\end{equation}

Finally, by noting that $\beta=2\nu i$ (see \eqref{beta} and \eqref{nu}), a combination of \eqref{eq:gexpansion}, \eqref{q0-4} and  \eqref{psi} gives us \eqref{qAymp-1} and \eqref{beta}. This completes the proof of Theorem \ref{q-asymptotic}. Integrating both sides of \eqref{eq: Fx-4}, it follows that
\begin{align}\label{Fn-asymp}
\ln{F}_n(x;\rho)=
-2i\nu\sum_{k=0}^{n}\frac{2na_k}{1+2(n-k)}&|x|^{\frac{1+2(n-k)}{2n}}
-\frac{2n+1}{2n}\nu^2\ln|x|+\mathcal{C}+O\left(|x|^{-\frac{1}{2n}}\right),
\end{align}
where $\mathcal{C}$ is a constant of integration that is independent of $x$. We evaluate this constant in the next section, which will lead to a completed proof of Theorem \ref{thm:largegap}.


\subsection{Proof of Theorem \ref{thm:largegap}: completion}\label{subsec:proof2}
To evaluate the constant $\mathcal{C}$ in \eqref{Fn-asymp}, we work on the differential identity of $\ln{F}_n$ with respect to $\rho$ given in \eqref{Fred-expression-to-rho}; see also \cite{CC2021} for a similar strategy. Since the arguments are rather involved, they are split into a series of lemmas.

We start with the following decomposition of $\partial_{\rho}\ln F_n(x;\rho)$ as $t=|x|^{\frac {2n+1}{2n}}\to+\infty$.
\begin{lemma}\label{Fred-estimate}
As $t\to+\infty$, we have
\begin{equation}\label{eq:dFI}
\partial_{\rho}\ln F_n(x;\rho)=I_1+I_2+2I_{z_+}+O\left(\mathrm{e}^{-c_3t}\right),
\end{equation}
where $c_3>0$ is a constant, and
\begin{align}\label{I-1}
I_1&=\frac{4t\rho}{1-\rho^2}\int_{[z_-,z_+]}g'(z)\frac{\mathrm{d}z}{2\pi i}
=\frac{4\rho{g}(z_+)}{(1-\rho^2)\pi{i}}t,\\
I_2&=-\frac{4\rho}{1-\rho^2}\int_{[0,z_+]}\mathrm{Tr}
\left[T^{-1}(z)T'(z)\sigma_3\right]\frac{\mathrm{d}z}{2\pi i},\label{I-2}\\
I_{z_+}&=I_{z_+,1}+I_{z_+,2}+I_{z_+,3}+I_{z_+,4}, \label{I-z+}
\end{align}
with
\begin{align}
I_{z_+,1}&=\int_{\Sigma_1\cap \mathcal{D}(z_+)}\mathrm{e}^{2tg(z)}
\mathrm{Tr}\left[T^{-1}(z)T'(z)\sigma_-\right]\frac{\mathrm{d}z}{2\pi i}, \label{eq:z+1}\\
I_{z_+,2}&= -\int_{\Sigma_4\cap \mathcal{D}(z_+)}\mathrm{e}^{-2tg(z)}
\mathrm{Tr}\left[T^{-1}(z)T'(z)\sigma_+\right]\frac{\mathrm{d}z}{2\pi i},\label{eq:z+2}\\
I_{z_+,3}&=\frac{1+\rho^2}{(1-\rho^2)^2}
\int_{\Sigma_7\cap \mathcal{D}(z_+)}\mathrm{e}^{2tg(z)}
\mathrm{Tr}\left[T^{-1}(z)T'(z)\sigma_-\right]\frac{\mathrm{d}z}{2\pi i},\label{eq:z+3}\\
I_{z_+,4}&=-\frac{1+\rho^2}{(1-\rho^2)^2}
\int_{\Sigma_6\cap \mathcal{D}(z_+)}\mathrm{e}^{-2tg(z)}
\mathrm{Tr}\left[T^{-1}(z)T'(z)\sigma_+\right]\frac{\mathrm{d}z}{2\pi i}.\label{eq:z+4}
\end{align}
\end{lemma}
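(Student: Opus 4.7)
The plan is to derive \eqref{eq:dFI} from the differential identity \eqref{Fred-expression-to-rho} by transporting the integral through the chain of transformations $\Psi\to X\to Y\to T$ and localizing at the saddle points. Observing first that $(J_\Psi-I)/\rho = e^{2\theta(z)}\sigma_-$ on $\Gamma_+$ and $e^{-2\theta(z)}\sigma_+$ on $\Gamma_-$ is itself $\rho$-independent, the identity \eqref{Fred-expression-to-rho} rewrites as a sum of two off-diagonal trace integrals; the rescaling $z\mapsto |x|^{1/(2n)}z$ together with \eqref{eq:gtheta} then converts it to the same structure with $X$, $g$, and $\gamma_\pm$ in place of $\Psi$, $\theta$, and $\Gamma_\pm$.

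Next I would deform $\gamma_+$ onto $\Sigma_2\cup[z_-,z_+]^+\cup\Sigma_1$ and $\gamma_-$ onto $\Sigma_3\cup[z_-,z_+]^-\cup\Sigma_4$. This is legitimate because the triangular structure of the jump matrices $H_{\mathrm{up/down}}$ from \eqref{Y-transformation} makes the relevant off-diagonal entries of $X^{-1}X'$ continuous (hence analytic) across $\gamma_\pm$: a direct computation gives $[X_+^{-1}X'_+]_{12}=[X_-^{-1}X'_-]_{12}$ and $[H_{\mathrm{up}}^{-1}H'_{\mathrm{up}}]_{12}=0$, with analogous identities for $\gamma_-$. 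On $\Sigma_1,\Sigma_2,\Sigma_3,\Sigma_4$ the subsequent $Y\to T$ transformation acts trivially, so $X=T$ there, and the inequalities \eqref{eq:ginequa} make $e^{\pm 2tg(z)}$ uniformly $O(e^{-c_3 t})$ outside any fixed neighborhood of $z_\pm$. The four integrals thus reduce, modulo exponentially small errors, to their restrictions to $\Sigma_i\cap\mathcal{D}(z_\pm)$; invoking the symmetry $\sigma_1 T(-z)\sigma_1=T(z)$ from \eqref{T-symmetry} together with $g(-z)=-g(z)$ and $\sigma_1\sigma_\mp\sigma_1=\sigma_\pm$, the contributions near $z_-$ equal those near $z_+$, producing precisely $2I_{z_+,1}+2I_{z_+,2}$.

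The contribution on both sides of $\Sigma_5$ is the main technical step. Here I would use $X=TH_{T,\mathrm{up/down}}^{-1}H_{\mathrm{up/down}}^{-1}$ (with $H_{T,\mathrm{up/down}}$ from \eqref{H-expression}) to re-express $[X^{-1}X']_{12}e^{2tg}$ and $[X^{-1}X']_{21}e^{-2tg}$ on $\Sigma_5^\pm$ in terms of $T$. After expanding by the conjugation identities and invoking the jump $T_+=T_-(1-\rho^2)^{\sigma_3}$ (which in particular implies continuity of $\mathrm{Tr}[T^{-1}T'\sigma_3]$ across $\Sigma_5$ and its evenness in $z$), the integrand splits into three pieces: a polynomial factor proportional to $tg'(z)$, arising from the derivatives $H_{T,\mathrm{up/down}}(H_{T,\mathrm{up/down}}^{-1})'$ applied at the local parametrix level, which after integration over $[z_-,z_+]$ produces $I_1$ via $\int g'=2g(z_+)$; a factor proportional to $\mathrm{Tr}[T^{-1}T'\sigma_3]$ which, after folding the symmetric interval onto $[0,z_+]$, produces $I_2$; and a residual off-diagonal combination of $[T^{-1}T']_{12}e^{2tg}$ and $[T^{-1}T']_{21}e^{-2tg}$ which, by analyticity of $T$ inside each half of the $T$-lens region, I would analytically continue from $\Sigma_5^\pm$ across the lens to $\Sigma_6\cup\Sigma_7$, where $\Re g$ has the favorable sign so that the integrand is again exponentially small away from $z_\pm$. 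The surviving neighborhoods $\Sigma_{6,7}\cap\mathcal{D}(z_\pm)$, combined via the $\sigma_1$-symmetry, yield $2I_{z_+,3}+2I_{z_+,4}$ with exactly the coefficient $\frac{1+\rho^2}{(1-\rho^2)^2}=\partial_\rho\!\bigl(\tfrac{\rho}{1-\rho^2}\bigr)$.

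The principal obstacle is this last stage: one must bookkeep the signs and jump factors carefully, verify that the combination of the bare and geometric-series terms on $\Sigma_5^\pm$ deforms across the $T$-lens without generating spurious residues, and pin down the coefficient $\frac{1+\rho^2}{(1-\rho^2)^2}$ that appears in \eqref{eq:z+3}--\eqref{eq:z+4}. The remaining contributions from $\Sigma_6,\Sigma_7$ outside $\mathcal{D}(z_\pm)$ are $O(e^{-c_3 t})$ by \eqref{eq:ginequa} applied to the lens contours, and summing everything yields \eqref{eq:dFI}.
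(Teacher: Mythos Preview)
Your plan is correct and follows essentially the same route as the paper: rewrite \eqref{Fred-expression-to-rho} via the rescaling $\Psi\to X$, deform $\gamma_\pm$ onto $\cup\Sigma_k$ using the triangular structure of the $X\to Y$ jumps, expand the $\Sigma_5$ contribution through the lens transformation to extract the $tg'$, $\mathrm{Tr}[T^{-1}T'\sigma_3]$ and residual off-diagonal pieces, and then push the latter to $\Sigma_6,\Sigma_7$ by analyticity of $T$ in the lens (the paper does this via the jump $T_+=T_-(1-\rho^2)^{\sigma_3}$ on $\Sigma_5$ followed by deformation, which is exactly your ``analytic continuation'' step and produces the coefficient $\tfrac{1+\rho^2}{(1-\rho^2)^2}$ as $\tfrac{1}{(1-\rho^2)^2}+\tfrac{\rho^2}{(1-\rho^2)^2}$). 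The only cosmetic difference is that the paper works in terms of $Y$ on $\Sigma_5$ and then passes to $T$, whereas you compose both transformations at once and write $X=TH_T^{-1}H^{-1}$; the bookkeeping you flag as the ``principal obstacle'' is precisely the content of the paper's displayed identities \eqref{eq:intSigma5a}--\eqref{eq:intSigma5b}.
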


\begin{proof}
Using the differential identity \eqref{Fred-expression-to-rho} and the transformation $\Psi\to{X}$ defined in \eqref{rescaling}, it follows that
\begin{equation}\label{eq:dF}
\partial_{\rho}\ln{F_n(x;\rho)}=\frac{1}{\rho}\int_{\Gamma_+\cup\Gamma_-}
\mathrm{Tr}\left[\Psi^{-1}(z)\Psi'(z)(J_{\Psi}(z)-I)\right]\frac{\mathrm{d}z}{2\pi i}:=I_{\gamma_+}+I_{\gamma_-},
\end{equation}
where
\begin{align}\label{I-gamma+0}
I_{\gamma_+}&=\int_{\gamma_+}\mathrm{e}^{2tg(z)}
\mathrm{Tr}\left[X^{-1}(z)X'(z)\sigma_-\right]\frac{\mathrm{d}z}{2\pi i},\\
I_{\gamma_-}&=\int_{\gamma_-}\mathrm{e}^{-2tg(z)}
\mathrm{Tr}\left[X^{-1}(z)X'(z)\sigma_+\right]\frac{\mathrm{d}z}{2\pi i},\label{I-gamma-0}
\end{align}
with
$$
\sigma_-=\begin{pmatrix} 0 & 0 \\ 1 & 0 \end{pmatrix}\quad \mathrm{and} \quad
\sigma_+=\begin{pmatrix} 0 & 1 \\ 0 & 0 \end{pmatrix}.
$$
According to the transformation $X \to Y$ in \eqref{Y-transformation}, we could deform the contours of integration in \eqref{I-gamma+0} and \eqref{I-gamma-0} to the rays $\Sigma_i$, $i=1,\ldots,5$, shown in Figure \ref{GammaS}, and obtain
\begin{align}\label{I-gamma+}
I_{\gamma_+}&=\int_{\Sigma_1\cup\Sigma_2}\mathrm{e}^{2tg(z)}
\mathrm{Tr}\left[Y^{-1}(z)Y'(z)\sigma_-\right]\frac{\mathrm{d}z}{2\pi i}+\int_{\Sigma_5}\mathrm{e}^{2tg(z)}
\mathrm{Tr}\left[Y_+^{-1}(z)Y'_+(z)\sigma_-\right]\frac{\mathrm{d}z}{2\pi i},\\
-I_{\gamma_-}&=\int_{\Sigma_3\cup\Sigma_4}\mathrm{e}^{-2tg(z)}
\mathrm{Tr}\left[Y^{-1}(z)Y'(z)\sigma_+\right]\frac{\mathrm{d}z}{2\pi i}+\int_{\Sigma_5}\mathrm{e}^{-2tg(z)}
\mathrm{Tr}\left[Y_-^{-1}(z)Y'_-(z)\sigma_+\right]\frac{\mathrm{d}z}{2\pi i}.\label{I-gamma-}
\end{align}

If $z\in\Sigma_5$, it follows from the transformation $Y \to T$ given in \eqref{T-transformation} and \eqref{H-expression} that
\begin{align*}
&\mathrm{Tr}\left[Y^{-1}_{\pm}(z)Y'_{\pm}(z)\sigma_{\mp}\right]=
\mathrm{Tr}\left[T^{-1}_{\pm}(z)T'_{\pm}(z)H^{-1}_{\pm}(z)\sigma_{\mp}H_{\pm}(z)\right]+
\mathrm{Tr}\left[H_{\pm}(z)(H_{\pm}^{-1})'(z)\sigma_{\mp}\right],\\
&\mathrm{e}^{\pm2tg(z)}\mathrm{Tr}\left[H_{\pm}(z)(H_{\pm}^{-1})'(z)\sigma_{\mp}\right]=
\pm\frac{2t\rho}{1-\rho^2}g'(z),\\
&\mathrm{e}^{2tg(z)}H_{+}^{-1}(z)\sigma_{-}H_{+}(z)=\begin{pmatrix} -\frac{\rho}{1-\rho^2} & -\frac{\rho^2}{(1-\rho^2)^2}\mathrm{e}^{-2tg(z)} \\
\mathrm{e}^{2tg(z)} & \frac{\rho}{1-\rho^2} \end{pmatrix},\\
&\mathrm{e}^{-2tg(z)}H_{-}^{-1}(z)\sigma_{+}H_{-}(z)=\begin{pmatrix} \frac{\rho}{1-\rho^2} & \mathrm{e}^{-2tg(z)} \\ -\frac{\rho^2}{(1-\rho^2)^2}\mathrm{e}^{2tg(z)}
 & -\frac{\rho}{1-\rho^2} \end{pmatrix},
\end{align*}
where $H$ is defined in \eqref{H-expression}. Thus, we have
\begin{align}\nonumber
&\int_{\Sigma_5}\mathrm{e}^{2tg(z)}
\mathrm{Tr}\left[Y_+^{-1}(z)Y'_+(z)\sigma_-\right]\frac{\mathrm{d}z}{2\pi i}\\ \nonumber
&=\frac{2t\rho}{1-\rho^2}\int_{\Sigma_5}g'(z)\frac{\mathrm{d}z}{2\pi i}-\frac{\rho}{1-\rho^2}\int_{\Sigma_5}\mathrm{Tr}\left[T^{-1}(z)T'(z)\sigma_3\right]
\frac{\mathrm{d}z}{2\pi i}\\ \label{eq:intSigma5a}
&~~~+\frac{1}{(1-\rho^2)^2}\int_{\Sigma_7}\mathrm{e}^{2tg(z)}
\mathrm{Tr}\left[T^{-1}(z)T'(z)\sigma_-\right]\frac{\mathrm{d}z}{2\pi i}\nonumber\\
&~~~-\frac{\rho^2}{(1-\rho^2)^2}\int_{\Sigma_6}\mathrm{e}^{-2tg(z)}
\mathrm{Tr}\left[T^{-1}(z)T'(z)\sigma_+\right]\frac{\mathrm{d}z}{2\pi i}
\end{align}
and
\begin{align} \nonumber
&\int_{\Sigma_5}\mathrm{e}^{-2tg(z)}
\mathrm{Tr}\left[Y_-^{-1}(z)Y'_-(z)\sigma_+\right]\frac{\mathrm{d}z}{2\pi i}\\  \nonumber
&=\frac{-2t\rho}{1-\rho^2}\int_{\Sigma_5}g'(z)\frac{\mathrm{d}z}{2\pi i}+\frac{\rho}{1-\rho^2}\int_{\Sigma_5}\mathrm{Tr}\left[T^{-1}(z)T'(z)\sigma_3\right]
\frac{\mathrm{d}z}{2\pi i} \\ \label{eq:intSigma5b}
&~~~+\frac{1}{(1-\rho^2)^2}\int_{\Sigma_6}\mathrm{e}^{-2tg(z)}
\mathrm{Tr}\left[T^{-1}(z)T'(z)\sigma_+\right]\frac{\mathrm{d}z}{2\pi i}\nonumber\\
&~~~-\frac{\rho^2}{(1-\rho^2)^2}\int_{\Sigma_7}\mathrm{e}^{2tg(z)}
\mathrm{Tr}\left[T^{-1}(z)T'(z)\sigma_-\right]\frac{\mathrm{d}z}{2\pi i}.
\end{align}
It is worth to mention that we have used the relations
\begin{align*}\int_{\Sigma_5}\mathrm{Tr}\left[T^{-1}_{+}(z)T'_{+}(z)\sigma_{-}\right]\mathrm{d}z
&=\frac{1}{(1-\rho^2)^2}\int_{\Sigma_{5}}\mathrm{Tr}\left[T^{-1}_{-}(z)T'_{-}(z)\sigma_{-}\right]\mathrm{d}z\\
&=\frac{1}{(1-\rho^2)^2}\int_{\Sigma_{7}}\mathrm{Tr}\left[T^{-1}(z)T'(z)\sigma_{-}\right]\mathrm{d}z
\end{align*}
and
\begin{align*}\int_{\Sigma_5}\mathrm{Tr}\left[T^{-1}_{-}(z)T'_{-}(z)\sigma_{+}\right]\mathrm{d}z
&=\frac{1}{(1-\rho^2)^2}\int_{\Sigma_{5}}\mathrm{Tr}\left[T^{-1}_{+}(z)T'_{+}(z)\sigma_{+}\right]\mathrm{d}z\\
&=\frac{1}{(1-\rho^2)^2}\int_{\Sigma_{6}}\mathrm{Tr}\left[T^{-1}(z)T'(z)\sigma_{+}\right]\mathrm{d}z
\end{align*}in \eqref{eq:intSigma5a} and \eqref{eq:intSigma5b}, respectively. Moreover,
we observe from the cyclic property of the trace and the structure of the jump matrix $J_T$ for $T$ that the integrals involving $T$ in  \eqref{eq:intSigma5a} and \eqref{eq:intSigma5b} have the same values when integrating along the positive or the negative side of the contours $\Sigma_k$, $k=5,6,7$. This justifies our
removal of the subscripts in $T$.


Note that $T(z)=Y(z)$ for $z\in \Sigma_k$, $k=1,\ldots,4$ (see \eqref{T-transformation} and \eqref{H-expression}),
by substituting \eqref{eq:intSigma5a} and \eqref{eq:intSigma5b} into \eqref{I-gamma+} and \eqref{I-gamma-}, we have
\begin{align*}
I_{\gamma_+}&=\int_{\Sigma_1\cup\Sigma_2}\mathrm{e}^{2tg(z)}
\mathrm{Tr}\left[T^{-1}(z)T'(z)\sigma_-\right]\frac{\mathrm{d}z}{2\pi i}\\
&~~~+\frac{2t\rho}{1-\rho^2}\int_{\Sigma_5}g'(z)\frac{\mathrm{d}z}{2\pi i}-\frac{\rho}{1-\rho^2}\int_{\Sigma_5}\mathrm{Tr}\left[T^{-1}(z)T'(z)\sigma_3\right]
\frac{\mathrm{d}z}{2\pi i}\\
&~~~+\frac{1}{(1-\rho^2)^2}\int_{\Sigma_7}\mathrm{e}^{2tg(z)}
\mathrm{Tr}\left[T^{-1}(z)T'(z)\sigma_-\right]\frac{\mathrm{d}z}{2\pi i}\\
&~~~-\frac{\rho^2}{(1-\rho^2)^2}\int_{\Sigma_6}\mathrm{e}^{-2tg(z)}
\mathrm{Tr}\left[T^{-1}(z)T'(z)\sigma_+\right]\frac{\mathrm{d}z}{2\pi i}
\end{align*}
and
\begin{align*}
I_{\gamma_-}&=-\int_{\Sigma_3\cup\Sigma_4}\mathrm{e}^{-2tg(z)}
\mathrm{Tr}\left[T^{-1}(z)T'(z)\sigma_+\right]\frac{\mathrm{d}z}{2\pi i}\\
&~~~+\frac{2t\rho}{1-\rho^2}\int_{\Sigma_5}g'(z)\frac{\mathrm{d}z}{2\pi i}-\frac{\rho}{1-\rho^2}\int_{\Sigma_5}\mathrm{Tr}\left[T^{-1}(z)T'(z)\sigma_3\right]
\frac{\mathrm{d}z}{2\pi i}\\
&~~~-\frac{1}{(1-\rho^2)^2}\int_{\Sigma_6}\mathrm{e}^{-2tg(z)}
\mathrm{Tr}\left[T^{-1}(z)T'(z)\sigma_+\right]\frac{\mathrm{d}z}{2\pi i}\\
&~~~+\frac{\rho^2}{(1-\rho^2)^2}\int_{\Sigma_7}\mathrm{e}^{2tg(z)}
\mathrm{Tr}\left[T^{-1}(z)T'(z)\sigma_-\right]\frac{\mathrm{d}z}{2\pi i}.
\end{align*}
This, together with the symmetry relation  \eqref{T-symmetry} and the fact that
$$
\Re g(z)\left\{
          \begin{array}{ll}
            >\varrho, & \hbox{$z\in \cup_{k=1,2,7}\Sigma_k\setminus \mathcal D(z_+),$} \\
            <\varrho, & \hbox{$z \in \cup_{k=3,4,6}\Sigma_k\setminus \mathcal D(z_-),$}
          \end{array}
        \right.
$$
for some $\varrho>0$, implies that
\begin{equation}\label{eq:Iest}
I_{\gamma_+}+I_{\gamma_-}=I_1+I_2+2I_{z_+}+O\left(\mathrm{e}^{-c_3t}\right) ~~\mathrm{as}~~t\to+\infty,
\end{equation}
where $I_1$, $I_2$ and $I_{z_+}$ are defined by \eqref{I-1}, \eqref{I-2} and \eqref{I-z+}--\eqref{eq:z+4}, respectively.
The equation \eqref{eq:dFI} then follows directly from \eqref{eq:dF} and \eqref{eq:Iest}.
\end{proof}

We next estimate the integrals $I_{z_+}$ and $I_2$ in  \eqref{eq:dFI}, respectively. For $I_{z_+}$, it suffices to give the estimates of $I_{z_+,k}$, $k=1,\ldots,4$, which will be done in the following lemma.
\begin{lemma}\label{I-z+-expre}
With the integrals $I_{z_+,k}$, $k=1,\ldots,4$, defined in \eqref{eq:z+1}--\eqref{eq:z+4}, we have, as $t\to+\infty$,
\begin{align}
I_{z_+,1}&=-\frac{\rho}{h_1}\int_{[0,-\infty)}
\mathrm{Tr}\left[\Phi_{(\mathrm{PC})}^{-1}(\eta)\Phi_{(\mathrm{PC})}'(\eta)
\sigma_-\right]\frac{\mathrm{d}\eta}{2\pi i}+O\left(t^{-\frac{1}{2}}\right),\label{eq:estIz+1}\\
I_{z_+,2}&=\frac{h_1}{\rho}\int_{i[0,+\infty)}
\mathrm{Tr}\left[\Phi_{(\mathrm{PC})}^{-1}(\eta)\Phi_{(\mathrm{PC})}'(\eta)
\sigma_+\right]\frac{\mathrm{d}\eta}{2\pi i}+O\left(t^{-\frac{1}{2}}\right),\label{eq:estIz+2}\\
I_{z_+,3}&=\frac{h_1(1+\rho^2)}{\rho(1-\rho^2)^2}\int_{i[0,-\infty)}
\mathrm{Tr}\left[\Phi_{(\mathrm{PC})}^{-1}(\eta)\Phi_{(\mathrm{PC})}'(\eta)
\sigma_-\right]\frac{\mathrm{d}\eta}{2\pi i}+O\left(t^{-\frac{1}{2}}\right),\label{eq:estIz+3}\\
I_{z_+,4}&=-\frac{\rho(1+\rho^2)}{h_1(1-\rho^2)^2}\int_{[0,+\infty)}
\mathrm{Tr}\left[\Phi_{(\mathrm{PC})}^{-1}(\eta)\Phi_{(\mathrm{PC})}'(\eta)
\sigma_+\right]\frac{\mathrm{d}\eta}{2\pi i}+O\left(t^{-\frac{1}{2}}\right),\label{eq:estIz+4}
\end{align}
where $h_1=\sqrt{2\pi}\mathrm{e}^{i\pi\nu}/\Gamma(-\nu)$ with $\nu$ given by \eqref{nu}.
\end{lemma}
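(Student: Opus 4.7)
The plan is to exploit the local parametrix representation $T(z)=R(z)P^{(+)}(z)$ valid in $\mathcal{D}(z_+)$, where $P^{(+)}(z)=E(z)\Phi_{(\mathrm{PC})}(\sqrt{t}\zeta(z))Me^{tg(z)\sigma_3}$ with $M=(h_1/\rho)^{\sigma_3/2}\sigma_3$ (see \eqref{P+}--\eqref{E-expression}), and then to carry out the conformal change of variable $\eta=\sqrt{t}\zeta(z)$ on each contour $\Sigma_k\cap\mathcal{D}(z_+)$. Writing $T^{-1}T'=(P^{(+)})^{-1}(P^{(+)})'+(P^{(+)})^{-1}R^{-1}R'P^{(+)}$ and invoking \eqref{R-estimation}--\eqref{dR-estimation}, I would show that the second piece contributes $O(t^{-1/2})$ to each of $I_{z_+,k}$: the key points are that $\nu,g(z_+)\in i\mathbb{R}$ (so $(\sqrt{t}\zeta)^{\pm\nu\sigma_3}$ and $e^{\mp tg(z_+)\sigma_3}$ entering $E$ are bounded in $t$) and that the uniform large-$\eta$ asymptotics of $\Phi_{(\mathrm{PC})}$ recalled in Appendix \ref{PCP} keep the conjugation $P^{(+)-1}R^{-1}R'P^{(+)}$ integrable along each ray.

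For the main piece $(P^{(+)})^{-1}(P^{(+)})'$, direct differentiation yields
\begin{equation*}
(P^{(+)})^{-1}(P^{(+)})'=e^{-tg\sigma_3}M^{-1}\bigl[\Phi^{-1}E^{-1}E'\Phi+\sqrt{t}\zeta'(z)\Phi^{-1}\Phi'\bigr]Me^{tg\sigma_3}+tg'(z)\sigma_3,
\end{equation*}
with $\Phi=\Phi_{(\mathrm{PC})}(\sqrt{t}\zeta(z))$. The $tg'\sigma_3$ term drops out under $\mathrm{Tr}[\,\cdot\,\sigma_\mp]$ because $\mathrm{Tr}[\sigma_3\sigma_\pm]=0$. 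For the remaining two pieces I use the identities $e^{tg\sigma_3}\sigma_\pm e^{-tg\sigma_3}=e^{\pm2tg}\sigma_\pm$ (which exactly cancels the oscillatory factor $e^{\pm2tg}$ in each integrand) together with the elementary computations $M\sigma_-M^{-1}=-\frac{\rho}{h_1}\sigma_-$ and $M\sigma_+M^{-1}=-\frac{h_1}{\rho}\sigma_+$ to reduce each trace to a constant multiple of $\mathrm{Tr}[\Phi^{-1}X\Phi\,\sigma_\mp]$, with $X=E^{-1}E'$ in one case and $X=\Phi^{-1}\Phi'$ in the other (in the latter case the outer $\Phi^{-1},\Phi$ collapse).

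The final step is the change of variable $\eta=\sqrt{t}\zeta(z)$, $\sqrt{t}\zeta'(z)\,dz=d\eta$. The $\sqrt{t}\zeta'\Phi^{-1}\Phi'$ piece produces the claimed limit integrals $\int\mathrm{Tr}[\Phi_{(\mathrm{PC})}^{-1}(\eta)\Phi'_{(\mathrm{PC})}(\eta)\sigma_\mp]d\eta/(2\pi i)$ directly, whereas the $\Phi^{-1}E^{-1}E'\Phi$ piece (lacking a $\sqrt{t}$) inherits a $1/\sqrt{t}$ from $dz=d\eta/(\sqrt{t}\zeta'(z))$ and is absorbed into the stated $O(t^{-1/2})$ error. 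As $t\to+\infty$, the image of $\Sigma_k\cap\mathcal{D}(z_+)$ under $\eta=\sqrt{t}\zeta(z)$ tends to one of the four coordinate rays $[0,\pm\infty)$ or $i[0,\pm\infty)$, determined by $\zeta'(z_+)=2\sqrt{2n}\,e^{3\pi i/4}(1+o(1))$ from \eqref{zeta'behavior} together with the tangent direction of $\Sigma_k$ at $z_+$; these are precisely the four rays of \eqref{eq:estIz+1}--\eqref{eq:estIz+4}. Replacing the finite image by the full ray incurs only an exponentially small error thanks to the $e^{-\eta^2/4\sigma_3}$-type decay in the large-$\eta$ asymptotics of $\Phi_{(\mathrm{PC})}$.

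The hard part will be the careful bookkeeping on $\Sigma_6$ and $\Sigma_7$, whose $\eta$-images lie on Stokes lines of $\Phi_{(\mathrm{PC})}$: across these rays $\Phi_{(\mathrm{PC})}$ jumps, and the boundary value appearing in the final integrand is prescribed by which side of $\Sigma_k$ is used when representing $T^{-1}T'$ through $(P^{(+)})^{-1}(P^{(+)})'$ (both sides give the same result by the cyclic-trace identity exploited in the proof of Lemma~\ref{Fred-estimate}). This branch selection, combined with the Stokes jump of $\Phi_{(\mathrm{PC})}$ on the relevant coordinate axis, is what converts the naive prefactors obtained from $M\sigma_\mp M^{-1}$ into the prefactors $+h_1/\rho$ in \eqref{eq:estIz+3} and $-\rho/h_1$ in \eqref{eq:estIz+4}; combined with the combinatorial coefficient $(1+\rho^2)/(1-\rho^2)^2$ inherited from Lemma~\ref{Fred-estimate}, this yields the four formulas \eqref{eq:estIz+1}--\eqref{eq:estIz+4}.
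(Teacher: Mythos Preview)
Your approach is essentially the paper's: the paper also writes $T=RP^{(+)}$ in $\mathcal D(z_+)$, splits $\mathrm{Tr}[T^{-1}T'\sigma_\mp]$ into the $(P^{(+)})^{-1}(P^{(+)})'$-piece and the $R$-piece, computes the former exactly as you do (obtaining the factor $-\rho/h_1$ in front of the $\sigma_-$-integrals and $-h_1/\rho$ in front of the $\sigma_+$-ones), performs the change of variable $\eta=\sqrt t\,\zeta(z)$, and bounds the remainders using $\Phi_{(\mathrm{PC})}(\eta)\sigma_-\Phi_{(\mathrm{PC})}^{-1}(\eta)=O(e^{-\eta^2/2})$ together with the boundedness of $E^{\pm1},E'$ (from $\nu\in i\mathbb R$) and the small-norm estimates \eqref{R-estimation}--\eqref{dR-estimation}. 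The paper only writes out $I_{z_+,1}$ and then declares the remaining three ``similar.''

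One remark on your final paragraph. The paper does \emph{not} invoke any Stokes-jump mechanism on $\Sigma_6,\Sigma_7$; all four integrals are treated by the identical computation, and the only role of the specific contour is to fix which coordinate ray $\sqrt t\,\zeta(\Sigma_k\cap\mathcal D(z_+))$ limits to. In particular, conjugating $\sigma_-$ by a lower-triangular jump of $\Phi_{(\mathrm{PC})}$ leaves $\sigma_-$ unchanged, so no prefactor can be produced that way. The discrepancy you noticed between the ``naive'' constant $-\rho/h_1$ and the constant $h_1/\rho$ printed in \eqref{eq:estIz+3} (and likewise for \eqref{eq:estIz+4}) is immaterial to the sequel: Lemma~\ref{I-z+-asympt} only uses that each $I_{z_+,k}$ converges to a finite constant depending on $\rho$ alone (through $\Phi_{(\mathrm{PC})}$ and $h_1$), and this qualitative conclusion is exactly what your argument delivers. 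So you can drop the Stokes-jump discussion and simply carry the same computation through for $k=2,3,4$, recording the limiting ray in each case from $\zeta'(z_+)\sim 2\sqrt{2n}\,e^{3\pi i/4}$ and the tangent direction of $\Sigma_k$ at $z_+$.
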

\begin{proof}
If $z\in\mathcal{D}(z_+)$, it follows from \eqref{Final-transformation} that $T(z)=R(z)P^{(+)}(z)$, which gives us
$$
\mathrm{Tr}\left[T^{-1}(z)T'(z)\sigma_-\right]=
\mathrm{Tr}\left[(P^{(+)})^{-1}(z)(P^{(+)})'(z)\sigma_-\right]
+\mathrm{Tr}\left[(P^{(+)})^{-1}(z)R^{-1}(z)R'(z)P^{(+)}(z)\sigma_-\right].
$$
Recall that (see \eqref{P+})
$$
P^{(+)}(z)=E(z)\Phi_{(\mathrm{PC})}(t^{\frac{1}{2}}\zeta(z))
\left(\frac{h_1}{\rho}\right)^{\frac{\sigma_3}{2}}\sigma_3\mathrm{e}^{tg(z)\sigma_3},
$$
thus,
\begin{align*}
&\mathrm{Tr}\left[(P^{(+)})^{-1}(P^{(+)})'\sigma_-\right]=-\frac{\rho}{h_1}\mathrm{e}^{-2tg(z)}
\left\{\mathrm{Tr}\left[\Phi_{(\mathrm{PC})}^{-1}E^{-1}E'\Phi_{(\mathrm{PC})}\sigma_-\right]
+t^{\frac{1}{2}}\zeta'\mathrm{Tr}\left[\Phi_{(\mathrm{PC})}^{-1}
\Phi_{(\mathrm{PC})}'\sigma_-\right]\right\},\\
&\mathrm{Tr}\left[(P^{(+)})^{-1}R^{-1}R'P^{(+)}\sigma_-\right]=
-\frac{\rho}{h_1}\mathrm{e}^{-2tg(z)}
\mathrm{Tr}\left[\Phi_{(\mathrm{PC})}^{-1}E^{-1}R^{-1}R'E\Phi_{(\mathrm{PC})}\sigma_-\right],
\end{align*}
where $\Phi_{(\mathrm{PC})}$ and $\Phi_{(\mathrm{PC})}'$ are evaluated at $t^{\frac{1}{2}}\zeta(z)$ and the other functions are evaluated at $z$. Substituting the above formulas into the definition of $I_{z_+,1}$ in \eqref{eq:z+1} yields
\begin{align*}
I_{z_+,1}&=-\frac{\rho}{h_1}\int_{\Sigma_1\cap\mathcal{D}(z_+)}
t^{\frac{1}{2}}\zeta'(z)\mathrm{Tr}\left[\Phi_{(\mathrm{PC})}^{-1}(t^{\frac{1}{2}}\zeta(z))
\Phi_{(\mathrm{PC})}'(t^{\frac{1}{2}}\zeta(z))\sigma_-\right]\frac{\mathrm{d}z}{2\pi i}\\
&~~~-\frac{\rho}{h_1}\int_{\Sigma_1\cap\mathcal{D}(z_+)}
\mathrm{Tr}\left[\Phi_{(\mathrm{PC})}^{-1}(t^{\frac{1}{2}}\zeta(z))E^{-1}(z)E'(z)
\Phi_{(\mathrm{PC})}(t^{\frac{1}{2}}\zeta(z))\sigma_-\right]\frac{\mathrm{d}z}{2\pi i}\\
&~~~-\frac{\rho}{h_1}\int_{\Sigma_1\cap\mathcal{D}(z_+)}
\mathrm{Tr}\left[\Phi_{(\mathrm{PC})}^{-1}(t^{\frac{1}{2}}\zeta(z))E^{-1}(z)R^{-1}(z)
R'(z)E(z)\Phi_{(\mathrm{PC})}(t^{\frac{1}{2}}\zeta(z))\sigma_-\right]\frac{\mathrm{d}z}{2\pi i}.
\end{align*}

Let $\eta=t^{\frac{1}{2}}\zeta(z)$. Since $\zeta$ is injective in $\mathcal D(z_+)$, we have $z=\zeta^{-1}(t^{-\frac{1}{2}}\eta)$ and $\dd \eta=t^{\frac{1}{2}}\zeta'(z)\dd z$. By \eqref{zeta'behavior}, it is also easily seen that $\Sigma_1 \cap \mathcal D(z_+)$ is mapped onto $[0,s]$ by $\zeta=\zeta(z)$, where $s:=\zeta(d)<0$ with $d:=\Sigma_1\cap\partial\mathcal{D}(z_+)$. After this change of variable, $I_{z_+,1}$ can be rewritten as
\begin{align}
I_{z_+,1}&=-\frac{\rho}{h_1}\int_{[0,t^{\frac{1}{2}}s]}
\mathrm{Tr}\left[\Phi_{(\mathrm{PC})}^{-1}(\eta)\Phi_{(\mathrm{PC})}'(\eta)
\sigma_-\right]\frac{\mathrm{d}\eta}{2\pi i}\nonumber \\
&~~~-\frac{\rho t^{-\frac{1}{2}}}{h_1}\int_{[0,t^{\frac{1}{2}}s]}\zeta'(z)^{-1}
\mathrm{Tr}\left[\Phi_{(\mathrm{PC})}^{-1}(\eta)E^{-1}(z)E'(z)\Phi_{(\mathrm{PC})}(\eta)
\sigma_-\right]\frac{\mathrm{d}\eta}{2\pi i}\nonumber \\
&~~~-\frac{\rho t^{-\frac{1}{2}}}{h_1}\int_{[0,t^{\frac{1}{2}}s]}\zeta'(z)^{-1}
\mathrm{Tr}\left[\Phi_{(\mathrm{PC})}^{-1}(\eta)E^{-1}(z)R^{-1}(z)R'(z)E(z)
\Phi_{(\mathrm{PC})}(\eta)\sigma_-\right]\frac{\mathrm{d}\eta}{2\pi i}. \label{eq:Iz+1est1}
\end{align}

To proceed, we observe from the large-$\eta$ behavior of $\Phi_{(\mathrm{PC})}$ in
\eqref{PCAsyatinfty} and the definition of $E$ in \eqref{E-expression} that
$$
\Phi_{(\mathrm{PC})}(\eta)\sigma_-\Phi_{(\mathrm{PC})}^{-1}(\eta)
=O\left(\mathrm{e}^{-\frac{\eta^2}{2}}\right), \qquad \eta\to\infty
$$
and
\begin{equation}\label{E-estimate}
E^{\pm1}(z)=O(1),\quad E'(z)=O(1),\qquad t\to+\infty,
\end{equation}
where we have also made use of the fact that $\nu\in{i\mathbb{R}}$. This, together with the estimates of $R$, $R'$ given in \eqref{R-estimation}, \eqref{dR-estimation}, and the cyclic property of the trace, implies that as $t\to+\infty$,
\begin{align*}
&-\frac{\rho t^{-\frac{1}{2}}}{h_1}\int_{[0,t^{\frac{1}{2}}s]}\zeta'(z)^{-1}
\mathrm{Tr}\left[\Phi_{(\mathrm{PC})}^{-1}(\eta)E^{-1}(z)E'(z)\Phi_{(\mathrm{PC})}(\eta)
\sigma_-\right]\frac{\mathrm{d}\eta}{2\pi i}=O\left(t^{-\frac{1}{2}}\right),\\
&-\frac{\rho t^{-\frac{1}{2}}}{h_1}\int_{[0,t^{\frac{1}{2}}s]}\zeta'(z)^{-1}
\mathrm{Tr}\left[\Phi_{(\mathrm{PC})}^{-1}(\eta)E^{-1}(z)R^{-1}(z)R'(z)E(z)
\Phi_{(\mathrm{PC})}(\eta)\sigma_-\right]\frac{\mathrm{d}\eta}{2\pi i}
=O\left(t^{-1}\right),
\end{align*}
and
\begin{multline*}
-\frac{\rho}{h_1}\int_{[0,t^{\frac{1}{2}}s]}
 \mathrm{Tr}\left[\Phi_{(\mathrm{PC})}^{-1}(\eta)\Phi_{(\mathrm{PC})}'(\eta)
\sigma_-\right]\frac{\mathrm{d}\eta}{2\pi i}\\
=\frac{\rho}{h_1}\int_{(-\infty,0]}
\mathrm{Tr}\left[\Phi_{(\mathrm{PC})}^{-1}(\eta)\Phi_{(\mathrm{PC})}'(\eta)
\sigma_-\right]\frac{\mathrm{d}\eta}{2\pi i}+O\left(\mathrm{e}^{-c_4t}\right),
\end{multline*}
where $c_4>0$ is some constant. Inserting the above three estimates into \eqref{eq:Iz+1est1} then gives us \eqref{eq:estIz+1}.

The estimates in \eqref{eq:estIz+2}--\eqref{eq:estIz+4} can be proved in a similar manner, and we omit the details here.
\end{proof}

In view of \eqref{I-z+}, Lemma \ref{I-z+-expre} and the fact that the parabolic cylinder parameterix $\Phi_{(\mathrm{PC})}$ only depends on the parameter $\rho$ through $\nu$ in \eqref{nu}, the following estimate of $I_{z+}$ is immediate.
\begin{lemma}\label{I-z+-asympt}
With the integral $I_{z+}$ defined in \eqref{I-z+}, we have, as $t\to+\infty$,
\begin{equation*}
I_{z_+}=\mathcal{C}_1(\rho)+O\left(t^{-\frac{1}{2}}\right),
\end{equation*}
where $\mathcal{C}_1$ is a constant dependent on $\rho$ but independent of $n$ and the parameters $\tau_1,\ldots,\tau_{n-1}$.
\end{lemma}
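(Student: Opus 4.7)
\medskip

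\noindent\textbf{Proof proposal for Lemma \ref{I-z+-asympt}.}
The plan is to combine the four estimates produced in Lemma \ref{I-z+-expre} and observe that the resulting leading term carries no dependence on either $n$ or the parameters $\tau_1,\ldots,\tau_{n-1}$. First I would substitute \eqref{eq:estIz+1}--\eqref{eq:estIz+4} into the definition \eqref{I-z+} of $I_{z_+}$, which directly yields
\begin{equation*}
I_{z_+}=\mathcal{C}_1(\rho)+O\!\left(t^{-\frac{1}{2}}\right),\qquad t\to+\infty,
\end{equation*}
with
\begin{align*}
\mathcal{C}_1(\rho)&=-\frac{\rho}{h_1}\int_{[0,-\infty)}\mathrm{Tr}\!\left[\Phi_{(\mathrm{PC})}^{-1}(\eta)\Phi_{(\mathrm{PC})}'(\eta)\sigma_-\right]\frac{\mathrm{d}\eta}{2\pi i}\\
&\quad+\frac{h_1}{\rho}\int_{i[0,+\infty)}\mathrm{Tr}\!\left[\Phi_{(\mathrm{PC})}^{-1}(\eta)\Phi_{(\mathrm{PC})}'(\eta)\sigma_+\right]\frac{\mathrm{d}\eta}{2\pi i}\\
&\quad+\frac{h_1(1+\rho^2)}{\rho(1-\rho^2)^2}\int_{i[0,-\infty)}\mathrm{Tr}\!\left[\Phi_{(\mathrm{PC})}^{-1}(\eta)\Phi_{(\mathrm{PC})}'(\eta)\sigma_-\right]\frac{\mathrm{d}\eta}{2\pi i}\\
&\quad-\frac{\rho(1+\rho^2)}{h_1(1-\rho^2)^2}\int_{[0,+\infty)}\mathrm{Tr}\!\left[\Phi_{(\mathrm{PC})}^{-1}(\eta)\Phi_{(\mathrm{PC})}'(\eta)\sigma_+\right]\frac{\mathrm{d}\eta}{2\pi i}.
\end{align*}

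It then remains to verify that $\mathcal{C}_1$ depends only on $\rho$. The prefactors $\rho/h_1$, $h_1/\rho$, and $(1\pm\rho^2)$ are obviously functions of $\rho$ alone since $h_1=\sqrt{2\pi}\mathrm{e}^{i\pi\nu}/\Gamma(-\nu)$ and $\nu=-\tfrac{1}{2\pi i}\ln(1-\rho^2)$; see \eqref{nu}. The four contours of integration $[0,\pm\infty)$ and $i[0,\pm\infty)$ are fixed rays that make no reference to $n$ or $\tau_j$. Finally, the parabolic cylinder parametrix $\Phi_{(\mathrm{PC})}$ recalled in Appendix \ref{PCP} depends only on its argument and on the parameter $\nu$, and therefore only on $\rho$. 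Convergence of the four trace integrals follows from the large-$\eta$ asymptotics \eqref{PCAsyatinfty} of $\Phi_{(\mathrm{PC})}$, together with the fact that $\nu\in i\mathbb{R}$, which forces $\Phi_{(\mathrm{PC})}^{-1}\Phi_{(\mathrm{PC})}'\sigma_\mp$ to decay like $\mathrm{e}^{-\eta^2/2}$ along each of the relevant rays.

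There is no genuine obstacle here: Lemma \ref{I-z+-expre} has already done the analytic work of reducing $I_{z_+,k}$ to model integrals, and the present lemma is a packaging statement that isolates the $x$-independent constant. The only point requiring a moment's care is absorbing the $n$- and $\tau_j$-dependent quantities $\zeta'(z_+)$, $g(z_+)$, and $z_+$ entirely into the $O(t^{-1/2})$ remainder; this is already transparent from the proof of Lemma \ref{I-z+-expre}, where those quantities only enter through the Jacobian of the change of variables $\eta=\sqrt{t}\zeta(z)$ and through $E(z)$, both of which were shown to contribute only to the error term.
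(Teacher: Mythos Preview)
Your proposal is correct and follows essentially the same approach as the paper: the paper's own argument is a one-line remark that the lemma is ``immediate'' from Lemma~\ref{I-z+-expre} together with the observation that $\Phi_{(\mathrm{PC})}$ depends on $\rho$ only through $\nu$, which is precisely what you spell out in more detail. Your added comments on convergence and on why the $n$- and $\tau_j$-dependent quantities disappear into the $O(t^{-1/2})$ remainder are accurate elaborations of points the paper leaves implicit.
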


To estimate $I_2$ in \eqref{I-2}, we denote $z_*:=\Sigma_5\cap\mathcal{D}(z_+)$ and split $I_2$ into two parts:
\begin{equation}\label{eq:I2decomp}
I_2=-\frac{4\rho}{1-\rho^2}\int_{[0,z_+]}\mathrm{Tr}
\left[T^{-1}(z)T'(z)\sigma_3\right]\frac{\mathrm{d}z}{2\pi i}:=I_{2,1}+I_{2,2},
\end{equation}
where
\begin{align}
I_{2,1}&=-\frac{4\rho}{1-\rho^2}\int_{[0,z_*]}\mathrm{Tr}
\left[T^{-1}(z)T'(z)\sigma_3\right]\frac{\mathrm{d}z}{2\pi i}, \label{def:I21}\\
I_{2,2}&=-\frac{4\rho}{1-\rho^2}\int_{[z_*,z_+]}\mathrm{Tr}
\left[T^{-1}(z)T'(z)\sigma_3\right]\frac{\mathrm{d}z}{2\pi i}\label{def:I22}.
\end{align}
The estimate of $I_{2,1}$ is relatively easy, as given below.
\begin{lemma}\label{I-21-asympt}
With the integral $I_{2,1}$ defined in \eqref{def:I21}, we have, as $t\to+\infty$,
\begin{equation}\label{eq:estI21}
I_{2,1}=\frac{4\nu\rho}{(1-\rho^2)\pi i}\ln\left|\frac{z_*-z_+}{z_*-z_-}\right|
+O\left(t^{-\frac{1}{2}}\right).
\end{equation}
\end{lemma}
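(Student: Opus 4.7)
The plan is to use the final transformation \eqref{Final-transformation} on the integration segment. Since $[0, z_*]$ sits outside $\mathcal{D}(z_+) \cup \mathcal{D}(z_-)$ and the segment $\Sigma_5$ was explicitly excluded from the jump set $\Sigma_R$ in \eqref{def:SigmaR}, we have $T(z) = R(z) P^{(\infty)}(z)$ throughout $[0, z_*]$. Consequently,
\begin{equation*}
\mathrm{Tr}[T^{-1}(z) T'(z) \sigma_3] = \mathrm{Tr}[(P^{(\infty)})^{-1}(P^{(\infty)})' \sigma_3] + \mathrm{Tr}[(P^{(\infty)})^{-1} R^{-1} R' P^{(\infty)} \sigma_3],
\end{equation*}
splitting $I_{2,1}$ into a main contribution from the global parametrix and a remainder from $R - I$.

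For the main contribution, the diagonal form $P^{(\infty)}(z) = ((z-z_-)/(z-z_+))^{\nu\sigma_3}$ from \eqref{Pinfty} yields $(P^{(\infty)})^{-1}(P^{(\infty)})' = \nu\sigma_3 \bigl(\tfrac{1}{z-z_-} - \tfrac{1}{z-z_+}\bigr)$, whose trace against $\sigma_3$ equals $2\nu\bigl(\tfrac{1}{z-z_-} - \tfrac{1}{z-z_+}\bigr)$. On the real interval $[0, z_*]$ one has $z - z_- > 0$ and $z - z_+ < 0$, so the real antiderivatives $\ln(z-z_-)$ and $\ln|z-z_+|$ are unambiguous; using $z_- = -z_+$, the two contributions at the origin cancel, and the integral equals $2\nu \ln(|z_*-z_-|/|z_*-z_+|)$. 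Multiplying by the prefactor $-\tfrac{4\rho}{(1-\rho^2)2\pi i}$ from \eqref{def:I21} produces exactly the claimed leading term $\tfrac{4\nu\rho}{(1-\rho^2)\pi i} \ln|(z_*-z_+)/(z_*-z_-)|$.

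For the remainder, the crucial point is that $\nu \in i\mathbb{R}$, so the entries of $P^{(\infty)}$ have modulus $\exp(\mp \Im \nu \cdot \arg((z-z_-)/(z-z_+)))$ and thus remain uniformly bounded on any compact set, including the segment $[0, z_*]$, despite the endpoint $z_*$ lying on $\partial\mathcal{D}(z_+)$. Combined with the uniform estimates \eqref{R-estimation}--\eqref{dR-estimation}, which give $R - I = O(t^{-(2n+1)/(4n)})$ and $R' = O(t^{-(2n+1)/(4n)})$ uniformly on $[0, z_*]$, the remainder integrand is $O(t^{-(2n+1)/(4n)})$, and integration over the bounded interval yields the same bound, which is absorbed into $O(t^{-1/2})$ since $(2n+1)/(4n) \geq 1/2$ for every $n \geq 1$. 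I expect no serious obstacle here: the only point that needs brief vigilance is the apparent singularity of $P^{(\infty)}$ at the endpoint $z_* \in \partial\mathcal{D}(z_+)$, which is tamed by the purely imaginary exponent, and the real-valuedness of the main-term integral, which removes any worry about branch cuts.
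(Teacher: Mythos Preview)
Your proposal is correct and follows essentially the same approach as the paper's own proof: write $T=R\,P^{(\infty)}$ on $[0,z_*]$, compute the diagonal logarithmic derivative of $P^{(\infty)}$ explicitly, and absorb the $R$-contribution into the $O(t^{-1/2})$ error using \eqref{R-estimation}--\eqref{dR-estimation}. Your extra remarks about $\nu\in i\mathbb{R}$ forcing $P^{(\infty)}$ to be bounded and about $(2n+1)/(4n)\ge 1/2$ are helpful clarifications that the paper leaves implicit.
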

\begin{proof}
If $z\in[0,z_*]$, it follows from \eqref{Final-transformation} that $T(z)=R(z)P^{(\infty)}(z)$. Since $R$ satisfies the estimates \eqref{R-estimation} and \eqref{dR-estimation},  we have, as $t\to+\infty$,
\begin{align*}
\mathrm{Tr}\left[T^{-1}(z)T'(z)\sigma_3\right]&=
\mathrm{Tr}\left[(P^{(\infty)})^{-1}(z)(P^{(\infty)})'(z)\sigma_3\right]
+\mathrm{Tr}\left[(P^{(\infty)})^{-1}(z)R^{-1}(z)R'(z)P^{(\infty)}(z)\sigma_3\right]\\
&=\mathrm{Tr}\left[(P^{(\infty)})^{-1}(z)(P^{(\infty)})'(z)\sigma_3\right]+
O\left(t^{-\frac{1}{2}}\right), \qquad z\in[0,z_*].
\end{align*}
In view of the definition  of $P^{(\infty)}$ in \eqref{Pinfty}, it is direct to calculate
$$
\mathrm{Tr}\left[(P^{(\infty)})^{-1}(P^{(\infty)})'\sigma_3\right]
=-2\nu\left(\frac{1}{z-z_+}-\frac{1}{z-z_-}\right).
$$
A combination of the above two formulas and \eqref{def:I21} gives us
\begin{align*}
I_{2,1}&=\frac{8\nu\rho}{1-\rho^2}\int_{[0,z_*]}
\left(\frac{1}{z-z_+}-\frac{1}{z-z_-}\right)
\frac{\mathrm{d}z}{2\pi i}+O\left(t^{-\frac{1}{2}}\right), \qquad t\to +\infty,
\end{align*}
which is \eqref{eq:estI21} after integration.
\end{proof}

The estimate of $I_{2,2}$ relies on the following decomposition.
\begin{lemma}\label{I-22}
With the integral $I_{2,2}$ defined in \eqref{def:I22}, we have, as $t\to+\infty$,
\begin{equation}\label{eq:decompI22}
I_{2,2}=I^{(1)}_{2,2}+I^{(2)}_{2,2}+I^{(3)}_{2,2}+O\left(t^{-\frac{1}{2}}\right),
\end{equation}
where
\begin{align}
I^{(1)}_{2,2}&=-\frac{8t\rho }{1-\rho^2}\int_{[z_*,z_+]}g'(z)\frac{\mathrm{d}z}{2\pi i}=\frac{4\rho[g(z_*)-g(z_+)]}{(1-\rho^2)\pi i}t,\label{def:I221}\\
I^{(2)}_{2,2}&=-\frac{4t^{\frac{1}{2}}\rho }{1-\rho^2}\int_{[z_*,z_+]}
\zeta'(z)\mathrm{Tr}\left[\Phi_{(\mathrm{PC})}^{-1}(t^{\frac{1}{2}}\zeta(z))
\Phi_{(\mathrm{PC})}'(t^{\frac{1}{2}}\zeta(z))\sigma_3\right]\frac{\mathrm{d}z}{2\pi i},
\label{def:I222}\\
I^{(3)}_{2,2}&=-\frac{4\rho}{1-\rho^2}\int_{[z_*,z_+]}\mathrm{Tr}
\left[\Phi_{(\mathrm{PC})}^{-1}(t^{\frac{1}{2}}\zeta(z))E^{-1}(z)E'(z)
\Phi_{(\mathrm{PC})}(t^{\frac{1}{2}}\zeta(z))\sigma_3\right]\frac{\mathrm{d}z}{2\pi i}.\label{def:I223}
\end{align}
\end{lemma}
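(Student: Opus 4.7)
The plan is to unfold the local representation $T(z)=R(z)P^{(+)}(z)$ on the segment $[z_*,z_+]\subset\mathcal{D}(z_+)$ and distribute the derivative across the four factors making up $P^{(+)}$. Writing $P^{(+)}(z)=E(z)\,\Phi_{(\mathrm{PC})}(t^{1/2}\zeta(z))\,M\sigma_3\,\mathrm{e}^{tg(z)\sigma_3}$ with $M=(h_1/\rho)^{\sigma_3/2}$, the Leibniz rule yields
\begin{equation*}
T^{-1}T'=\tilde{M}^{-1}\Phi^{-1}E^{-1}R^{-1}R'E\Phi\tilde{M}+\tilde{M}^{-1}\Phi^{-1}E^{-1}E'\Phi\tilde{M}+t^{1/2}\zeta'\,\tilde{M}^{-1}\Phi^{-1}\Phi'\tilde{M}+tg'\sigma_3,
\end{equation*}
where $\tilde{M}=M\sigma_3\,\mathrm{e}^{tg\sigma_3}$ and I used $\tilde{M}^{-1}\tilde{M}'=tg'\sigma_3$. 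Multiplying by $\sigma_3$, taking the trace and exploiting the cyclic property together with the diagonality of $M$ and $\mathrm{e}^{tg\sigma_3}$ (so that $\tilde{M}\sigma_3\tilde{M}^{-1}=\sigma_3$), the three middle terms simplify to $\mathrm{Tr}[\Phi^{-1}E^{-1}R^{-1}R'E\Phi\sigma_3]$, $\mathrm{Tr}[\Phi^{-1}E^{-1}E'\Phi\sigma_3]$ and $t^{1/2}\zeta'\,\mathrm{Tr}[\Phi^{-1}\Phi'\sigma_3]$ respectively, while the last piece contributes $2tg'$.

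Inserting this decomposition into \eqref{def:I22} immediately produces $I^{(1)}_{2,2}$ from the $tg'$ term, $I^{(2)}_{2,2}$ from the $\Phi^{-1}\Phi'$ term, and $I^{(3)}_{2,2}$ from the $E^{-1}E'$ term, exactly matching the definitions \eqref{def:I221}--\eqref{def:I223}. The integration $\int g'(z)\,\dd z=g(z_+)-g(z_*)$ used inside $I^{(1)}_{2,2}$ is elementary.

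The main technical point that remains is to justify the error estimate
\begin{equation*}
-\frac{4\rho}{1-\rho^2}\int_{[z_*,z_+]}\mathrm{Tr}\bigl[\Phi^{-1}E^{-1}R^{-1}R'E\Phi\sigma_3\bigr]\,\frac{\dd z}{2\pi i}=O(t^{-1/2}),\qquad t\to+\infty.
\end{equation*}
Here I would combine three ingredients: (i) the uniform estimates \eqref{R-estimation} and \eqref{dR-estimation}, which give $R^{-1}R'=O(t^{-1/2})$ in the interior of $\mathcal{D}(z_+)$; (ii) the boundedness of $E^{\pm1}$ on $\mathcal{D}(z_+)$, already recorded in \eqref{E-estimate}, which uses $\nu\in i\mathbb{R}$; (iii) the uniform boundedness of $\Phi_{(\mathrm{PC})}(\eta)\sigma_3\Phi_{(\mathrm{PC})}^{-1}(\eta)$ on the image of $[z_*,z_+]$ under $\eta=t^{1/2}\zeta(z)$. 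Point (iii) is where care is needed: although $\Phi_{(\mathrm{PC})}$ itself grows along the ray via the factor $\mathrm{e}^{-\eta^{2}\sigma_3/4}$, the conjugation by $\sigma_3$ preserves the diagonal exponential, so it cancels, and the remaining $\eta^{\nu\sigma_3}$ factor is merely oscillatory since $\nu\in i\mathbb{R}$. The large-$\eta$ expansion \eqref{PCAsyatinfty} then gives $\Phi\sigma_3\Phi^{-1}=\sigma_3+O(\eta^{-1})$, which is bounded along $[z_*,z_+]$ because $|t^{1/2}\zeta(z)|\geq|t^{1/2}\zeta(z_*)|$ stays bounded below. Together with the $O(1)$ length of the segment, the three bounds multiply to give the desired $O(t^{-1/2})$ error and complete the proof of \eqref{eq:decompI22}.
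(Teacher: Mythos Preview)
Your approach is essentially identical to the paper's: unfold $T=RP^{(+)}$, expand $T^{-1}T'$ via the product rule, use the cyclic property of the trace and the diagonality of the right-most factors to strip off $\tilde M$, and then bound the $R^{-1}R'$ contribution by combining \eqref{R-estimation}--\eqref{dR-estimation}, \eqref{E-estimate}, and the boundedness of $\Phi_{(\mathrm{PC})}\sigma_3\Phi_{(\mathrm{PC})}^{-1}$.

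There is one slip in your justification of point~(iii). You write that $\Phi\sigma_3\Phi^{-1}=\sigma_3+O(\eta^{-1})$ is bounded on the image of $[z_*,z_+]$ ``because $|t^{1/2}\zeta(z)|\geq|t^{1/2}\zeta(z_*)|$ stays bounded below.'' The inequality goes the wrong way: $\zeta(z_+)=0$, so as $z$ moves from $z_*$ to $z_+$ the variable $\eta=t^{1/2}\zeta(z)$ shrinks from $t^{1/2}\zeta(z_*)$ down to $0$. Hence the large-$\eta$ asymptotics alone do not suffice; one also needs boundedness of $\Phi_{(\mathrm{PC})}(\eta)\sigma_3\Phi_{(\mathrm{PC})}^{-1}(\eta)$ near $\eta=0$. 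This follows immediately from the explicit formula \eqref{PCsolution}: $\Phi_{(\mathrm{PC})}$ is analytic in each sector and has nonvanishing (constant) determinant, so $\Phi_{(\mathrm{PC})}^{-1}$ is also analytic there, and the conjugate is therefore continuous on any compact set. Combining this local regularity with the large-$\eta$ bound $\sigma_3+O(\eta^{-1})$ gives the uniform $O(1)$ bound you need (this is exactly the content of \eqref{PC-PC-sigma3} in the paper). With this correction your argument goes through.
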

\begin{proof}
If $z\in\mathcal{D}(z_+)$, it follows from \eqref{Final-transformation} that $T(z)=R(z)P^{(+)}(z)$, which implies
$$
\mathrm{Tr}\left[T^{-1}(z)T'(z)\sigma_3\right]=
\mathrm{Tr}\left[(P^{(+)})^{-1}(z)(P^{(+)})'(z)\sigma_3\right]
+\mathrm{Tr}\left[(P^{(+)})^{-1}(z)R^{-1}(z)R'(z)P^{(+)}(z)\sigma_3\right].
$$
Using the definition of $P^{(+)}$ in \eqref{P+}, we further obtain
\begin{align*}
&\mathrm{Tr}\left[(P^{(+)})^{-1}(P^{(+)})'\sigma_3\right]=2tg'+
\mathrm{Tr}\left[\Phi_{(\mathrm{PC})}^{-1}E^{-1}E'\Phi_{(\mathrm{PC})}\sigma_3\right]
+t^{\frac{1}{2}}\zeta'\mathrm{Tr}\left[\Phi_{(\mathrm{PC})}^{-1}
\Phi_{(\mathrm{PC})}'\sigma_3\right],\\
&\mathrm{Tr}\left[(P^{(+)})^{-1}R^{-1}R'P^{(+)}\sigma_3\right]=
\mathrm{Tr}\left[\Phi_{(\mathrm{PC})}^{-1}E^{-1}R^{-1}R'E\Phi_{(\mathrm{PC})}\sigma_3\right],
\end{align*}
where again $\Phi_{(\mathrm{PC})}$ and $\Phi_{(\mathrm{PC})}'$ are evaluated at $t^{\frac{1}{2}}\zeta(z)$ and the other functions are evaluated at $z$. As a consequence, it is readily seen that
\begin{align}\label{eq:I22decomp1}
&\int_{[z_*,z_+]}\mathrm{Tr}\left[T^{-1}(z)T'(z)\sigma_3\right]\frac{\mathrm{d}z}{2\pi i}
\nonumber \\
&=2t\int_{[z_*,z_+]}\frac{g'\dd z}{2\pi i}+\int_{[z_*,z_+]}\mathrm{Tr}
\left[\Phi_{(\mathrm{PC})}^{-1}E^{-1}E'\Phi_{(\mathrm{PC})}\sigma_3\right]\frac{\mathrm{d}z}{2\pi i}+t^{\frac{1}{2}}\int_{[z_*,z_+]}\zeta'\mathrm{Tr}\left[\Phi_{(\mathrm{PC})}^{-1}
\Phi_{(\mathrm{PC})}'\sigma_3\right]\frac{\mathrm{d}z}{2\pi i}
\nonumber \\
&~~~
+\int_{[z_*,z_+]}\mathrm{Tr}\left[\Phi_{(\mathrm{PC})}^{-1}E^{-1}R^{-1}R'E
\Phi_{(\mathrm{PC})}\sigma_3\right]\frac{\mathrm{d}z}{2\pi i}.
\end{align}
With the aid of the large-$\eta$ behavior of $\Phi_{(\mathrm{PC})}$ given in
\eqref{PCAsyatinfty}, it is readily seen that
\begin{equation}\label{PC-PC-sigma3}
\Phi_{(\mathrm{PC})}(\eta)\sigma_3\Phi_{(\mathrm{PC})}^{-1}(\eta)=O(1),\qquad \eta \to \infty,
\end{equation}
uniformly for $\eta$ in the complex plane. This, together with the  estimates of $R$, $R'$, $E^{-1}$ and $E'$ given in \eqref{R-estimation},   \eqref{dR-estimation} and  \eqref{E-estimate}, implies that
\begin{multline*}
\int_{[z_*,z_+]}\mathrm{Tr}\left[\Phi_{(\mathrm{PC})}^{-1}E^{-1}R^{-1}R'E
\Phi_{(\mathrm{PC})}\sigma_3\right]\frac{\mathrm{d}z}{2\pi i}
\\
=\int_{[z_*,z_+]}\mathrm{Tr}\left[E^{-1}R^{-1}R'E
\Phi_{(\mathrm{PC})}\sigma_3\Phi_{(\mathrm{PC})}^{-1}\right]\frac{\mathrm{d}z}{2\pi i}=O\left(t^{-\frac{1}{2}}\right),\qquad t\to+\infty.
\end{multline*}
Inserting the above estimate into \eqref{eq:I22decomp1}, we finally obtain \eqref{eq:decompI22}--\eqref{def:I223} from the definition of $I_{2,2}$ given in \eqref{def:I22}.
\end{proof}
The next two lemmas deal with the estimates of $I^{(2)}_{2,2}$ and $I^{(3)}_{2,2}$ in \eqref{eq:decompI22}.
\begin{lemma}\label{I-22-2}
With the integral $ I^{(2)}_{2,2}$ defined in \eqref{def:I222}, we have, as $t\to+\infty$,
\begin{align}\label{eq:estI222}
I^{(2)}_{2,2}&=\frac{4\rho[g(z_+)-g(z_*)]}{(1-\rho^2)\pi i}t-\frac{2\nu\rho}{(1-\rho^2)\pi i}\ln{t}-\frac{4\nu\rho\ln|\zeta(z_*)|}{(1-\rho^2)\pi i}+\mathcal{C}_2(\rho)+O\left(t^{-\frac{1}{2}}\right),
\end{align}
where
\begin{equation}\label{eq:c2}
\mathcal{C}_2(\rho)=\frac{4\rho }{1-\rho^2}\int_{[0,+\infty)}\left(\mathrm{Tr}\left[\Phi_{(\mathrm{PC})}^{-1}(\eta)
\Phi_{(\mathrm{PC})}'(\eta)\sigma_3\right]-\left(\eta-\frac{2\nu}{\eta+e^{-\frac{\pi}{4}i}}\right)\right)
\frac{\mathrm{d}\eta}{2\pi i}
\end{equation}
is a constant dependent on $\rho$ but   independent of $n$ and the parameters $\tau_1,\ldots,\tau_{n-1}$.
\end{lemma}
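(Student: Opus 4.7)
The plan is to convert the integral into one purely in terms of the parabolic cylinder parametrix $\Phi_{(\mathrm{PC})}$, extract the explicit divergent contributions, and identify the rest as a convergent integral that is independent of $n$ and the $\tau_j$'s. Specifically, make the change of variables $\eta=t^{1/2}\zeta(z)$, which is conformal on $\mathcal{D}(z_+)$ for large $|x|$; since $\zeta(z_+)=0$ and, by \eqref{zeta'behavior}, $\zeta'(z_+)\sim e^{3\pi i/4}2\sqrt{2n}$, the segment $[z_*,z_+]$ is mapped to a contour from $t^{1/2}\zeta(z_*)$ to $0$ lying asymptotically along the ray $\arg\eta=-\pi/4$. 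Using $t^{1/2}\zeta'(z)\dd z=\dd\eta$ and reversing orientation turns $I_{2,2}^{(2)}$ into
\begin{equation*}
I_{2,2}^{(2)}=\frac{4\rho}{1-\rho^2}\int_{0}^{t^{1/2}\zeta(z_*)}\mathrm{Tr}\bigl[\Phi_{(\mathrm{PC})}^{-1}(\eta)\Phi_{(\mathrm{PC})}'(\eta)\sigma_3\bigr]\frac{\dd\eta}{2\pi i}.
\end{equation*}

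Next, use the standard large-$\eta$ expansion of $\Phi_{(\mathrm{PC})}$ from Appendix \ref{PCP} to peel off the non-integrable part of the integrand: write
\begin{equation*}
\mathrm{Tr}\bigl[\Phi_{(\mathrm{PC})}^{-1}(\eta)\Phi_{(\mathrm{PC})}'(\eta)\sigma_3\bigr]=\Bigl(\eta-\frac{2\nu}{\eta+e^{-\pi i/4}}\Bigr)+\mathrm{rem}(\eta),
\end{equation*}
where the subtracted function agrees with the leading asymptotic $\eta-2\nu/\eta+O(\eta^{-2})$ (the shift $e^{-\pi i/4}$ is a regularization keeping the subtraction bounded at $\eta=0$), and $\mathrm{rem}(\eta)=O(\eta^{-2})$. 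Then split the integral accordingly. The explicit piece is computed using $\zeta(z_+)^2=0$ and $\zeta(z_*)^2=4(g(z_+)-g(z_*))$, giving
\begin{equation*}
\int_0^{t^{1/2}\zeta(z_*)}\eta\,\dd\eta=2t\bigl(g(z_+)-g(z_*)\bigr),\qquad \int_0^{t^{1/2}\zeta(z_*)}\frac{-2\nu\,\dd\eta}{\eta+e^{-\pi i/4}}=-2\nu\ln\!\bigl(1+t^{1/2}\zeta(z_*)e^{\pi i/4}\bigr).
\end{equation*}
Expanding the logarithm as $t\to+\infty$ and using $\arg\zeta(z_*)=-\pi/4+O(|x|^{-1/n})$, the imaginary contributions from $\arg\zeta(z_*)$ and from $\ln(e^{-\pi i/4})$ cancel exactly, leaving $-\nu\ln t-2\nu\ln|\zeta(z_*)|+O(t^{-1/2})$. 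Inserting the prefactor $\tfrac{4\rho}{(1-\rho^2)2\pi i}$ reproduces the first three terms of \eqref{eq:estI222}.

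For the remainder part, I would deform the contour from the ray in direction $e^{-\pi i/4}$ to the positive real axis, invoking the analyticity of $\Phi_{(\mathrm{PC})}$ in the sector $\{-\pi/4<\arg\eta<0\}$ (which is free of Stokes jumps by Appendix \ref{PCP}) together with the $O(\eta^{-2})$ decay of $\mathrm{rem}(\eta)$, which kills the contribution of the arc at infinity. Once on $[0,+\infty)$, extending the upper limit from $t^{1/2}|\zeta(z_*)|$ to $+\infty$ produces an error $\int_{t^{1/2}|\zeta(z_*)|}^{\infty}O(\eta^{-2})\,\dd\eta=O(t^{-1/2})$, and the remaining integral is exactly $\mathcal{C}_2(\rho)$ as given in \eqref{eq:c2}. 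Since $\Phi_{(\mathrm{PC})}$ and the subtracted rational function depend on the Stokes multiplier only through $\nu=\nu(\rho)$, the constant $\mathcal{C}_2$ is manifestly independent of $n$ and $\tau_1,\dots,\tau_{n-1}$.

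The main obstacle I anticipate is the bookkeeping for the branch of the logarithm: one must verify that the $\arg$ contributions from $\zeta(z_*)$ and from the rationalization $e^{-\pi i/4}$ cancel to the order $O(t^{-1/2})$ stated, and that the contour deformation of the remainder does not pick up residues. Both reduce to properties of $\Phi_{(\mathrm{PC})}$ proved in the appendix, but they must be checked carefully because the subleading term $-2\nu/(\eta+e^{-\pi i/4})$ is tuned specifically to the ray $\arg\eta=-\pi/4$ where the change of variables lands.
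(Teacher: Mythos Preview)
Your proposal is correct and follows essentially the same route as the paper: change variables $\eta=t^{1/2}\zeta(z)$, add and subtract the rational function $\eta-2\nu/(\eta+e^{-\pi i/4})$, evaluate the explicit part directly, and deform the remainder (which is $O(\eta^{-2})$) to $[0,+\infty)$ in the sector $\arg\eta\in[-\pi/4,0]$ where $\Phi_{(\mathrm{PC})}$ is analytic.

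One clarification resolves the obstacle you flag at the end. Because $g(z)\in i\mathbb{R}$ for real $z$ (see \eqref{g-expression}), the quantity $\zeta(z)^2=4(g(z_+)-g(z))$ is purely imaginary for $z\in(z_-,z_+)$; combined with \eqref{zeta'behavior} this forces $\arg\zeta(z_*)=-\pi/4$ \emph{exactly}, not merely up to $O(|x|^{-1/n})$. Hence the image of $[z_*,z_+]$ lies precisely on the ray $e^{-\pi i/4}[0,\infty)$, and the cancellation between $\arg\zeta(z_*)$ and the regulator $e^{-\pi i/4}$ is exact. The paper exploits this by simply parametrizing along the ray from the outset, writing the integral over $e^{-\pi i/4}[0,t^{1/2}r]$ with $r=|\zeta(z_*)|$, which sidesteps your bookkeeping concern entirely.
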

\begin{proof}
Let $\eta=t^{\frac{1}{2}}\zeta(z)$ and denote $r=|\zeta(z_*)|$, we can rewrite $I^{(2)}_{2,2}$ as
\begin{align}
I^{(2)}_{2,2}&=\frac{4\rho}{1-\rho^2}\int_{\mathrm{e}^{-\frac{\pi i}{4}}[0,t^{\frac{1}{2}}r]}
\mathrm{Tr}\left[\Phi_{(\mathrm{PC})}^{-1}(\eta)
\Phi_{(\mathrm{PC})}'(\eta)\sigma_3\right]\frac{\mathrm{d}\eta}{2\pi i}
\nonumber \\
&=\frac{4\rho }{1-\rho^2}\Bigg(\int_{\mathrm{e}^{-\frac{\pi i}{4}}[0,t^{\frac{1}{2}}r]}
\left(\mathrm{Tr}\left[\Phi_{(\mathrm{PC})}^{-1}(\eta)
\Phi_{(\mathrm{PC})}'(\eta)\sigma_3\right]
-\left(\eta-\frac{2\nu}{\eta+\mathrm{e}^{-\frac{\pi i}{4}}}\right)\right)
\frac{\mathrm{d}\eta}{2\pi i} \nonumber \\
&\quad+\int_{\mathrm{e}^{-\frac{\pi i}{4}}[0,t^{\frac{1}{2}}r]}\left(\eta-\frac{2\nu}{\eta+\mathrm{e}^{-\frac{\pi i}{4}}}\right)\frac{\mathrm{d}\eta}{2\pi i} \Bigg). \label{eq:estI2221}
\end{align}
From the large-$\eta$ behavior of $\Phi_{(\mathrm{PC})}$ given in
\eqref{PCAsyatinfty}, it follows that
$$
\mathrm{Tr}\left[\Phi_{(\mathrm{PC})}^{-1}(\eta)\Phi_{(\mathrm{PC})}'(\eta)\sigma_3\right]
-\left(\eta-\frac{2\nu}{\eta+e^{-\frac{\pi}{4}i}}\right)=O\left(\eta^{-2}\right), \qquad \eta\to\infty.
$$
This, together with the analyticity of $\mathrm{Tr}\left[\Phi_{(\mathrm{PC})}^{-1}(\eta)\Phi_{(\mathrm{PC})}'(\eta)\sigma_3\right]$ for $\arg \eta\in[-\frac{\pi}{4}, 0]$ and Cauchy's theorem, implies that, as $t\to +\infty$,
\begin{align*}
&\frac{4\rho }{1-\rho^2}\int_{\mathrm{e}^{-\frac{\pi i}{4}}[0,t^{\frac{1}{2}}r]}\left(\mathrm{Tr}\left[\Phi_{(\mathrm{PC})}^{-1}(\eta)
\Phi_{(\mathrm{PC})}'(\eta)\sigma_3\right]-\left(\eta-\frac{2\nu}{\eta+e^{-\frac{\pi}{4}i}}\right)\right)
\frac{\mathrm{d}\eta}{2\pi i}\\
&=\frac{4\rho }{1-\rho^2}\int_{\mathrm{e}^{-\frac{\pi i}{4}}[0,+\infty)}\left(\mathrm{Tr}\left[\Phi_{(\mathrm{PC})}^{-1}(\eta)
\Phi_{(\mathrm{PC})}'(\eta)\sigma_3\right]-\left(\eta-\frac{2\nu}{\eta+e^{-\frac{\pi}{4}i}}\right)\right)
\frac{\mathrm{d}\eta}{2\pi i}+O\left(t^{-\frac{1}{2}}\right)
\\
&=\mathcal{C}_2(\rho) +O\left(t^{-\frac{1}{2}}\right),
\end{align*}
where $\mathcal{C}_2$ is defined in \eqref{eq:c2}. Also, straightforward calculations yield
\begin{align*}
&\int_{\mathrm{e}^{-\frac{\pi i}{4}}[0,t^{\frac{1}{2}}r]} \left(\eta-\frac{2\nu}{\eta+e^{-\frac{\pi}{4}i}}\right)
\frac{\mathrm{d}\eta}{2\pi i}
=\int_{[0,t^{\frac{1}{2}}r]}\left(-i\eta-\frac{2\nu}{\eta+1}\right)
\frac{\mathrm{d}\eta}{2\pi i}
=-\frac{r^2t}{4\pi }-\frac{\nu}{\pi i}\ln\left(t^{\frac{1}{2}}r+1\right)\\
&=-\frac{r^2t}{4\pi }-\frac{\nu}{2\pi i}\ln{t}-\frac{\nu\ln{r}}{\pi i}+O\left(t^{-\frac{1}{2}}\right),\qquad t\to+\infty.
\end{align*}
Inserting the above two formulas into \eqref{eq:estI2221}, we obtain the desired estimate \eqref{eq:estI222} by noting that $r=|\zeta(z_*)|$ and $r^2=4i[g(z_+)-g(z_*)]$.
\end{proof}

\begin{lemma}\label{I-22-3}
With the integral $I^{(3)}_{2,2}$ defined in \eqref{def:I223}, we have, as $t\to+\infty$,
\begin{equation}\label{eq:estI223}
I^{(3)}_{2,2}=-\frac{4\nu\rho}{(1-\rho^2)\pi i}\left(\ln\left|2z_+\zeta'(z_+)\right|
-\ln\left|\frac{\zeta(z_*)(z_*-z_-)}{z_*-z_+}\right|\right)+O\left(t^{-\frac{1}{4}}\right).
\end{equation}
\end{lemma}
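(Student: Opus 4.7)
The plan is to replace $\Phi_{(\mathrm{PC})}\sigma_3\Phi_{(\mathrm{PC})}^{-1}$ in the integrand by its large-$\eta$ limit $\sigma_3$, evaluate the resulting explicit integral, and bound the residual. The computation is streamlined by the observation that $E^{-1}E'$ turns out to be proportional to $\sigma_3$.

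First, I would unpack the definition \eqref{E-expression} as $E(z)=P^{(\infty)}(z)\sigma_3 D(z)$, with $D(z)=(\rho/h_1)^{\sigma_3/2}(\sqrt{t}\zeta(z))^{\nu\sigma_3}\mathrm{e}^{-tg(z_+)\sigma_3}$ diagonal and $P^{(\infty)}(z)=((z-z_-)/(z-z_+))^{\nu\sigma_3}$ also diagonal. Since these diagonal factors commute amongst themselves and with $\sigma_3$, a direct matrix calculation yields
\[
E^{-1}(z)E'(z)=s(z)\,\sigma_3,\qquad s(z):=\nu\left(\frac{1}{z-z_-}-\frac{1}{z-z_+}+\frac{\zeta'(z)}{\zeta(z)}\right),
\]
with the apparent singularity of $s$ at $z_+$ removable because $\zeta(z)/(z-z_+)\to\zeta'(z_+)\neq 0$.

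Next, I would use \eqref{PCAsyatinfty} to expand $\Phi_{(\mathrm{PC})}(\eta)\sigma_3\Phi_{(\mathrm{PC})}^{-1}(\eta)=\sigma_3+O(\eta^{-1})$ as $\eta\to\infty$, with the $O(\eta^{-1})$ coefficient off-diagonal (since $\Phi_{(\mathrm{PC}),1}$ is anti-diagonal, so is $[\Phi_{(\mathrm{PC}),1},\sigma_3]$, and conjugation by a diagonal prefactor preserves this). Hence $\mathrm{Tr}[\sigma_3\Phi_{(\mathrm{PC})}\sigma_3\Phi_{(\mathrm{PC})}^{-1}]=2+O(\eta^{-2})$, and combining with $E^{-1}E'=s\sigma_3$ gives
\[
\mathrm{Tr}\!\left[\Phi_{(\mathrm{PC})}^{-1}E^{-1}E'\Phi_{(\mathrm{PC})}\sigma_3\right]=2s(z)+s(z)\bigl(\mathrm{Tr}[\sigma_3\Phi_{(\mathrm{PC})}\sigma_3\Phi_{(\mathrm{PC})}^{-1}]-2\bigr),
\]
with $\eta=t^{1/2}\zeta(z)$. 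A primitive of $s$ is $\nu\ln[(z-z_-)\zeta(z)/(z-z_+)]$, which extends analytically across $z_+$ with value $\nu\ln[2z_+\zeta'(z_+)]$ there, so
\[
\int_{z_*}^{z_+}2s(z)\,dz=2\nu\!\left[\ln\bigl(2z_+\zeta'(z_+)\bigr)-\ln\!\frac{\zeta(z_*)(z_*-z_-)}{z_*-z_+}\right].
\]
Careful bookkeeping of the branch choice \eqref{zeta} for $\zeta$ along the real segment $(z_*,z_+)$ shows that the imaginary parts of the two logarithms above cancel in the subtraction, so each $\ln(\cdot)$ may be replaced by $\ln|\cdot|$.

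Finally, I would estimate the residual by splitting $[z_*,z_+]$ at $z_\delta:=z_+-t^{-1/4}$. On $[z_\delta,z_+]$, both $s(z)$ and $\Phi_{(\mathrm{PC})}(t^{1/2}\zeta(z))\sigma_3\Phi_{(\mathrm{PC})}^{-1}(t^{1/2}\zeta(z))$ remain bounded (the first by the removability of its singularity at $z_+$, the second by continuity of $\Phi_{(\mathrm{PC})}$ at finite argument), so this contribution is $O(t^{-1/4})$ by the length of the interval. On $[z_*,z_\delta]$, one has $|\eta|=t^{1/2}|\zeta(z)|\gtrsim t^{1/4}$, whence $s(z)\bigl(\mathrm{Tr}[\sigma_3\Phi_{(\mathrm{PC})}\sigma_3\Phi_{(\mathrm{PC})}^{-1}]-2\bigr)=O(\eta^{-2})=O(t^{-1}|z-z_+|^{-2})$, whose integral is $O(t^{-1}\cdot t^{1/4})=O(t^{-3/4})$. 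Assembling these pieces and multiplying by the prefactor $-4\rho/((1-\rho^2)\cdot 2\pi i)$ gives \eqref{eq:estI223}. The most delicate step is the branch cancellation, which requires consistent tracking of $\arg\sqrt{g(z_+)-g(z)}$ along the integration path.
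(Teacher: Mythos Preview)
Your proof is correct and follows essentially the same approach as the paper: separate the explicit main term coming from $\mathrm{Tr}[E^{-1}E'\sigma_3]$ and bound the remainder by splitting $[z_*,z_+]$ at a point a distance $t^{-1/4}$ from $z_+$. Your explicit identification $E^{-1}E'=s(z)\sigma_3$ and the use of the $O(\eta^{-2})$ decay of $\mathrm{Tr}[\sigma_3\Phi_{(\mathrm{PC})}\sigma_3\Phi_{(\mathrm{PC})}^{-1}]-2$ (rather than the coarser $O(\eta^{-1})$ bound on the matrix itself used in the paper) are minor sharpenings, yielding $O(t^{-3/4})$ on the outer piece instead of the paper's $O(t^{-1/4})$, but the structure of the argument is identical.
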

\begin{proof}
By the cyclic property of trace, we have
\begin{align}\label{I-22-3-compute}
I^{(3)}_{2,2}&=-\frac{4\rho}{1-\rho^2}\int_{[z_*,z_+]}\mathrm{Tr}
\left[E^{-1}(z)E'(z)\Phi_{(\mathrm{PC})}(t^{\frac{1}{2}}\zeta(z))\sigma_3
\Phi_{(\mathrm{PC})}^{-1}(t^{\frac{1}{2}}\zeta(z))\right]\frac{\mathrm{d}z}{2\pi i}\nonumber\\
&=-\frac{4\rho}{1-\rho^2}\int_{[z_*,z_+]}\mathrm{Tr}
\left[E^{-1}(z)E'(z)\sigma_3\right]\frac{\mathrm{d}z}{2\pi i}\nonumber\\
&~~~-\frac{4\rho}{1-\rho^2}\int_{[z_*,z_+]}\mathrm{Tr}
\left[E^{-1}(z)E'(z)\sigma_3\left(\sigma_3\Phi_{(\mathrm{PC})}(t^{\frac{1}{2}}\zeta(z))
\sigma_3\Phi_{(\mathrm{PC})}^{-1}(t^{\frac{1}{2}}\zeta(z))-I\right)\right]
\frac{\mathrm{d}z}{2\pi i}.
\end{align}
Using the definition of $E$ given in \eqref{E-expression}, the first integral in \eqref{I-22-3-compute} can be evaluated explicitly as follows:
\begin{align}
\int_{[z_*,z_+]}\mathrm{Tr}
\left[E^{-1}(z)E'(z)\sigma_3\right]\frac{\mathrm{d}z}{2\pi i}
&=2\nu\int_{[z_*,z_+]}\left(\left(\ln|\zeta(z)|\right)'-\frac{1}{z-z_+}
+\frac{1}{z-z_-}\right)\frac{\mathrm{d}z}{2\pi i} \nonumber \\
&=\frac{\nu}{\pi i}\left(\ln\left|2z_+\zeta'(z_+)\right|
-\ln\left|\frac{\zeta(z_*)(z_*-z_-)}{z_*-z_+}\right|\right). \label{eq:estI2232}
\end{align}
To estimate the second integral in \eqref{I-22-3-compute}, we take $z_1\in(z_*,z_+)$ such that $z_+-z_1=O(t^{-\frac{1}{4}})$.
Thus,
\begin{align}
&\int_{[z_*,z_+]}\mathrm{Tr}
\left[E^{-1}(z)E'(z)\sigma_3\left(\sigma_3\Phi_{(\mathrm{PC})}(t^{\frac{1}{2}}\zeta(z))
\sigma_3\Phi_{(\mathrm{PC})}^{-1}(t^{\frac{1}{2}}\zeta(z))-I\right)\right]
\frac{\mathrm{d}z}{2\pi i}\nonumber \\
&=\int_{[z_*,z_1]}\mathrm{Tr}
\left[E^{-1}(z)E'(z)\sigma_3\left(\sigma_3\Phi_{(\mathrm{PC})}(t^{\frac{1}{2}}\zeta(z))
\sigma_3\Phi_{(\mathrm{PC})}^{-1}(t^{\frac{1}{2}}\zeta(z))-I\right)\right]
\frac{\mathrm{d}z}{2\pi i}\nonumber \\
&~~~+\int_{[z_1,z_+]}\mathrm{Tr}
\left[E^{-1}(z)E'(z)\sigma_3\left(\sigma_3\Phi_{(\mathrm{PC})}(t^{\frac{1}{2}}\zeta(z))
\sigma_3\Phi_{(\mathrm{PC})}^{-1}(t^{\frac{1}{2}}\zeta(z))-I\right)\right]
\frac{\mathrm{d}z}{2\pi i}\nonumber \\
&=O\left(t^{-\frac{1}{4}}\right)+O(z_+-z_1)=O\left(t^{-\frac{1}{4}}\right),
\qquad t\to+\infty, \label{eq:estI2233}
\end{align}
where we have made use of the estimate \eqref{E-estimate} and the asymptotics $$\sigma_3\Phi_{(\mathrm{PC})}(\eta)\sigma_3\Phi_{(\mathrm{PC})}^{-1}(\eta)
=I+O\left(\eta^{-1}\right),\qquad \eta\to\infty$$ in the second equality. The estimate \eqref{eq:estI223} then follows directly from \eqref{I-22-3-compute}--\eqref{eq:estI2233}.
\end{proof}
In the proof of Lemma \ref{I-22-3}, it is not essential that $z_+-z_1=O(t^{-\frac{1}{4}})$, but is enough for the  evaluation of the constant $\mathcal{C}$ in \eqref{Fn-asymp}.  The asymptotic expansion of $\ln{F}_n(x;\rho)$ with error term has been derived in \eqref{Fn-asymp}, the remaining task is to evaluate the constant term  therein. Therefore, we do not pursue the optimal choice of $z_1$ in the proof.

We are now ready to give the estimate of $I_2$. 
\begin{lemma}\label{I-2-asympt}
With the integral $I_2$ defined in \eqref{I-2}, we have, as $t\to+\infty$,
\begin{equation}\label{eq:estI2}
I_2=-\frac{2\nu\rho}{(1-\rho^2)\pi i}\ln(8nt)
+\mathcal{C}_2(\rho)+O\left(t^{-\min\left\{\frac{2}{2n+1},\frac{1}{4}\right\}}\right),
\end{equation}
where $\mathcal{C}_2$ is defined in \eqref{eq:c2}.
\end{lemma}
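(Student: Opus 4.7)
The plan is to combine the decomposition $I_2 = I_{2,1} + I_{2,2}$ from \eqref{eq:I2decomp} with the further splitting of $I_{2,2}$ in Lemma \ref{I-22}, and then apply the explicit estimates from Lemmas \ref{I-21-asympt}, \ref{I-22-2} and \ref{I-22-3} one by one. First I would observe a clean cancellation of the dominant linear-in-$t$ terms: the $I^{(1)}_{2,2}$ contribution $\frac{4\rho[g(z_*)-g(z_+)]}{(1-\rho^2)\pi i}t$ from \eqref{def:I221} is exactly cancelled by the first term $\frac{4\rho[g(z_+)-g(z_*)]}{(1-\rho^2)\pi i}t$ of the estimate \eqref{eq:estI222}. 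What remains after this cancellation is a $\ln t$ contribution, a finite collection of $\ln$-type terms involving $z_\pm$, $z_*$, $\zeta(z_*)$, and $\zeta'(z_+)$, the constant $\mathcal{C}_2(\rho)$ from \eqref{eq:c2}, and error terms of orders $O(t^{-1/2})$ and $O(t^{-1/4})$.

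Next I would collect the $\ln$-type terms with common prefactor $\frac{4\nu\rho}{(1-\rho^2)\pi i}$. From Lemma \ref{I-21-asympt} we get $+\ln\bigl|\tfrac{z_*-z_+}{z_*-z_-}\bigr|$; from Lemma \ref{I-22-2} we get $-\ln|\zeta(z_*)|$; and from Lemma \ref{I-22-3} we get $-\ln|2z_+\zeta'(z_+)| + \ln\bigl|\tfrac{\zeta(z_*)(z_*-z_-)}{z_*-z_+}\bigr|$. A direct telescoping of these four logarithms shows that the $z_*$-dependence disappears entirely, leaving only $-\ln|2z_+\zeta'(z_+)|$. This is the key algebraic observation that makes the statement \eqref{eq:estI2} independent of the auxiliary point $z_*$.

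Finally I would evaluate $\ln|2z_+\zeta'(z_+)|$ asymptotically. Using the expansion $z_+ = \tfrac12 + O(|x|^{-1/n})$ from \eqref{z+-expansion} together with $|\zeta'(z_+)| = 2\sqrt{2n}\bigl[1+O(|x|^{-1/n})\bigr]$ from \eqref{zeta'behavior}, I obtain
\begin{equation*}
\ln|2z_+\zeta'(z_+)| = \ln(2\sqrt{2n}) + O\bigl(|x|^{-1/n}\bigr) = \tfrac12\ln(8n) + O\bigl(t^{-2/(2n+1)}\bigr),
\end{equation*}
since $t = |x|^{(2n+1)/(2n)}$. Combined with the $-\frac{2\nu\rho}{(1-\rho^2)\pi i}\ln t$ term from \eqref{eq:estI222}, this produces precisely $-\frac{2\nu\rho}{(1-\rho^2)\pi i}\ln(8nt)$. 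The three error sources, namely $O(t^{-1/2})$ (from each individual lemma), $O(t^{-1/4})$ (from Lemma \ref{I-22-3}), and $O(t^{-2/(2n+1)})$ (from the approximation of $2z_+\zeta'(z_+)$), combine into the claimed $O\bigl(t^{-\min\{2/(2n+1),1/4\}}\bigr)$.

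The arithmetic is essentially bookkeeping — the only conceptually nontrivial point is verifying the telescoping of the four logarithmic contributions, which confirms that $z_*$ is indeed an auxiliary parameter that drops out. No step here is a real obstacle, since the nontrivial analytic work (contour deformation, parametrix matching, and steepest-descent estimates of each piece) has already been absorbed into the preceding lemmas.
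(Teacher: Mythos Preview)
Your proposal is correct and follows essentially the same route as the paper's proof: combine the decomposition \eqref{eq:I2decomp} with Lemmas \ref{I-21-asympt}--\ref{I-22-3}, observe the cancellation of the $t$-linear terms and the telescoping of the $z_*$-dependent logarithms to arrive at $-\frac{2\nu\rho}{(1-\rho^2)\pi i}\ln t + \mathcal{C}_2(\rho) - \frac{4\nu\rho}{(1-\rho^2)\pi i}\ln|2z_+\zeta'(z_+)| + O(t^{-1/4})$, and then insert the asymptotics \eqref{z+-expansion}, \eqref{zeta'behavior}. The paper simply records this as equation \eqref{I-2-asy} without spelling out the telescoping you describe, but the argument is identical.
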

\begin{proof}
In view of \eqref{eq:I2decomp}--\eqref{def:I22}, we obtain from Lemmas \ref{I-21-asympt}--\ref{I-22-3} that, as $t\to+\infty$,
\begin{align}\label{I-2-asy}
I_2&=I_{2,1}+I_{2,2}=I_{2,1}+I^{(1)}_{2,2}+I^{(2)}_{2,2}+I^{(3)}_{2,2}
+O\left(t^{-\frac{1}{2}}\right)\nonumber\\
&=-\frac{2\nu\rho}{(1-\rho^2)\pi i}\ln{t}+\mathcal{C}_2(\rho)
-\frac{4\nu\rho}{(1-\rho^2)\pi i}\ln\left|2z_+\zeta'(z_+)\right|
+O\left(t^{-\frac{1}{4}}\right).
\end{align}
From the estimates of $z_+$ and $\zeta'(z_+)$ given in \eqref{z+-expansion} and \eqref{zeta'behavior}, it is readily seen that
\begin{equation*}
2z_+\zeta'(z_+)=2\sqrt{2n}\left[1+O\left(t^{-\frac{2}{2n+1}}\right)\right],
\quad\mathrm{as}~~t\to+\infty.
\end{equation*}
Substituting the above asymptotics into \eqref{I-2-asy} then gives us \eqref{eq:estI2}.
\end{proof}

Finally, we obtain the following asymptoitcs of $\ln{F}_n$.
\begin{lemma}
As $t\to+\infty$, we have
\begin{align}\label{F-asympt}
\ln{F}_n(x;\rho)=
-\frac{2\ln(1-\rho^2)}{\pi{i}}g(z_+)t+\frac{\ln^2(1-\rho^2)}{4\pi^2}\ln(8nt)
&+\ln\left[G(1+\nu)G(1-\nu)\right]\nonumber\\
&+O\left(t^{-\min\left\{\frac{2}{2n+1},\frac{1}{4}\right\}}\right),
\end{align}
where $G$ is the Barnes $G$-function.
\end{lemma}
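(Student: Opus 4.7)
The plan is to integrate the $\rho$-differential identity for $\ln F_n(x;\rho)$ from $0$ up to $\rho$, using the normalization $F_n(x;0) = \det I = 1$. Combining the decomposition \eqref{eq:dFI} of Lemma \ref{Fred-estimate} with the explicit formula \eqref{I-1} for $I_1$, the constant-modulo-$o(1)$ estimate for $I_{z_+}$ in Lemma \ref{I-z+-asympt}, and the estimate \eqref{eq:estI2} for $I_2$ in Lemma \ref{I-2-asympt}, one obtains, uniformly for $\rho$ in compact subsets of $(0,1)$,
\begin{equation*}
\partial_\rho\ln F_n(x;\rho) = \frac{4\rho\, g(z_+)}{(1-\rho^2)\pi i}\,t \,-\, \frac{2\,\nu(\rho)\rho}{(1-\rho^2)\pi i}\ln(8nt) \,+\, 2\mathcal{C}_1(\rho) + \mathcal{C}_2(\rho) \,+\, O\bigl(t^{-\min\{2/(2n+1),\,1/4\}}\bigr),
\end{equation*}
with $\nu(\rho) = -\frac{1}{2\pi i}\ln(1-\rho^2)$. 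Integrating term-by-term in $\rho$, the first summand produces $-\frac{2\ln(1-\rho^2)}{\pi i}g(z_+)\,t$ via the elementary antiderivative $\int_0^{\rho}\frac{4s\,ds}{1-s^2} = -2\ln(1-\rho^2)$, and the second yields the $\frac{\ln^2(1-\rho^2)}{4\pi^2}\ln(8nt)$ factor in the lemma after the substitution $u = 1-s^2$ in $\int_0^{\rho}\frac{s\ln(1-s^2)}{1-s^2}\,ds = -\tfrac{1}{4}\ln^2(1-\rho^2)$.

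It remains to evaluate the $x$-independent constant $\mathcal{C}(\rho) := \int_0^{\rho}[2\mathcal{C}_1(s)+\mathcal{C}_2(s)]\,ds$ and to identify it with $\ln[G(1+\nu)G(1-\nu)]$. The crucial observation is that both $\mathcal{C}_1(\rho)$ and $\mathcal{C}_2(\rho)$ depend only on $\rho$ (through the parameter $\nu$) and on the parabolic cylinder parametrix $\Phi_{(\mathrm{PC})}$, and are manifestly independent of $n$ and of the coefficients $\tau_1,\dots,\tau_{n-1}$; this is explicitly recorded in the statements of Lemmas \ref{I-z+-asympt} and \ref{I-22-2} and transparent from the defining integrals \eqref{I-z+}--\eqref{eq:z+4} and \eqref{eq:c2}. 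Therefore $\mathcal{C}(\rho)$ is a single universal function of $\rho$ that coincides for every member of the hierarchy, and we may evaluate it by specializing to the base case $n=1$, $\tau_j \equiv 0$, where $F_1(x;\rho)$ is the deformed Tracy-Widom distribution. For that case the constant term in the large negative $x$ expansion was conjectured in \cite{BCI2016} and rigorously established in \cite{BB2018,BIP2019} to be $\ln[G(1+\tfrac{i\beta}{2})G(1-\tfrac{i\beta}{2})]$, which coincides with $\ln[G(1+\nu)G(1-\nu)]$ via $\beta = 2i\nu$. Assembling the three contributions then gives \eqref{F-asympt}.

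The main technical obstacle is uniformity of the error: for the term-by-term integration in $\rho$ to be legitimate, the error estimates in Lemmas \ref{Fred-estimate}, \ref{I-z+-asympt}, and \ref{I-2-asympt} must hold uniformly for $\rho$ in any compact subset of $(0,1)$. This uniformity has to be extracted from the Deift-Zhou analysis of Section \ref{RHanalysis1}; the only potential obstruction is $\rho\to 1^{-}$ (where $\nu\to\infty$ and the parabolic cylinder model degenerates), but this endpoint lies outside the compacts in question. Away from it, the $\rho$-dependence of every jump matrix, parametrix, and the function $E$ is smooth and bounded, so that the small-norm estimates \eqref{R-estimation}--\eqref{dR-estimation} and all subsequent $O(\cdot)$ bounds carry over uniformly, a bookkeeping verification rather than a new ingredient.
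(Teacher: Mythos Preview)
Your proposal is correct and follows essentially the same route as the paper: combine Lemmas \ref{Fred-estimate}, \ref{I-z+-asympt}, and \ref{I-2-asympt} into the $\rho$-differential identity \eqref{Fred-a-1}, integrate from $0$ to $\rho$ using $F_n(x;0)=1$, and then exploit the $n$- and $\tau$-independence of the constant $\widetilde\chi(\rho)=\int_0^\rho[2\mathcal C_1+\mathcal C_2]\,ds$ to identify it via the known $n=1$ result of \cite{BB2018,BIP2019}. Your explicit discussion of uniformity in $\rho$ on compacts of $(0,1)$ is a welcome addition that the paper leaves implicit.
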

\begin{proof}
A combination of Lemmas \ref{Fred-estimate}, \ref{I-z+-asympt} and \ref{I-2-asympt} gives us
\begin{equation}\label{Fred-a-1}
\partial_{\rho}\ln{F}_n(x;\rho)=
\frac{4\rho{g}(z_+)t}{(1-\rho^2)\pi{i}}-\frac{2\nu\rho\ln(8nt)}{(1-\rho^2)\pi i}
+\chi(\rho)+O\left(t^{-\min\left\{\frac{2}{2n+1},\frac{1}{4}\right\}}\right),
\qquad t\to+\infty,
\end{equation}
where
$$
\chi(\rho):=2\mathcal{C}_1(\rho)+\mathcal{C}_2(\rho)
$$
with $\mathcal{C}_1$ and $\mathcal{C}_2$ being the constants in Lemmas \ref{I-z+-asympt} and \ref{I-2-asympt}, respectively.
Clearly, the constant $\chi(\rho)$
is independent of $n$ and of the parameters $\tau_1,\ldots,\tau_{n-1}$.

Integrating both sides of \eqref{Fred-a-1} with respect to $\rho$ and using the fact that $\ln F_n(x;0)=0$, we have
\begin{equation}\label{F-asympt-new}
\ln{F}_n(x;\rho)=
-\frac{2\ln(1-\rho^2)}{\pi{i}}g(z_+)t+\frac{\ln^2(1-\rho^2)}{4\pi^2}\ln(8nt)
+\widetilde{\chi}(\rho)+O\left(t^{-\min\left\{\frac{2}{2n+1},\frac{1}{4}\right\}}\right),
\end{equation}
where $\widetilde{\chi}$ is the constant of integration. Note that $\widetilde{\chi}=\widetilde{\chi}(\rho)$ is also independent of the parameters $n,\tau_1,\ldots,\tau_{n-1}$, and the fact that
(cf. \cite{BB2018,BIP2019})
\begin{equation}\label{F-2}
\ln{F}_1(x;\rho)=
\frac{2\ln(1-\rho^2)}{3\pi}t+\frac{\ln^2(1-\rho^2)}{4\pi^2}\ln(8t)
+\ln[G(1+\nu)G(1-\nu)]+O\left(t^{-\frac{1}{2}}\right),
\end{equation}
it is then readily seen $\widetilde{\chi}=\ln[G(1+\nu)G(1-\nu)]$. This, together with \eqref{F-asympt-new}, gives us \eqref{F-asympt}.
%
 \end{proof}

Substituting the expansion of $g(z_+)$ given by \eqref{eq:gexpansion} into \eqref{F-asympt} and comparing the obtained asymptotics with \eqref{Fn-asymp}, it is readily seen that the undetermined constant $\mathcal {C}$ in \eqref{Fn-asymp} is given by
\begin{equation}\label{eq:C}
\mathcal{C}=\ln\left[G(1+\nu)G(1-\nu)\right]+\frac{\ln^2(1-\rho^2)}{4\pi^2}\ln(8n).
\end{equation}
Asymptotics of $F_n$ in \eqref{F-asymptotic-infty2} then follows directly from \eqref{Fn-asymp}, \eqref{eq:C} and the fact that $\beta=2i\nu$.
This completes the proof of Theorem \ref{thm:largegap}.


\subsection{Proof of Theorem \ref{thm1} }\label{proof1}
The proof is similar to the case of $0<\rho <1$, which is instead based on asymptotic analysis of the RH problem for $\Psi$ with $\rho>1$.  By unwrapping the transformations $\Psi\to{X}\to{Y}\to{T}\to\widetilde{R}$, we obtain from \eqref{q-expression}, \eqref{Fred-expression-to-x}, \eqref{Pinftyatinfty} and \eqref{Rexpan} that, as $x\to-\infty$,
\begin{equation}\label{q-expression1}
q_n\left((-1)^{n+1}x;\rho\right)=2i|x|^{\frac{1}{2n}}
\left((\widetilde{P}^{(\infty)}_1)_{12}+(\widetilde{R}_1)_{12}\right)
\end{equation}
and
\begin{equation}\label{Fred-expression1}
\frac{\mathrm{d}}{\mathrm{d}x}\ln F_n(x;\rho)=2i|x|^{\frac{1}{2n}}
\left((\widetilde{P}^{(\infty)}_1)_{11}+(\widetilde{R}_1)_{11}\right),
\end{equation}
where $\widetilde{P}^{(\infty)}_1$ and $\widetilde{R}_1$ are the residue terms in the large-$z$ expansions of $\widetilde{P}^{(\infty)}$ and $\widetilde{R}$, respectively.
On account of the estimate \eqref{Rtilde-estimation}, it is readily seen that
\begin{equation}\label{R111asymp}
(\widetilde{R}_1)_{11}=O\left(|x|^{-1-\frac{1}{2n}}\right),\qquad (\widetilde{R}_1)_{12}=O\left(|x|^{-1-\frac{1}{2n}}\right), \qquad x\to-\infty.
\end{equation}
%
Substituting $\widetilde{P}^{(\infty)}_1$ in \eqref{eq:tildeP1infty} and the above estimates into \eqref{q-expression1} and \eqref{Fred-expression1} gives
\begin{align}\label{q-expression2}
q_{n}((-1)^{n+1}x;\rho)=\frac{4iz_+}{c^{-1}-c}|x|^{\frac{1}{2n}}
+O\left(|x|^{-1}\right)
\end{align}
and
\begin{equation}\label{Fred-expression2}
\frac{\mathrm{d}}{\mathrm{d}x}\ln F_n(x;\rho)=2iz_+|x|^{\frac{1}{2n}}
\left(2\nu_0-\frac{c^{-1}+c}{c^{-1}-c}\right)
+O\left(|x|^{-1}\right),
\end{equation}
where $\nu_0$ and $c$ are given in \eqref{nu0} and \eqref{c}, respectively.  The error bounds here are uniformly valid for $x$ bounded away from the sequence of points $\{x_m\}_{m\in\mathbb{N}}$ determined by \eqref{poles-equation}. Since
\begin{equation}
c^{-1}-c=2i\sin(\omega(x,\rho)), \qquad  c^{-1}+c=2\cos(\omega(x,\rho)),
\end{equation}
where
\begin{align}\label{omega}
\omega(x,\rho):=2ig(z_+)|x|^{\frac{2n+1}{2n}}&-\frac{2n+1}{2n}i\nu_0\ln|x|\nonumber\\
&-i\nu_0\ln(4iz_+^2\zeta'(z_+)^2)+\arg\Gamma\left(\frac{1}{2}-\nu_0\right)
+\frac{\pi}{2},
\end{align}
we can rewrite \eqref{q-expression2} and \eqref{Fred-expression2} as
\begin{align}\label{q-expression3}
q_{n}((-1)^{n+1}x;\rho)=\frac{2z_+|x|^{\frac{1}{2n}}}{\sin(\omega(x,\rho))}
+O\left(|x|^{-1}\right)
\end{align}
and
\begin{equation}\label{Fred-expression3}
\frac{\mathrm{d}}{\mathrm{d}x}\ln F_n(x;\rho)=2z_+|x|^{\frac{1}{2n}}
[2i\nu_0-\cot(\omega(x,\rho))]
+O\left(|x|^{-1}\right).
\end{equation}
By substituting the expansions \eqref{z+-expansion}, \eqref{eq:gexpansion} and \eqref{zeta'behavior} into \eqref{q-expression3} and \eqref{Fred-expression3}, we finally arrive at the asymptotic approximations given in  \eqref{qAymp-2} and \eqref{F-asymptotic-infty}.

\subsection{Proof of Corollary \ref{cor}}\label{proof:cor}
To estimate locations of the poles of $q_{n}((-1)^{n+1}x;\rho)$, we need the following result about the zeros of a function (see \cite[Theorem 1]{Hethcote}).
\begin{lemma}\label{lem:zeros}
Let $J=[\lambda_0-\delta, \lambda_0+\delta]$. Assume that $f(\lambda)=h(\lambda)+\varepsilon(\lambda)$, where $f(\lambda)$ is continuous, $h(\lambda)$ is differentiable, $h(\lambda_0)=0$, $l=\min |h'(\lambda)| >0$, and
\begin{equation*}
M_\varepsilon:=\max|\varepsilon(\lambda)|<\min\Big\{|h(\lambda_0-\delta)|,
~|h(\lambda_0+\delta)|\Big\}.
\end{equation*}
Then, there exists a zero $\lambda_z$ of $f(\lambda)$  in the interval $J$ such that $|\lambda_z-\lambda_0|\leq M_\varepsilon/l$.
\end{lemma}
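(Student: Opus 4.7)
The plan is to establish existence of a zero of $f$ in $J$ via the intermediate value theorem applied to a sign change at the endpoints of $J$, and then to control the distance from this zero to $\lambda_0$ via the mean value theorem combined with the uniform lower bound $|h'|\geq l$.

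First I would show that $h$ is strictly monotonic on $J$. Since $h$ is differentiable on $J$ with $|h'(\lambda)|\geq l>0$, Darboux's theorem guarantees that $h'$ has the intermediate value property, so $h'$ cannot change sign on $J$. Combined with $h(\lambda_0)=0$, this forces $h(\lambda_0-\delta)$ and $h(\lambda_0+\delta)$ to have opposite signs, with $|h(\lambda_0\pm\delta)|\geq l\delta$ (though only the sign information is needed here).

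Next I would use the hypothesis $M_\varepsilon<\min\{|h(\lambda_0-\delta)|,\,|h(\lambda_0+\delta)|\}$ to transfer the sign change from $h$ to $f$. Indeed, at each endpoint $\lambda=\lambda_0\pm\delta$, the reverse triangle inequality gives
\[
|f(\lambda)-h(\lambda)|=|\varepsilon(\lambda)|\leq M_\varepsilon<|h(\lambda)|,
\]
so $f(\lambda)$ has the same sign as $h(\lambda)$ at both endpoints. Hence $f(\lambda_0-\delta)$ and $f(\lambda_0+\delta)$ have opposite signs. Since $f$ is continuous on $J$, the intermediate value theorem yields a point $\lambda_z\in J$ with $f(\lambda_z)=0$.

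Finally, I would establish the distance bound. From $f(\lambda_z)=0$ we have $h(\lambda_z)=-\varepsilon(\lambda_z)$, so $|h(\lambda_z)|\leq M_\varepsilon$. On the other hand, applying the mean value theorem to $h$ on the interval between $\lambda_0$ and $\lambda_z$ (both in $J$) produces $\xi$ between them with $h(\lambda_z)-h(\lambda_0)=h'(\xi)(\lambda_z-\lambda_0)$. Since $h(\lambda_0)=0$ and $|h'(\xi)|\geq l$, we obtain
\[
l\,|\lambda_z-\lambda_0|\leq|h(\lambda_z)|\leq M_\varepsilon,
\]
which gives $|\lambda_z-\lambda_0|\leq M_\varepsilon/l$, as claimed. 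The argument is entirely elementary; the only subtle point is justifying strict monotonicity of $h$ from mere differentiability plus the lower bound $|h'|\geq l$, which is handled by Darboux's theorem rather than requiring continuity of $h'$.
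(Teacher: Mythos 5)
Your proof is correct. Note, however, that the paper does not prove this lemma at all: it is stated as a citation to Hethcote [Theorem 1] (``Error bounds for asymptotic approximations of zeros of transcendental functions,'' SIAM J. Math. Anal. 1 (1970)), so there is no in-paper proof to compare against. Your blind argument is a clean, self-contained derivation: Darboux's theorem to get strict monotonicity of $h$ and hence a sign change at the endpoints of $J$; the hypothesis $M_\varepsilon < \min\{|h(\lambda_0\pm\delta)|\}$ to transfer that sign change to $f$; the intermediate value theorem (using only continuity of $f$) to produce $\lambda_z$; and the mean value theorem with $|h'|\geq l$ to bound $|\lambda_z-\lambda_0|\leq M_\varepsilon/l$. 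This is essentially the standard proof of Hethcote's theorem, and all the hypotheses are used exactly where they should be — in particular you correctly observe that Darboux's theorem lets you dispense with any continuity assumption on $h'$, which is the one genuinely delicate point in the statement.
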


To make use of the above lemma, we note from \eqref{qAymp-2-r} that, if $\tau_1=\cdots=\tau_{n-1}=0$,
\begin{equation}\label{reciprocal-q}
L(x):=\frac{|x|^{\frac{1}{2n}}}{q_{n}((-1)^{n+1}x;\rho)} =\sin\Theta +O\left (|x|^{-\frac{2n+1}{2n}}\right ),~~\mathrm{as}~~x\to -\infty,
\end{equation}
where
\begin{equation}\label{def:Theta}
\Theta=\Theta(x):=\frac{2n}{2n+1}|x|^{\frac{2n+1}{2n}}
-\frac{2n+1}{2n}\kappa\ln|x|+\varphi,
\end{equation}
and where the constants $\kappa$ and $\varphi$ are given in \eqref{kappa}. The error bound in \eqref{reciprocal-q} is uniformly valid for $\Theta$ bounded away from $m\pi$, $m\in \mathbb{N}$. By \eqref{def:Theta}, it is readily seen that $x \sim -(\frac{2n+1}{2n})^{\frac{2n}{2n+1}} \Theta^{\frac{2n}{2n+1}}$ for large negative $x$. Thus, $\Theta$ can be taken as a large parameter. Applying Lemma \ref{lem:zeros} to the function $L$ in \eqref{reciprocal-q}, we see that for any integer $m$ large enough, there exists a zero $p_m$ of $L(x)$ such that
\begin{equation*}
\Theta(p_m)-m\pi=O(1/m),~~~~m\to\infty.
\end{equation*}
Equivalently, this means there exists a simple pole $x=p_m$ of $q_{n}((-1)^{n+1}x;\rho)$ satisfying the asymptotic approximation \eqref{poles}.

\begin{appendices}

\section{The parabolic cylinder  parametrix}\label{PCP}

The parabolic cylinder parametrix
$\Phi_\mathrm{(PC)}(\eta)=\Phi_\mathrm{(PC)}(\eta;\nu)$ with $\nu$  being a real or complex parameter is a solution of the
following RH problem.

\subsection*{RH problem for $\Phi_{(\mathrm{PC})}$}

\begin{description}
\item{(1)} $\Phi_\mathrm{(PC)}(\eta)$ is analytic for $\eta\in \mathbb{C}\setminus (\cup^5_{k=1}\Gamma_{k})$, where
$$
\Gamma_{k}=\left \{\eta\in\mathbb{C}\mid\arg \eta=\frac{k\pi}{2}\right\}, \quad k=1,\cdots,4,
   \qquad \Gamma_{5}=\left\{\eta\in\mathbb{C}\mid\arg \eta=-\frac{\pi}{4}\right\};
$$
   see Figure \ref{PC} for an illustration.

\begin{figure}[H]
  \centering
  \includegraphics[width=7cm,height=6cm]{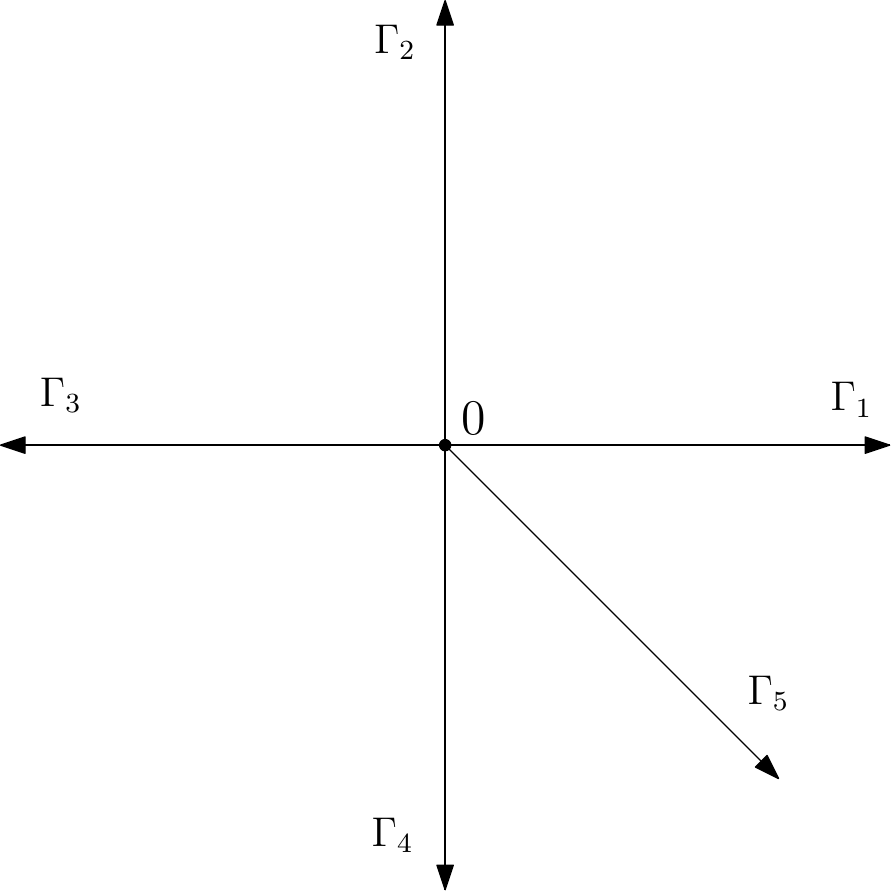}\\ 
  \caption{The jump contours $\Gamma_k$, $k=1,\ldots,5$, of the RH problem for $\Phi_{(\mathrm{PC})}$.}
\label{PC}
\end{figure}

\item{(2)} $\Phi_\mathrm{(PC)}(\eta)$ satisfies the jump condition
\begin{equation}\label{PCjumpmatrices}
\Phi_{(\mathrm{PC}),+}(\eta)=\Phi_{(\mathrm{PC}),-}(\eta)\left\{\begin{aligned}
&H_0,\quad && \eta \in\Gamma_1, \\
&H_1,\quad && \eta \in\Gamma_2, \\
&H_2,\quad && \eta \in\Gamma_3, \\
&H_3,\quad && \eta \in\Gamma_4, \\
&\mathrm{e}^{2\pi i\nu\sigma_3}, && \eta \in\Gamma_5,
\end{aligned}\right.
\end{equation}
where
$$
H_{0}=\begin{pmatrix}1 & 0 \\ h_{0} & 1\end{pmatrix}, \
H_{1}=\begin{pmatrix}1 & h_{1} \\ 0 & 1\end{pmatrix}, \
H_{n+2}=\mathrm{e}^{i \pi\left(\nu+\frac{1}{2}\right) \sigma_{3}} H_{n}
\mathrm{e}^{-i \pi\left(\nu+\frac{1}{2}\right) \sigma_{3}}, \ n=0,1
$$
with
\begin{equation}\label{h0}
h_{0}=-i \frac{\sqrt{2 \pi}}{\Gamma(\nu+1)}, \qquad h_{1}=\frac{\sqrt{2 \pi}}{\Gamma(-\nu)}\mathrm{e}^{i\pi\nu}, \qquad 1+h_{0}h_{1}=\mathrm{e}^{2\pi i\nu}.
\end{equation}
\item{(3)}
As $\eta \to \infty$, we have
\begin{align}\label{PCAsyatinfty}
\Phi_\mathrm{(PC)}(\eta)=\begin{pmatrix}
1+\frac{\nu(\nu+1)}{2\eta^2}+O\left(\frac{1}{\eta^4}\right) & \frac{\nu}{\eta}+O\left(\frac{1}{\eta^3}\right) \\[.2cm] \frac{1}{\eta}+\frac{(\nu+1)(\nu+2)}{2\eta^3}+O\left(\frac{1}{\eta^5}\right) & 1-\frac{\nu(\nu-1)}{2\eta^2}+O\left(\frac{1}{\eta^4}\right)\end{pmatrix}
\mathrm{e}^{\left(\frac{\eta^{2}}{4}-\nu\ln\eta\right) \sigma_{3}}.
\end{align}
\end{description}

According to \cite[Chapter 9]{FIKN}, the solution to above RH problem can be built explicitly in terms of the parabolic cylinder functions (cf. \cite[Chapter 12]{NIST}) $D_{\nu}$ and $D_{-\nu-1}$. More precisely, we have
\begin{equation}\label{PCsolution}
\Phi_\mathrm{(PC)}(\eta)=\begin{pmatrix}\frac{\eta}{2} & 1 \\ 1 & 0\end{pmatrix}
\begin{pmatrix}
D_{-\nu-1}(i\eta) & D_{\nu}(\eta) \\[.2cm]
D_{-\nu-1}'(i\eta) & D_{\nu}'(\eta)\end{pmatrix}
\begin{pmatrix}
\mathrm{e}^{i \frac{\pi}{2}(\nu+1)} & 0 \\
0 & 1
\end{pmatrix}, \qquad \arg{\eta}\in\left(-\frac{\pi}{4},0\right).
\end{equation}
For $\eta$ in other sectors of the complex plane, $\Phi_\mathrm{(PC)}$ is determined by \eqref{PCsolution} and the jump relation \eqref{PCjumpmatrices}.

\end{appendices}

\section*{Acknowledgements}
The authors are grateful to the reviewers for their constructive comments and suggestions. The work of Shuai-Xia Xu was supported in part by the National Natural Science Foundation of China under grant numbers 11571376 and 11971492, and by Guangdong Basic and Applied Basic Research Foundation (Grant No. 2022B1515020063). Lun Zhang was supported in part by the National Natural Science Foundation of China under grant numbers 12271105 and 11822104, and by ``Shuguang Program'' supported by Shanghai Education Development Foundation and Shanghai Municipal Education Commission. Yu-Qiu Zhao was supported in part by the National Natural Science Foundation of China under grant numbers 11571375 and 11971489.

\end{document}